%% file: main.tex
\documentclass[11pt]{article}
\usepackage[
  letterpaper,
  top=1in,
  bottom=1in,
  left=1in,
  right=1in
]{geometry}
\usepackage{mathpazo}
\usepackage{amsmath,amssymb,amsthm,mathtools}
\usepackage{aliascnt}
\usepackage{microtype}
\usepackage{enumitem}
\usepackage[dvipsnames]{xcolor}
\usepackage{booktabs}
\usepackage{graphicx}
\usepackage{adjustbox}
\usepackage{float}
\usepackage{tikz}
\usetikzlibrary{arrows.meta,backgrounds,calc,fit}
\usepackage{fancyhdr}
\usepackage{appendix}
\usepackage[pagebackref]{hyperref}
\usepackage[nameinlink,capitalize]{cleveref}

\hypersetup{
  colorlinks=true,
  urlcolor=blue,
  linkcolor=RoyalBlue,
  citecolor=OliveGreen,
  linktocpage=true
}

\allowdisplaybreaks

\newtheorem{theorem}{Theorem}[section]

\newaliascnt{proposition}{theorem}
\newtheorem{proposition}[proposition]{Proposition}
\aliascntresetthe{proposition}

\newaliascnt{lemma}{theorem}
\newtheorem{lemma}[lemma]{Lemma}
\aliascntresetthe{lemma}

\newaliascnt{corollary}{theorem}

\aliascntresetthe{corollary}

\newtheoremstyle{conjecturestyle}
  {\topsep}
  {0pt}
  {\itshape}
  {}
  {\bfseries}
  {.}
  {0.5em}
  {}
\theoremstyle{conjecturestyle}
\newtheorem*{conjecture}{Conjecture}

\theoremstyle{definition}
\newaliascnt{definition}{theorem}
\newtheorem{definition}[definition]{Definition}
\aliascntresetthe{definition}

\theoremstyle{remark}
\newaliascnt{remark}{theorem}
\newtheorem{remark}[remark]{Remark}
\aliascntresetthe{remark}

\theoremstyle{plain}
\newaliascnt{claim}{theorem}

\aliascntresetthe{claim}
\numberwithin{equation}{section}

\newcommand{\Aut}{\operatorname{Aut}}

\title{Rigorous Low-Degree Implications for Planted Subgraph Detection: Noise and Treewidth}
\author{Xuan Chen\thanks{Peking University. \texttt{chenxuan@stu.pku.edu.cn}.}
\and
Shuangping Li\thanks{Yale University. \texttt{shuangping.li@yale.edu}.}}
\date{\today}

\begin{document}
\hypersetup{pageanchor=false}
\maketitle

\begin{abstract}
The low-degree heuristic has become a widely used framework for
predicting computational thresholds in average-case planted-versus-null
problems \cite{Hopkins2018,KuniskyWeinBandeira2022}. Its appeal comes from
the fact that the low-degree likelihood ratio is often simple to calculate,
while its predictions frequently match the performance of the best-known
polynomial-time algorithms. However, a recent sequence of counterexamples
shows that low-degree indistinguishability does not, in general, rule out
efficient noise-tolerant distinguishers
\cite{BuhaiHsiehJainKothari2025,Mao2026}. Motivated by these developments,
recent work initiated the study of rigorous consequences of the low-degree
heuristic \cite{HsiehEtAl2026}. In this work, we continue this program for
planted-graph problems.

Let \(Q_n=G(n,c/n)\), and let \(P_n\) be obtained by planting a
uniformly random copy of a deterministic graph \(\Gamma_n\) into an
independent sample from \(Q_n\). In the supercritical regime \(c>1\),
we show that if \(P_n\) is degree-\(D_n\) indistinguishable from \(Q_n\)
and
\[
    \operatorname{tw}(\Gamma_n)=o(D_n/\log n),
\]
then a noisy version of \(P_n\) is asymptotically indistinguishable from
\(Q_n\). Here \(\operatorname{tw}(\Gamma_n)\) denotes the treewidth of
\(\Gamma_n\), a measure of how efficiently the graph can be decomposed
into tree-like pieces. In the critical and subcritical regimes \(0<c\leq 1\),
the same conclusion holds whenever \(D_n=\omega(\log n)\), without any
treewidth assumption.

Our proof has two main ingredients. First, we uncover a correspondence
between the subgraph-count and automorphism factors in the Fourier expansion
and counts of isomorphism triples. Second, we cut the decomposition tree into
subtrees, breaking each large Fourier support into low-degree pieces that
meet at only a few interface vertices, and use noise to absorb the cost of
reassembling them. At and below criticality, the low-degree assumption rules
out short cycles, while noise destroys the remaining long cycles.
\end{abstract}

\thispagestyle{empty}

\newpage

\tableofcontents

\thispagestyle{empty}

\newpage
\hypersetup{pageanchor=true}
\setcounter{page}{1}

\section{Introduction}
\label{section:intro}
Many average-case inference problems exhibit a gap between what is
statistically possible and what is achieved by known efficient algorithms.
A central goal of average-case complexity is to develop a principled theory
that explains and predicts such statistical--computational gaps. One
approach, in analogy with worst-case complexity, is to build reductions from
canonical average-case problems. Although this program has made substantial
progress, such reductions are delicate because they must preserve the target
distribution \cite{BerthetRigollet2013,BrennanBresler2020}. A complementary
approach is to prove lower bounds against restricted classes of algorithms.
This viewpoint encompasses, among others, spectral and low-degree methods,
statistical-query algorithms, Markov chains, local and stable algorithms,
and convex hierarchies such as Sum-of-Squares
\cite{Jerrum1992,HopkinsEtAl2017,KuniskyWeinBandeira2022}. Different classes
are studied through different mechanisms: pseudocalibration connects
Sum-of-Squares lower bounds with low-degree moment matching, whereas
overlap-gap phenomena can obstruct broad families of local or input-stable
algorithms in random optimization problems
\cite{BarakEtAl2019,Hopkins2018,Gamarnik2021}. For statistical-query
algorithms, this connection can be made rigorous: under mild conditions,
statistical-query algorithms and low-degree tests are essentially equivalent
\cite{BrennanEtAl2021}. Among these frameworks, the low-degree method is especially appealing because
the relevant calculations are often tractable and the resulting predictions
agree strikingly well with the performance of many of the best-known
algorithms.

The low-degree framework compares a planted distribution $P$ with a null
distribution $Q$ through their likelihood ratio $L=dP/dQ$. Let $L^{\le D}$, the degree-$D$ low-degree likelihood ratio
(LDLR), denote the orthogonal projection of $L$ onto the polynomials of
degree at most $D$ in the observed coordinates. Then
\begin{equation}
    \left\lVert L^{\le D}\right\rVert_{L^2(Q)}^2-1
    =\sup_{\substack{f:\,\deg(f)\le D\\ \operatorname{Var}_Q(f)>0}}
    \frac{\bigl(\mathbb E_P[f]-\mathbb E_Q[f]\bigr)^2}
    {\operatorname{Var}_Q(f)}.
    \label{eq:intro-low-degree-advantage}
\end{equation}
The square root of the left-hand side, often called the \emph{degree-$D$ advantage}, measures how well degree-$D$ polynomial statistics distinguish $P$ from $Q$ in expectation after each statistic is normalized by its standard deviation under $Q$. 
Thus,
$\|L^{\le D}\|_{L^2(Q)}^2-1=o(1)$ means that no such statistic achieves a
nonvanishing advantage.
Motivated by the low-degree framework's connections to spectral methods, the
Sum-of-Squares hierarchy, and pseudocalibration, as well as by its empirical
predictive success, researchers have widely used vanishing low-degree
advantage as evidence for computational hardness
\cite{Hopkins2018,HopkinsEtAl2017,KuniskyWeinBandeira2022}.

The empirical success of the low-degree method is striking because
the LDLR captures only the low-degree projection of the likelihood ratio,
while the full likelihood ratio may carry a vast amount of high-degree
Fourier information that an efficient algorithm could in principle exploit.
Noise provides the key perspective on this tension. In the Bernoulli setting, for \(\varepsilon\in(0,1)\), let \(T_\varepsilon P\)
denote the law obtained by independently resampling each coordinate from its
null marginal with probability \(\varepsilon\). Under a product Bernoulli
null, this operator acts diagonally on the Fourier expansion, multiplying
every degree-\(d\) coefficient by \((1-\varepsilon)^d\)
\cite{MR3443800}. This
exponential attenuation suppresses the high-degree information left
uncontrolled by the LDLR while preserving much of the low-degree structure
that the LDLR measures.

This intuition underlies the low-degree conjecture, formalized in Hopkins's
thesis \cite[Conjecture~2.2.4]{Hopkins2018}. Following the informal
formulation in \cite[Conjecture~1.1]{HsiehEtAl2026}, we state it as follows.

\begin{conjecture}[The Low-Degree Hypothesis (informal)]

Fix \(d\in\mathbb N\). Let \(Q\) be the uniform distribution on
\(\{0,1\}^{\binom{n}{d}}\) viewed as symmetric
\(d\)-tensors. Let $P$ be a distribution on the same domain
as $Q$ that is \(S_n\)-symmetric, meaning invariant under the
relabeling action of \(S_n\), and suppose that
\[
    \chi^{2}_{D}(P\,\|\,Q)
    =\|L^{\le D}\|_{L^2(Q)}^2-1=o(1)
    \qquad\text{for}\qquad D=(\log n)^{1+\Omega(1)}.
\]
Then no polynomial-time computable test distinguishes \(T_\varepsilon P\) from \(Q\) with probability
\(1-o(1)\).
\end{conjecture}

\noindent A stronger degree-versus-runtime formulation predicts that
vanishing degree-\(D\) advantage implies hardness against algorithms running
in time \(n^{\widetilde O(D)}\) \cite{BuhaiHsiehJainKothari2025}.

The low-degree heuristic has a strong record of success across a broad range of problems, including planted clique and dense-subgraph/submatrix models,
sparse PCA and spiked matrices, tensor decomposition, community detection
and graphon estimation, correlated-graph detection, group testing, sparse
linear regression, random \(k\)-SAT, and random-graph optimization
\cite{HopkinsEtAl2017,KuniskyWeinBandeira2022,HopkinsSteurer2017,
DingKuniskyWeinBandeira2023,SchrammWein2022,
BandeiraBanksKuniskyMooreWein2021,
BandeiraElAlaouiHopkinsSchrammWeinZadik2022,CojaOghlanEtAl2022,
BreslerHuang2022,Wein2022,Wein2023,DingDuLi2023,LuoGao2024,
YuZadikZhang2025,SohnWein2025,Wein2025Survey,bok2026detection}.
Despite these successes, several developments have exposed the limits of the
low-degree conjecture. Holmgren and Wein \cite{HolmgrenWein2021} construct counterexamples based on
the choice of the noise operator and on the absence of symmetry, while Jia and Vijayaraghavan \cite{JiaVijayaraghavan2026} exploit anti-concentration in robust subspace recovery outside the product-null setting. 
Together, these examples
show that the structural assumptions in the conjecture are essential.
Going further, Buhai et al.\ \cite{BuhaiHsiehJainKothari2025} refute the quasipolynomial-time formulation
using list decoding of noisy polynomial interpolation, whereas recent work by Mao \cite{Mao2026} refutes
the polynomial-time graph formulation via a finite-field rank test on a
Reed--Muller lift.

These counterexamples make the complementary positive question equally
important: what rigorous consequences follow from vanishing low-degree advantage? Hsieh et al.\ \cite{HsiehEtAl2026} initiated the systematic study of this question and developed tools for translating such
LDLR bounds into rigorous lower bounds against concrete algorithms. They show that, for fully coordinate-permutation-invariant Boolean
vectors under a fixed-bias product Bernoulli null, vanishing LDLR at logarithmic degree implies that a noisy planted distribution converges to the null in total variation. Consequently, after noise, no test (efficient or not) can distinguish the two distributions
with nonvanishing advantage.

Beyond this Boolean-vector result, Hsieh et al.\ \cite{HsiehEtAl2026}
establish two complementary Gaussian results. For Gaussian vectors, under
the quantitative low-degree assumptions of their theorem, after noise the
joint distributions of the evaluations of all symmetric polynomials of
degree up to $O(\log n/\log\log n)$ are statistically indistinguishable
under the planted and null models; the required rate of low-degree
indistinguishability becomes stronger as the target degree grows. For
symmetric Gaussian matrices, vanishing low-degree advantage through degree
at least $\log n\log\log n$ yields the analogous conclusion, after noise,
for each fixed signed count of a connected subgraph. Together, these three
results provide an important starting point for understanding rigorous
consequences of low degree.

The Gaussian-matrix result of Hsieh et al.\ \cite{HsiehEtAl2026} already points toward
graph-structured observations and naturally raises the corresponding question for random graphs. Motivated by this program, we ask:
\begin{quote}
\emph{For which natural planted models does vanishing low-degree advantage force the noisy planted distribution to be asymptotically indistinguishable from the null?}
\end{quote}

In this paper, we carry this program to planted-subgraph models in sparse
random graphs. Starting from an Erd\H{o}s--R\'enyi graph with constant average
degree, we superimpose a uniformly relabeled deterministic graph and ask
whether the resulting observation remains indistinguishable from the null
after independent edge resampling. We show that vanishing low-degree
advantage forces indistinguishability of the entire noised graph for broad
classes of planted graphs: above the Erd\H{o}s--R\'enyi critical point this
holds under a treewidth condition, while at and below criticality no
treewidth assumption is needed.

\subsection{The Planted-Subgraph Model and Main Results}

Fix $c>0$, and let the null distribution be the Erd\H{o}s--R\'enyi graph
\[
    Q_n=G(n,c/n).
\]
Given a deterministic simple graph $\Gamma_n$ with at most $n$ vertices,
we obtain the planted distribution $P_n$ by choosing a uniformly random
injective map $\phi_n:V(\Gamma_n)\to[n]$ and superimposing the edges of
$\phi_n(\Gamma_n)$ on an independent graph sampled from $Q_n$. We write
\[
    L_n=\frac{dP_n}{dQ_n}.
\]
For $\varepsilon\in(0,1)$, the operator $T_\varepsilon$ independently
resamples every edge indicator from its null marginal with probability
$\varepsilon$. Because this operation preserves $Q_n$,
$T_\varepsilon P_n$ can equivalently be generated by independently deleting
each additional planted edge with probability $\varepsilon$ before
superposition. Let
\[
    L_{n,\varepsilon}:=
    \frac{d(T_\varepsilon P_n)}{dQ_n}
\]
denote the noised likelihood ratio. We study the testing problem
\[
    H_0:G_n\sim Q_n
    \qquad\text{versus}\qquad
    H_1:G_n\sim T_\varepsilon P_n.
\]

The high-degree terms of $L_n$ encode large planted patterns. To compare them
with the low-degree truncation, we need to cut these patterns into small pieces
without creating too many shared boundary vertices. Treewidth supplies exactly
this control. We write $\operatorname{tw}(G)$ for the treewidth of a graph
$G$. Our first result shows that if $\Gamma_n$ has sufficiently small treewidth and the low-degree advantage vanishes, then $T_\varepsilon P_n$ is asymptotically close to $Q_n$ in total variation.

\begin{theorem}[Supercritical Regime]
    Fix $c>0$, and let $\{\Gamma_n\}_{n\ge1}$ be a sequence of simple
    graphs with at most $n$ vertices. Suppose there exists a sequence of positive integers $\{D_n\}_{n\ge1}$ such that
    \begin{equation}
        \operatorname{tw}(\Gamma_n)=o(D_n/\log n)
        \label{eq:tree-width-condition}
    \end{equation}
    and
    \begin{equation}
        \left\lVert L_n^{\le D_n}\right\rVert^2_{L^2(Q_n)}=1+o(1).
        \label{eq:low-degree-condition}
    \end{equation}
    Then, for every fixed $\varepsilon\in(0,1)$,
    \begin{equation}
        \left\lVert L_{n,\varepsilon}\right\rVert_{L^2(Q_n)}^2=1+o(1).
        \label{eq:noised-L2-consequence}
    \end{equation}
    Consequently,
    \begin{equation}
        \left\lVert T_{\varepsilon}P_n-Q_n\right\rVert_{\mathrm{TV}}=o(1).
        \label{eq:TV-consequence}
    \end{equation}
    \label{thm:main}
\end{theorem}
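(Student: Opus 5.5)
The plan is to work in the Fourier basis of $L^2(Q_n)$. Let $p=c/n$ and $\chi_S=\prod_{e\in S}(x_e-p)/\sqrt{p(1-p)}$ for edge sets $S$. The planted model superimposes $\phi_n(\Gamma_n)$, so each planted edge is present with probability $1$. Hence
\[
\widehat{L}_n(S)=\mathbb E_P[\chi_S]=\Pr\bigl[S\subseteq \phi_n(E(\Gamma_n))\bigr]\,\Bigl(\tfrac{1-p}{p}\Bigr)^{|S|/2}.
\]
The noise multiplies this by $(1-\varepsilon)^{|S|}$. Grouping by isomorphism class $H$ of $S$ (no isolated vertices) gives
\[
\|L_{n,\varepsilon}\|^2-1=\sum_{H\neq\emptyset}(1-\varepsilon)^{2|E(H)|}\,\#\{S\cong H \text{ in } K_n\}\,\Bigl(\frac{\mathrm{sub}(H,\Gamma_n)}{\#\{S\cong H\}}\Bigr)^2 \Bigl(\frac{1-p}{p}\Bigr)^{|E(H)|}.
\]
Here $\mathrm{sub}(H,\Gamma_n)$ counts copies of $H$ in $\Gamma_n$. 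Using $\#\{S\cong H\}=(n)_{v(H)}/|\Aut H|$, the $H$-term becomes
\[
a_H:=\frac{\mathrm{emb}(H,\Gamma_n)^2}{|\Aut H|\,(n)_{v(H)}}\Bigl(\tfrac{n}{c}\Bigr)^{e(H)}(1+o(1))^{e(H)},
\]
where $\mathrm{emb}$ counts injective homomorphisms. The hypothesis \eqref{eq:low-degree-condition} says $\sum_{e(H)\le D_n}a_H=o(1)$. The goal is $\sum_H(1-\varepsilon)^{2e(H)}a_H=o(1)$. Small $H$ are already handled, so everything reduces to bounding the high-degree part $e(H)>D_n$. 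The last claim \eqref{eq:TV-consequence} then follows from $\|T_\varepsilon P_n-Q_n\|_{TV}\le \tfrac12\sqrt{\|L_{n,\varepsilon}\|^2-1}$.

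\textbf{Combinatorial reinterpretation.} I would read $\mathrm{emb}(H,\Gamma_n)^2/|\Aut H|$ as counting triples $(\sigma_1,\sigma_2,\text{iso})$: two embeddings of the pattern into $\Gamma_n$, modulo relabeling. Equivalently, these are pairs of copies $(A,B)$ of a subgraph of $\Gamma_n$ together with an isomorphism $A\to B$. Summing over $H$ then turns the series into a sum over "isomorphism triples" inside $\Gamma_n$, weighted by $(n/c)^{e}/(n)_v$. This makes it possible to cut $H$: restricting the two embeddings to a subgraph $H'\subseteq H$ gives a triple for $H'$.

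\textbf{Cutting via treewidth.} Take a tree decomposition of $\Gamma_n$ of width $k=\operatorname{tw}(\Gamma_n)=o(D_n/\log n)$. Pull it back through the first embedding to obtain a decomposition of $H$ of width $\le k$. By the standard separator lemma on trees, for any $H$ with $e(H)>D_n$ we can partition $E(H)$ into pieces $H_1,\dots,H_m$. Each piece has $D_n/4<e(H_j)\le D_n/2$, with the possible exception of one smaller piece. Each piece meets the rest only through $O(k)$ interface vertices per cut, and $m=O(e(H)/D_n)$. I would then bound
\[
a_H\le C^{\,m k\log n}\,\prod_j \tilde a_{H_j}.
\]
Here $\tilde a$ is $a$ of the piece up to a combinatorial factor. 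The loss comes from (i) vertices shared across pieces, each costing a factor $\lesssim n$ from the $(n)_v$ normalization, giving $n^{O(mk)}$ overall, and (ii) choosing how the interfaces are glued, also giving $n^{O(mk)}$. The reverse direction, recovering a term for $H$ from its pieces, is an injective map on triples, so summing over all $H$ with $e(H)=E$ gives
\[
\sum_{e(H)=E}a_H\le \exp\bigl(O(mk\log n)\bigr)\Bigl(\sum_{e(H')\le D_n}a_{H'}\Bigr)^{m}.
\]

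\textbf{Absorbing with noise.} Since $mk\log n=O(E\,k\log n/D_n)=o(E)$, the gluing cost is $e^{o(E)}$. The noise contributes $(1-\varepsilon)^{2E}$, which dominates it, and the low-degree sum supplies $o(1)^m$. Summing over $E>D_n$ gives $o(1)$.

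The main obstacle is the gluing inequality. I must verify that interface vertices shared between pieces cost at most a factor $n^{O(1)}$ each, rather than destroying the $(n)_v$ normalization. I must also check that the $|\Aut H|$ factors are compatible, i.e.\ that the triple count is submultiplicative under cutting. I would first establish the triple identity cleanly and do the cutting at the level of triples, where restriction is canonical, rather than at the level of isomorphism classes. A secondary worry is whether $(n/c)^{e}$ versus $(n)_v$ balances for pieces that are dense relative to their vertex count. I expect this to be absorbed because such pieces already appear in the low-degree sum.
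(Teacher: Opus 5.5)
Your outline matches the paper's proof in all essentials. It uses Parseval over isomorphism classes and the identity $\mathrm{emb}(H,\Gamma_n)^2/|\Aut H|=N_{\Gamma_n}(H)^2|\Aut H|=|\mathrm{Tri}(H)|$. It relies on injectivity of restricting a triple to an edge partition, which also makes your gluing cost (ii) unnecessary and lets the partition depend on the concrete first copy. It cuts along a tree decomposition into pieces much larger than $k\log n$ but of size at most $D_n$, sharing at most $k+1$ vertices per cut. The resulting $n^{O(mk)}$ is absorbed by $(1-\varepsilon)^{2\mathsf e(H)}$, and the injection is aggregated into powers of the low-degree mass.

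The step that does not go through as written is the normalization in your gluing inequality. The factors $(n/c)^{\mathsf e}$ and $(1-c/n)^{\mathsf e}$ split exactly over an edge partition, so the triple injection gives only
\[
a_H\le n^{\mathrm{ov}}\,\frac{n^{\mathsf v(H)}}{(n)_{\mathsf v(H)}}\prod_j a_{H_j}.
\]
The factor $1/(n)_{\mathsf v}$ is not multiplicative even over vertex-disjoint pieces. Indeed $n^{v}/(n)_{v}\ge e^{v(v-1)/(2n)}$, which is $e^{\Theta(v)}$ once $v$ is of order $n$. For forest-like $H$ one has $\mathsf e(H)\approx\mathsf v(H)$, so this loss is not absorbed by $(1-\varepsilon)^{2\mathsf e(H)}$ when $\varepsilon$ is small. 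Your ``secondary worry'' about dense pieces is not the real issue, since $n^{\mathsf e-\mathsf v}c^{-\mathsf e}$ is exactly multiplicative up to $n^{\mathrm{ov}}$.

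The missing idea is a second use of the hypothesis, through its degree-one term. The $K_2$ summand of the low-degree sum equals $(1+o(1))\,2\mathsf e(\Gamma_n)^2/(cn)$, so $\mathsf e(\Gamma_n)=o(\sqrt n)$. Hence every $H$ in the expansion has $\mathsf v(H)\le 2\mathsf e(\Gamma_n)=o(\sqrt n)$, and $n^{\mathsf v}/(n)_{\mathsf v}=1+o(1)$ uniformly. This is why the paper replaces $\alpha_n$ by $\beta_n$ (\cref{lem:leading-term,lem:edge}) before cutting; after that, $\beta_n(F)\le n^{\mathrm{ov}}\prod_i\beta_n(T_i)$ holds exactly.

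Separately, your ``standard separator lemma'' needs the refinement the paper supplies. In a raw tree decomposition a node may own many edges and have large degree, so greedy clustering gives no size window. The paper first makes the tree subcubic with each node owning at most one edge (\cref{lem:decomposition-refinement}). It then clusters greedily (\cref{lem:partition-tree}), which yields pieces with $\ell\le\mathsf e(T_i)\le 3\ell$ and overlap at most $(k+1)(m-1)$ via the quotient tree. Finally, $m$ varies with $H$, so your displayed bound for fixed $E$ should be a sum over $m$ rather than a single power. With these repairs your argument closes.
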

The conclusion is information-theoretic: no test, efficient or otherwise,
can distinguish $T_\varepsilon P_n$ from $Q_n$ with nonvanishing advantage.
In general, the noise cannot be omitted, since low-degree indistinguishability alone need not imply total-variation closeness.

\vspace{-0.1in}
\paragraph{Treewidth and Consequences of \cref{thm:main}.}
A tree decomposition of a graph $F$ is a tree whose nodes carry vertex sets,
called bags, such that every vertex appears in a bag, every edge of $F$ is
contained in a bag, and the bags containing any fixed vertex form a connected
subtree. Its width is the largest bag size minus one, and its \emph{treewidth}
$\operatorname{tw}(F)$ is the minimum width over all tree decompositions
\cite{RobertsonSeymour1986,Bodlaender1993}. Cutting an edge of the
decomposition tree separates the graph into pieces that interact only through
the intersection of two adjacent bags. Thus treewidth measures recursive
interface complexity.

A useful special case is the following: if $D_n=\omega(\log n)$, then
\cref{thm:main} applies to every bounded-treewidth planted family satisfying
the low-degree premise. In particular, this includes all forests---paths, stars, caterpillars, trees,
matchings, and their disjoint unions---as well as high-girth unicyclic,
cactus, outerplanar, and series--parallel graphs
\cite{Bodlaender1993,Bodlaender1998}. Subdividing
edges preserves treewidth, so stretched cycles, theta graphs, ladders, and
fixed-width grid strips give natural non-tree examples in which short cycles have been removed.

The condition also permits $\operatorname{tw}(\Gamma_n)$ to grow with $n$.
For example, suppose that $\Gamma_n$ becomes a forest after deleting $s_n$
vertices or $r_n$ edges. Then
$\operatorname{tw}(\Gamma_n)\le s_n+1$ or
$\operatorname{tw}(\Gamma_n)\le r_n+1$, respectively. Hence,
subject to the low-degree premise, \cref{thm:main} applies whenever
$s_n=o(D_n/\log n)$ or $r_n=o(D_n/\log n)$. In other words, the
theorem covers planted graphs that are tree-like apart from a slowly growing
number of exceptional vertices or extra edges.

More strikingly, at and below the Erd\H{o}s--R\'enyi critical point, the treewidth condition can be removed entirely.

\begin{theorem}[Critical and Subcritical Regimes]
    Fix $c\in(0,1]$, and let $\{\Gamma_n\}_{n\ge1}$ be any sequence of
    simple graphs with at most $n$ vertices. Suppose that
    $D_n=\omega(\log n)$ and \eqref{eq:low-degree-condition} holds.
    Then, for every fixed $\varepsilon\in(0,1)$, the conclusion
    \eqref{eq:TV-consequence} holds.
    \label{thm:subcritical}
\end{theorem}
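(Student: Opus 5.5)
The plan is to use half of the noise to break every cycle of $\Gamma_n$, and then apply \cref{thm:main} with treewidth at most $1$ to what remains. Write $\varepsilon=1-(1-\varepsilon_1)(1-\varepsilon_2)$ with $\varepsilon_1,\varepsilon_2\in(0,1)$ fixed, so that $T_\varepsilon=T_{\varepsilon_2}T_{\varepsilon_1}$. Since noise acts on $P_n$ by deleting planted edges independently, $T_{\varepsilon_1}P_n$ is a mixture over random subgraphs $\Gamma_n'\subseteq\Gamma_n$, each edge kept independently with probability $1-\varepsilon_1$. Each mixture component is the planted model with $\Gamma_n'$ in place of $\Gamma_n$. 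By convexity,
\[
\|T_\varepsilon P_n-Q_n\|_{\mathrm{TV}}\le \Pr[\Gamma_n'\notin\mathcal G_n]+\sup_{\Gamma'\in\mathcal G_n}\bigl\|T_{\varepsilon_2}P_n^{\Gamma'}-Q_n\bigr\|_{\mathrm{TV}},
\]
where $\mathcal G_n$ is the family of spanning forests of $\Gamma_n$. It therefore suffices to prove two things: first, $\Gamma_n'$ is a forest with high probability; second, a uniform version of \cref{thm:main} holds for planted forests $\Gamma'\subseteq\Gamma_n$.

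For the second point, expand in the $p$-biased Fourier basis with $p=c/n$. The coefficient of an edge set $H$ is
\[
\widehat{L}(H)=\Pr[H\subseteq\phi_n(\Gamma)]\,\bigl((1-p)/p\bigr)^{|H|/2},
\]
which is monotone in $\Gamma$. Hence every $\Gamma'\subseteq\Gamma_n$ inherits the hypothesis \eqref{eq:low-degree-condition} with the same $D_n$, and with an $o(1)$ term that is uniform over $\Gamma'$. Since $\operatorname{tw}(\Gamma')\le 1=o(D_n/\log n)$, I would invoke \cref{thm:main} in its quantitative form. This requires checking that its $o(1)$ depends only on the low-degree error, $D_n$, and the treewidth bound, and not on the particular graph. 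Then $\|L^{\Gamma'}_{n,\varepsilon_2}\|^2_{L^2(Q_n)}=1+o(1)$ uniformly over $\Gamma'$, which gives the required TV bound.

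For the first point, extract combinatorial information from the low-degree bound by evaluating individual Fourier terms. For a connected $H$ with $v$ vertices and $e\le D_n$ edges, the term equals, up to $1+o(1)$ factors,
\[
\frac{|\Aut(H)|\,N(H,\Gamma_n)^2}{n^{v}}\Bigl(\frac{n}{c}\Bigr)^{e},
\]
where $N(H,\Gamma_n)$ is the number of copies of $H$ in $\Gamma_n$. Two cases follow.
\begin{itemize}[leftmargin=*]
\item If $H=C_k$ with $k\le D_n$, the term is about $2k\,N(C_k,\Gamma_n)^2c^{-k}\ge 2k\,N(C_k,\Gamma_n)^2$ because $c\le 1$. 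So $\Gamma_n$ has no cycle of length at most $D_n$.
\item If $H=P_k$ is the path with $k\le D_n$ edges, the term is about $N(P_k,\Gamma_n)^2/(n c^{k})$. So $N(P_k,\Gamma_n)=o(\sqrt n\,c^{k/2})\le o(\sqrt n)$.
\end{itemize}
Consequently, every cycle of $\Gamma_n$ has length $\ell>D_n$. Such a cycle is determined by at most $\lceil \ell/D_n\rceil$ consecutive paths of length $D_n$. This gives at most $(\sqrt n)^{\lceil\ell/D_n\rceil}$ cycles of length $\ell$, and each survives in $\Gamma_n'$ with probability $(1-\varepsilon_1)^\ell$. (When $c<1$ there are in fact no paths of length $\ge C\log n$, so $\Gamma_n$ is already a forest.) By a union bound, the expected number of surviving cycles is at most
\[
\sum_{\ell>D_n}\sqrt n\exp\!\Bigl(\ell\Bigl(\tfrac{\log n}{2D_n}-\varepsilon_1\Bigr)\Bigr),
\]
where the factor $\sqrt n$ absorbs the ceiling. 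Because $D_n=\omega(\log n)$, this is $o(1)$.

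I expect the main obstacle to be the uniformity step. \cref{thm:main} is stated for a deterministic sequence, so its proof for treewidth $1$ must be rechecked to confirm that the error depends only on the low-degree error, $D_n$ and the width. A secondary point is handling the $1+o(1)$ factors, such as $(n)_v$ versus $n^v$ and the powers of $(1-p)$, in the single-term estimates when $k$ grows up to $D_n$. These should be harmless because $e(\Gamma_n)=o(\sqrt n)$, which follows from the single-edge term.
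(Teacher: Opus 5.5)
Your proposal is correct, but it takes a different route from the paper's. The paper applies the noise once. It replaces $T_\varepsilon\Gamma_n$ by a forest-truncated version (edgeless whenever a cycle survives) and uses \cref{lem:no-cycle} to show this costs $o(1)$ in total variation. It then bounds the $L^2(Q_n)$ norm of the truncated likelihood ratio directly: the truncated Fourier coefficients vanish on cyclic supports and are dominated by the noised ones on forests. After a component expansion, it reruns the treewidth-one decomposition argument of \cref{lem:connect-upper-bound} on the tree sum.

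You instead split the noise and condition on the random sub-forest via convexity of total variation. You then transfer \eqref{eq:low-degree-condition} to every $\Gamma'\subseteq\Gamma_n$ by monotonicity of the nonnegative coefficients $\Pr[H\subseteq\phi_n(\Gamma)]$ in $\Gamma$. This lets you use \cref{thm:main} as a black box.

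The uniformity step you flag as the main obstacle is actually free, and you do not need to reopen the proof of \cref{thm:main}. Since $\mathcal G_n$ is finite, pick for each $n$ a forest $\Gamma_n'\in\mathcal G_n$ that maximizes $\|T_{\varepsilon_2}P_n^{\Gamma'}-Q_n\|_{\mathrm{TV}}$. This sequence has at most $n$ vertices, has treewidth at most $1=o(D_n/\log n)$, and satisfies \eqref{eq:low-degree-condition} by your monotonicity observation. Applying \cref{thm:main} to it shows the supremum is $o(1)$.

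Your long-cycle count also differs from the paper's. The paper uses only girth $>D_n$, via uniqueness of paths of length at most $r_n=\lfloor(D_n-1)/2\rfloor$, to get $|\mathcal C_k(\Gamma_n)|\le n^{k/r_n+1}$. You instead read off, from the path terms of \eqref{eq:ldlr-2}, that the number of $k$-edge paths is at most $\sqrt{\delta_n n c^k}$ for all $k\le D_n$, where $\delta_n:=\|L_n^{\le D_n}\|_{L^2(Q_n)}^2-1$. This gives a sharper count, and also the extra observation that $\Gamma_n$ is already a forest when $c<1$.

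In short, the paper's route keeps a single noise level and one explicit $L^2$ computation. Yours is more modular and isolates a reusable principle: the low-degree hypothesis is inherited by subgraphs of the planted graph. The only price is a harmless split of the fixed noise budget.
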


Thus, when $c\le1$, every deterministic planted graph sequence satisfying the
low-degree premise at a scale $D_n=\omega(\log n)$ becomes
information-theoretically indistinguishable from the null after any fixed
positive amount of noise, regardless of its original treewidth.

\begin{remark}[On the Theorem Conditions]
The disappearance of the treewidth assumption in \cref{thm:subcritical}
should not be read as evidence that treewidth is irrelevant in general. In the
supercritical regime, some structural restriction is necessary: for every
fixed $\varepsilon\in(0,1)$ and all sufficiently large $c>1$,
\cref{thm:deterministic-gap} constructs a sequence of planted models whose
degree-$(\log n)^2$ advantages vanish, yet whose noised laws remain
statistically distinguishable from $Q_n$. We do not yet know whether the
treewidth requirement in \cref{thm:main} is sharp, but the example
shows that a statement of this form cannot hold without some condition
controlling the structure of the planted graph.
\end{remark}

\vspace{-0.1in}

\paragraph{Relation to Hsieh et al.}
We note that the conclusions of \cref{thm:main,thm:subcritical} do not follow
from the Boolean-vector theorem of Hsieh et al.\
\cite{HsiehEtAl2026}. Although a graph can be encoded by
\(N=\binom n2\) edge indicators, their theorem requires invariance under the
full coordinate-permutation group \(S_N\) and a product Bernoulli null with
fixed bias. Our model has only the induced \(S_n\)-symmetry coming from
vertex relabeling, and its null bias \(c/n\) vanishes. We instead use the
monotone union structure, which yields nonnegative, subgraph-indexed Fourier
weights, together with decompositions across small vertex interfaces.
Treewidth provides these decompositions above criticality, while the cycle
structure replaces them at and below criticality. Our result also complements their Gaussian-matrix theorem: we control the entire
noised sparse Bernoulli graph distribution, whereas they control each fixed
connected signed-subgraph count.

\vspace{-0.1in}
\paragraph{Relation to Planted-Subgraph Inference.}
The broader union-with-a-random-copy paradigm includes planted clique as a
classical example \cite{Jerrum1992,AlonKrivelevichSudakov1998}; here we
focus on sparse backgrounds of density $c/n$. This sparse model has been
studied for a wide range of planted structures. Massouli\'e, Stephan, and
Towsley study the same union-with-a-random-copy model for planted paths,
stars, and complete fixed-arity trees in the sparse Erd\H{o}s--R\'enyi model
$G(n,c/n)$. They obtain detection and recovery thresholds and, in their
impossibility regimes, second-moment bounds for the full likelihood ratio
\cite{MassoulieStephanTowsley2019}.

In a different, dense regime with $p=\Omega(1)$, \cite{YuZadikZhang2025} show
that signed star counts are optimal among polynomial tests of fixed degree
for arbitrary planted graphs. In the more general
model in which an embedded template edge is present with probability
$p_n>q_n$, Elimelech and Huleihel derive formulas for the full and low-degree
likelihood ratios together with statistical and computational bounds across
dense, sparse, and critical regimes \cite{ElimelechHuleihel2025}. Their
semi-random extension permits a monotone adversary to delete non-planted
edges and identifies contrasting fragile and robust detection regimes
governed by the maximum subgraph density
\cite{ElimelechHuleihel2026SemiRandom}.

Complementary work studies recovery. Mossel et al.\ relate all-or-nothing
recovery transitions to generalized expectation thresholds, while Lee et
al.\ derive a variational formula for the limiting MMSE for arbitrary
planted graphs under a mild density condition
\cite{MosselEtAl2023,LeePerniceRajaramanZadik2025}. Huleihel gives sharp
conditions for exact recovery of arbitrary templates in a dense
$p_n>q_n$ model, together with an efficient spectral method and low-degree
computational lower bounds \cite{Huleihel2026Recovery}. Under restricted
observation, Huleihel studies the dense union model when only non-adaptive
edge queries are revealed, obtaining general lower and upper bounds that
match up to polylogarithmic factors for several template families
\cite{Huleihel2026Query}. Recent work also determines statistical thresholds
for planted perfect matchings and spanning trees under a model calibrated
to remove the trivial edge-count signal \cite{AddarioBerryEtAl2026}.

These works primarily seek detection or recovery thresholds as a function of
the planted structure. Our question is complementary and conditional: once a low-degree calculation
yields asymptotically vanishing advantage, when does that conclusion extend
to the entire noised distribution? The
degree-one term in our premise already forces
$|E(\Gamma_n)|=o(\sqrt n)$, so the assumption eliminates the most immediate
edge-counting obstruction. The remaining issue is how the mass of larger
planted patterns is organized, and this is precisely where the low-interface
structure supplied by treewidth enters.

\section{Technical Overview}
\label{section:overview}
The low-degree premise controls only Fourier contributions indexed by
subgraphs with at most $D_n$ edges, whereas the full $L^2(Q_n)$ norm of the
likelihood ratio of the noised model receives contributions from subgraphs
of arbitrary size. The proof therefore requires a mechanism that expresses each
large-support contribution in terms of low-degree ones, with a combinatorial
loss small enough to be absorbed by the noise.

The proof is organized around three ideas. First, we compute the Fourier
coefficients entering the $L^2$ norm and use an exact combinatorial
representation: the product of the relevant subgraph-count and automorphism
factors equals the number of corresponding isomorphism triples. This
representation yields the multiplicative estimate needed to relate the
Fourier weight of a large support to the weights of smaller pieces. Second,
when the planted graph
has small treewidth, we use a refined tree decomposition to partition each
large support into pieces with at most $D_n$ edges. The low-degree premise
controls the Fourier weights of these pieces, while the small interfaces
control the cost of reassembling them; fixed-rate noise absorbs the remaining
cost. Finally, in the critical and subcritical regimes, we treat cycles
separately: the low-degree premise excludes short cycles, and noise destroys
the remaining long cycles with high probability, reducing the problem to the
forest case.

The appendix uses a separate argument: by the probabilistic method, it
constructs a deterministic sequence of bounded-degree, locally sparse,
high-girth planted graphs whose low-degree Fourier mass vanishes, while a
scan statistic still detects their surviving planted-edge density after
noise.

\subsection{Fourier Mass as Weighted Subgraph Counts}

For an abstract graph $F$ without isolated vertices, let
$\mathsf v(F)=|V(F)|$, $\mathsf e(F)=|E(F)|$, and let
$N_{\Gamma_n}(F)$ be the number of subgraphs of $\Gamma_n$ isomorphic to
$F$. The exact Fourier calculation gives the nonnegative weight
\[
    \alpha_n(F)
    =\left(\frac{1-c/n}{c/n}\right)^{\mathsf e(F)}
      \frac{N_{\Gamma_n}(F)^2|\operatorname{Aut}(F)|}
      {(n)_{\mathsf v(F)}}.
\]
Regrouping Fourier supports by isomorphism class yields
\begin{equation}
    \left\lVert L_n^{\le D_n}\right\rVert_{L^2(Q_n)}^2-1
    =\sum_{\substack{F:\,N_{\Gamma_n}(F)>0\\
             1\le\mathsf e(F)\le D_n}}\alpha_n(F),
    \label{eq:overview-low-degree-mass}
\end{equation}
whereas the full norm of the noised likelihood ratio is
\begin{equation}
    \left\lVert L_{n,\varepsilon}\right\rVert_{L^2(Q_n)}^2-1
    =\sum_{\substack{F:\,N_{\Gamma_n}(F)>0\\1\le\mathsf e(F)}}
      (1-\varepsilon)^{2\mathsf e(F)}\alpha_n(F).
    \label{eq:overview-noised-mass}
\end{equation}
By \eqref{eq:low-degree-condition}, the right-hand side of
\eqref{eq:overview-low-degree-mass} is $o(1)$. Thus the terms with
$\mathsf e(F)\le D_n$ in \eqref{eq:overview-noised-mass} are already
controlled, and it remains to bound the tail $\mathsf e(F)>D_n$.

The degree-one term already supplies a useful consequence of the low-degree
hypothesis. Specifically, the summand indexed by $F=K_2$ on the right-hand
side of \eqref{eq:overview-low-degree-mass} satisfies
\[
    \alpha_n(K_2)
    =(1+o(1))\frac{2|E(\Gamma_n)|^2}{cn}.
\]
Since all summands in \eqref{eq:overview-low-degree-mass} are nonnegative,
\eqref{eq:low-degree-condition} implies $\alpha_n(K_2)=o(1)$ and therefore
$|E(\Gamma_n)|=o(\sqrt n)$. Every graph $F$ with
$N_{\Gamma_n}(F)>0$ then satisfies
$\mathsf e(F)\le |E(\Gamma_n)|=o(\sqrt n)$ and
$\mathsf v(F)\le2\mathsf e(F)=o(\sqrt n)$. Consequently, the factors
$(1-c/n)^{\mathsf e(F)}$ and
$n^{\mathsf v(F)}/(n)_{\mathsf v(F)}$ are both $1+o(1)$ uniformly over the
entire expansion. Up to this uniform factor, we may therefore work with the
simpler weight
\[
    \beta_n(F)
    :=\frac{n^{\mathsf e(F)-\mathsf v(F)}}{c^{\mathsf e(F)}}
      N_{\Gamma_n}(F)^2|\operatorname{Aut}(F)|.
\]
The exact classwise expansions are \eqref{eq:ldlr-2} and
\eqref{eq:noise-lr}; the uniform comparison with $\beta_n$ and the
degree-one consequence are proved in \cref{lem:leading-term,lem:edge}.

\subsection{The Isomorphism-Triple Correspondence}
The term
$N_{\Gamma_n}(F)^2|\operatorname{Aut}(F)|$ does not factor easily
across an edge decomposition of $F$. We overcome this difficulty through a counting correspondence that puts two copies of $F$ and an isomorphism between them into a single combinatorial object. An \emph{isomorphism triple} for $F$ is a triple
\[
    (F_1^\circ,F_2^\circ,\psi),
\]
where $F_1^\circ,F_2^\circ\subseteq\Gamma_n$ are concrete copies of $F$ and
$\psi:F_1^\circ\to F_2^\circ$ is a graph isomorphism. We write
$\mathrm{Tri}(F)$ for the set of all such triples. There are
$N_{\Gamma_n}(F)^2$ ordered pairs of concrete copies, and for every fixed
pair there are exactly $|\operatorname{Aut}(F)|$ isomorphisms between them.
Consequently,
\[
    |\mathrm{Tri}(F)|
    =N_{\Gamma_n}(F)^2|\operatorname{Aut}(F)|.
\]
Thus the combinatorial factor in $\beta_n(F)$ is the cardinality of
one concrete set.

Now partition the edges of $F$ into subgraphs $T_1,\ldots,T_m$:
\[
    E(F)=E(T_1)\sqcup\cdots\sqcup E(T_m).
\]
Each $T_i$ contains the vertices incident to its edges, so the pieces are
edge-disjoint but may share vertices. For a triple
$(F_1^\circ,F_2^\circ,\psi)\in\mathrm{Tri}(F)$, consider the fixed
compatible decomposition
$F_1^\circ=T_{1,1}^\circ\cup\cdots\cup T_{1,m}^\circ$ of the first copy.
Transport it through $\psi$ by setting
\[
    T_{2,i}^\circ:=\psi(T_{1,i}^\circ),
    \qquad
    \psi_i:=\psi|_{T_{1,i}^\circ}.
\]
Then
$(T_{1,i}^\circ,T_{2,i}^\circ,\psi_i)\in\mathrm{Tri}(T_i)$ for every $i$.
This defines a restriction map
\[
    \Delta_F:\mathrm{Tri}(F)
    \longrightarrow \prod_{i=1}^m\mathrm{Tri}(T_i).
\]

The map $\Delta_F$ is injective. Indeed, for a tuple in its image, the union
of the first coordinates recovers $F_1^\circ$, and the union of the second
coordinates recovers $F_2^\circ$. The local maps $\psi_i$ agree on shared
vertices because they are restrictions of the same global map $\psi$.
Moreover, $F$ has no isolated vertices, so every vertex belongs to at least
one piece; hence the $\psi_i$ determine $\psi$ on all of $F_1^\circ$.
An arbitrary tuple in $\prod_i\mathrm{Tri}(T_i)$ need not glue to a global
triple, but such incompatible tuples only enlarge the codomain. Injectivity
therefore gives
\[
    |\mathrm{Tri}(F)|
    \le \prod_{i=1}^m|\mathrm{Tri}(T_i)|.
\]
This is the advantage of the triple formulation: instead of comparing the
automorphism groups of $F$ and its pieces directly, we compare the
cardinalities of concrete sets by restriction.

When $F$ varies, the abstract tuple $(T_1,\ldots,T_m)$ alone does not record
how the pieces are glued. In the summation, we instead retain the concrete
local triples. Their coordinatewise unions recover the two global concrete
graphs and hence the global type $F$, so distinct gluings cannot collide.
This gives the aggregate injection in \cref{lem:aggregate-injection}.

The remaining factor in the product bound comes only from vertices shared by
several pieces. Define
\[
    \operatorname{ov}(F;T_1,\ldots,T_m)
    :=\sum_{i=1}^m\mathsf v(T_i)-\mathsf v(F).
\]
Because the edge sets form a partition,
$\sum_i\mathsf e(T_i)=\mathsf e(F)$. Substituting the triple-count identity
into the definition of $\beta_n$ therefore gives
\begin{equation}
\begin{aligned}
    \beta_n(F)
    &\le
    \frac{n^{\mathsf e(F)-\mathsf v(F)}}{c^{\mathsf e(F)}}
    \prod_{i=1}^m|\mathrm{Tri}(T_i)| \\
    &=n^{\operatorname{ov}(F;T_1,\ldots,T_m)}
      \prod_{i=1}^m\beta_n(T_i).
\end{aligned}
    \label{eq:overview-product-bound}
\end{equation}
A vertex belonging to $r$ pieces contributes $r-1$ to
$\operatorname{ov}(F;T_1,\ldots,T_m)$. Thus each repeated interface
incidence costs one factor of $n$. For example, if a path is divided into
$m$ consecutive subpaths, then adjacent pieces share one endpoint,
$\operatorname{ov}=m-1$, and the total loss is $n^{m-1}$. Nothing in this
argument requires the pieces $T_i$ to be connected. The remaining task is to
construct edge partitions whose pieces lie in the low-degree range and whose
overlap is small; this is the role of treewidth in the next subsection. The
triple count, restriction map, and product bound are formalized in
\cref{lem:isomorphism-triples,lem:map-injection,lem:aggregate-injection}
and Equation \eqref{eq:beta_overlap_bound}.

\subsection{Tree Decompositions and the Interface Budget}

The proof needs the following abstract low-interface criterion. At an
intermediate scale $\ell_n$, every relevant connected support should admit
an edge partition such that, for some constant $C$ and interface parameter
$a_n$,
\[
    \ell_n\le\mathsf e(T_i)\le C\ell_n,
    \qquad
    \operatorname{ov}(F;T_1,\ldots,T_m)\le a_n(m-1),
\]
while $C\ell_n\le D_n$ and $a_n\log n=o(\ell_n)$. Bounded treewidth
provides exactly this criterion with $C=3$ and $a_n=\operatorname{tw}(\Gamma_n)+1$. Namely, we show
that every connected graph $F$ of treewidth at most $k$ and with at
least $\ell$ edges admits an edge partition satisfying
\begin{equation}
    \ell\le\mathsf e(T_i)\le3\ell,
    \qquad
    \operatorname{ov}(F;T_1,\ldots,T_m)
    \le(k+1)(m-1).
    \label{eq:overview-treewidth-division}
\end{equation}

\begin{figure}[!t]
    \centering
    \setlength{\abovecaptionskip}{3pt}
    \setlength{\belowcaptionskip}{0pt}
    \begin{minipage}[t]{0.36\textwidth}
        \vspace{0pt}
        \centering
        \begin{adjustbox}{max width=\linewidth,max totalheight=6.0cm}
            \input{tree_decomposition_original_graph.tex}
        \end{adjustbox}
        \caption{\textbf{The running graph $F$.}}
        \label{fig:tree-decomposition-original-graph}
    \end{minipage}\hfill
    \begin{minipage}[t]{0.62\textwidth}
        \vspace{0pt}
        \centering
        \begin{adjustbox}{height=6.0cm,max width=\linewidth}
            \input{tree_decomposition_stage_one.tex}
        \end{adjustbox}
        \caption{\textbf{Initial decomposition and edge ownership.}}
        \label{fig:tree-decomposition-stage-one}
    \end{minipage}
    \vspace{0.5em}

    \begin{minipage}[t]{0.54\textwidth}
        \vspace{0pt}
        \centering
        \begin{adjustbox}{max width=\linewidth,max totalheight=9.1cm}
            \input{tree_decomposition_stage_two.tex}
        \end{adjustbox}
    \end{minipage}\hfill
    \begin{minipage}[t]{0.44\textwidth}
        \vspace{0pt}
        \centering
        \vspace*{0.55cm}
        \begin{adjustbox}{height=7.2cm,max width=\linewidth,center}
            \input{tree_decomposition_stage_three.tex}
        \end{adjustbox}
    \end{minipage}

    \begin{minipage}[t]{0.54\textwidth}
        \vspace{0pt}
        \centering
        \caption{\textbf{Subcubic refinement and clustering.}}
        \label{fig:tree-decomposition-stage-two}
    \end{minipage}\hfill
    \begin{minipage}[t]{0.44\textwidth}
        \vspace{0pt}
        \centering
        \caption{\textbf{The induced edge partition.}}
        \label{fig:tree-decomposition-stage-three}
    \end{minipage}
\end{figure}

We first refine
a width-$k$ tree decomposition so that its decomposition tree is subcubic
and each node owns at most one edge of $F$. The ownership indicators turn
the decomposition tree into a $0$--$1$ weighted tree. A greedy partition
divides this tree into connected clusters of total weight between $\ell$
and $3\ell$; the edges owned by each cluster form the corresponding piece
$T_i$. Contracting the clusters produces a quotient tree. Each quotient
edge comes from a unique decomposition-tree edge $xy$; only vertices in
$B_x\cap B_y$, at most $k+1$ of them, can be counted on both sides. This
gives the overlap bound in \eqref{eq:overview-treewidth-division}. The running
graph and the three stages of the construction are shown in
\cref{fig:tree-decomposition-original-graph,fig:tree-decomposition-stage-one,fig:tree-decomposition-stage-two,fig:tree-decomposition-stage-three}.
In this example, $\ell=5$: the blue, orange, and green regions are the
three clusters, the dotted edges mark the two cuts, and transporting edge
ownership back to $F$ gives pieces of sizes $5$, $7$, and $5$.

We now choose an intermediate scale $\ell_n$ such that
\[
    3\ell_n\le D_n,
    \qquad
    (\operatorname{tw}(\Gamma_n)+1)\log n=o(\ell_n).
\]
All pieces then lie in the low-degree range. On the other hand,
$\mathsf e(F)\ge m\ell_n$, and the combination of the noise factor with
the interface loss in \eqref{eq:overview-product-bound} is bounded by
\[
    (1-\varepsilon)^{2\mathsf e(F)}
    n^{\operatorname{ov}(F;T_1,\ldots,T_m)}
    \le
    \exp\!\left(
        2m\ell_n\log(1-\varepsilon)
        +(\operatorname{tw}(\Gamma_n)+1)(m-1)\log n
    \right).
\]
For every fixed $\varepsilon>0$, this is at most one for all sufficiently
large $n$, so the noise factor offsets the full interface loss. The aggregate
isomorphism-triple injection then bounds the sum over all large connected
$F$ by a geometric series in the total low-degree mass. A second cluster
expansion over connected components handles disconnected $F$. The
decomposition statement is \cref{lem:decomposition}, proved via
\cref{prop:treewidth-division,lem:decomposition-refinement,lem:partition-tree};
the connected and disconnected summations are carried out in
\cref{lem:connect-upper-bound,lem:cluster-expansion}.

\subsection{Cycles at and below Criticality}

The proof of \cref{thm:subcritical} replaces the treewidth assumption by a
cycle argument. If $\Gamma_n$ contains a cycle $C_\ell$ with
$\ell\le D_n$, then
\[
    \beta_n(C_\ell)
    =\frac{2\ell}{c^\ell}N_{\Gamma_n}(C_\ell)^2
    \ge2\ell
    \qquad (c\le1).
\]
By \cref{lem:edge} and \eqref{eq:leading-term}, for all sufficiently large
$n$ this implies
\[
    \alpha_n(C_\ell)\ge\frac12\beta_n(C_\ell)\ge\ell,
\]
which contradicts \eqref{eq:low-degree-condition}, since
$\alpha_n(C_\ell)$ is a nonnegative summand in
\eqref{eq:overview-low-degree-mass}. Hence $\Gamma_n$ must have girth larger
than $D_n$.

It remains to show that long cycles do not survive the noise. Set
$r_n=\lfloor(D_n-1)/2\rfloor$. On a graph of girth greater than $D_n$,
there is at most one path of length at most $r_n$ between any two fixed
vertices. Sampling vertices at intervals of length $r_n$ around a
$k$-cycle therefore encodes the entire cycle using at most
$k/r_n+1$ vertices. This gives the bound
\[
    |\mathcal C_k(\Gamma_n)|\le n^{k/r_n+1}.
\]
Each such cycle survives with probability $(1-\varepsilon)^k$. Since
$D_n=\omega(\log n)$, summing over $k>D_n$ shows that the noised planted
graph is acyclic with probability $1-o(1)$.

Let $L_{n,\varepsilon}^{\mathrm F}$ denote the likelihood ratio of the
forest-truncated model. Its Fourier coefficients vanish on cyclic supports.
For acyclic $F$, the coupling
$T_\varepsilon^{\mathrm F}\Gamma_n\subseteq T_\varepsilon\Gamma_n$
makes the containment probability, and hence the Fourier coefficient, no
larger than in the original noised model. Parseval therefore bounds the
truncated $L^2$ norm by the forest part of \eqref{eq:noise-lr}. A component
expansion reduces this forest sum to the tree sum in
\eqref{eq:truncated-sum-goal}, to which the preceding decomposition argument
applies with treewidth one. The short- and long-cycle estimates are
\cref{lem:girth,lem:no-cycle}; the forest truncation is carried out in
\eqref{eq:truncated-tv-goal} and \eqref{eq:truncated-sum-goal}.

\subsection{The Deterministic Supercritical Example}
\label{subsec:overview-supercritical-gap}

The example in \cref{thm:deterministic-gap} is a deterministic sequence of
planted graphs, although its existence is established by the probabilistic
method. Fix $m_n=\lfloor n^r\rfloor$ with $r<1/2$ and an even integer $D$
such that $(1-\varepsilon)D/2>1$. From the $D$-regular configuration model
on $m_n$ vertices, we select a simple, locally sparse realization with only
$o(m_n)$ cycles of length $O(\log n)$, then delete one edge from each such
cycle. The resulting deterministic graph $\Gamma_n$ has maximum degree at
most $D$, girth $\Omega(\log n)$, and
$\mathsf e(\Gamma_n[U])\le(1+A/\log n)|U|$ for
$2\le|U|\le m_n^\gamma$, while retaining
$\mathsf e(\Gamma_n)=(D/2-o(1))m_n$ edges
\cite{Bollobas1980RegularGraphs}; see
\cref{lem:deterministic-planted-graphs}.

For $D_n=\lceil(\log n)^2\rceil$, we first sum the low-degree Fourier mass
over connected supports. \Cref{lem:deterministic-connected-count} gives the
bound $m_n^2D^{3e}$ for all connected $e$-edge supports. Trees carry
$n^{\mathsf e(F)-\mathsf v(F)}=n^{-1}$ and contribute $o(1)$ because
$m_n^2/n=o(1)$. For cyclic supports, local sparsity controls
$\mathsf e(F)-\mathsf v(F)$, while logarithmic girth forces
$\mathsf e(F)\ge\kappa\log n$; for sufficiently large fixed $c$, their
total contribution is a vanishing geometric tail. Factoring over connected
components proves \eqref{eq:deterministic-low-degree}. After noise, the
planted vertex set still contains more than $\tau m_n$ retained planted
edges with high probability for some $1<\tau<(1-\varepsilon)D/2$, whereas
a union bound shows that no $m_n$-vertex set under $Q_n$ is this dense. The
resulting scan test proves \eqref{eq:deterministic-statistical}.

\section{Model, Notation, and Fourier Expansion}
\label{section:model}

\subsection{Notation and Conventions}

All logarithms are natural. For a positive integer $n$, write $[n]:=\{1,\ldots,n\}$, and let $\binom{[n]}{2}$ denote the set of unordered pairs of distinct elements of $[n]$. 

Throughout this paper, all graphs are finite and simple and an \textit{abstract graph} means an isomorphism class of finite simple graphs without isolated vertices. 
This convention entails no loss of generality, since adding or removing isolated vertices from the planted graph does not change either of the planted distributions considered below.
We distinguish an abstract graph from its particular realizations and use a representative $F$ when writing its vertex and edge sets.
For a graph $F$, write
\begin{equation*}
    \mathsf{v}(F):=|V(F)|,\qquad \mathsf{e}(F):=|E(F)|,\qquad\operatorname{tw}(F):=\text{the treewidth of }F.
\end{equation*}
For an abstract graph $F$, let $N_{\Gamma}(F)$ be the number of not-necessarily-induced subgraphs of $\Gamma$ isomorphic to $F$, where a subgraph is specified by its edge set together with its incident vertices, and let $\Aut(F)$ denote the automorphism group of $F$. 
We use the falling-factorial notation
\begin{equation*}
    (n)_v:=n(n-1)\cdots(n-v+1)\quad (v\ge 1),\qquad(n)_0:=1.
\end{equation*}
A superscript $\bullet$ denotes an \textit{embedded copy} on $[n]$: thus,
$F^\bullet$ is a labeled graph with $V(F^\bullet)\subseteq[n]$, $E(F^\bullet)\subseteq\binom{[n]}{2}$, and $F^\bullet\simeq F$. 
A superscript $\circ$ instead denotes a concrete copy inside the planted graph $\Gamma_n$; thus, $F^\circ\subseteq\Gamma_n$ and $F^\circ\simeq F$. 

The symbols $o(1)$, $O(\cdot)$, and $\omega(\cdot)$ refer to the limit $n\to\infty$. 
Unless stated otherwise, $c$ and $\varepsilon$ are fixed in this limit. 
For probability measures $\mu$ and $\nu$, we write
$\lVert\mu-\nu\rVert_{\mathrm{TV}}:=\sup_A |\mu(A)-\nu(A)|$.

\subsection{The Planted and Noised Models}
In this subsection, we introduce the formal notation used to formulate the testing problem considered in Theorem \ref{thm:main}.

Given a graph $\Gamma_n$ to be planted, we define the planted distribution $P_n$ as follows. Sample an injective map $\phi_n:V(\Gamma_n)\to [n]$
uniformly at random, independently of a graph $G\sim G(n,c/n)$, and define
\begin{equation*}
    \phi_n(E(\Gamma_n)):=\left\{\{\phi_n(u),\phi_n(v)\}:\{u,v\}\in E(\Gamma_n)\right\}.
\end{equation*}
Then $P_n$ is the law of the graph with vertex set $[n]$ and edge set $E(G)\cup \phi_n(E(\Gamma_n))$.

We next define the noise operator $T_{\varepsilon}$. 
Given a graph on $[n]$, independently for each edge $e\in\binom{[n]}{2}$, the corresponding edge indicator is resampled from the Bernoulli distribution with parameter $c/n$ with probability $\varepsilon$ and is left unchanged with probability $1-\varepsilon$. 
Since this operator preserves the Erd\H{o}s--R\'enyi distribution $G(n,c/n)$, when $T_{\varepsilon}$ is applied to the planted distribution $P_n$, the Erd\H{o}s--R\'enyi background remains distributed as $G(n,c/n)$, while each planted edge remains as an additional planted edge independently with probability $1-\varepsilon$.
Equivalently, we may regard $T_{\varepsilon}$ as acting directly on $\Gamma_n$ by independently deleting each edge with probability $\varepsilon$, and denote the resulting random subgraph by $T_{\varepsilon}\Gamma_n$; then $T_{\varepsilon}P_n$ is the distribution obtained by superimposing an independently embedded copy of $T_{\varepsilon}\Gamma_n$ on an independent graph sampled from $G(n,c/n)$.

\subsection{Fourier Expansion}
In this subsection, we provide a precise formulation of the low-degree likelihood ratio and derive an explicit expression for its squared $L^2(Q_n)$-norm, $\left\lVert L_n^{\le D_n}\right\rVert^2_{L^2(Q_n)}$.

We first introduce the Fourier basis, which reduces the desired norm to a sum of squared Fourier coefficients. 
We then compute these coefficients by conditioning on the random embedding of $\Gamma_n$ and rewrite the resulting sum as a sum over the isomorphism class of their underlying abstract graphs. 
The same classwise representation will also yield
a convenient leading-order approximation to each summand and an analogue in the noised model.

Let $L^2(Q_n)$ be equipped with the inner product
\begin{equation*}
    \langle f,g\rangle_{Q_n}:=\mathbb{E}_{G \sim Q_n}[f(G)g(G)],
    \qquad \|f\|_{Q_n}^2:=\mathbb{E}_{G \sim Q_n}[f(G)^2].
\end{equation*}
For an integer $D\ge 0$, let $\mathbb{R}[X]_{\le D}$ denote the subspace of functions on $\{0,1\}^{\binom{[n]}{2}}$ that can be represented by multilinear polynomials of degree at most $D$ in the coordinate functions $(\omega(e))_{e\in\binom{[n]}{2}}$, and let
    \begin{equation*}
        \Pi_{\le D}^{Q_n}: L^2(Q_n)
    \longrightarrow \mathbb{R}[X]_{\le D}
    \end{equation*}
denote the orthogonal projection with respect to the inner product $\langle \cdot,\cdot\rangle_{Q_n}$.
Then, the degree-$D$ low-degree likelihood ratio is
\begin{equation*}
    L_n^{\le D}:=\Pi_{\le D}^{Q_n} L_n.
\end{equation*}

To describe this projection explicitly, we now introduce the $Q_n$-Fourier basis.
For a configuration $\omega\in \{0,1\}^{\binom{[n]}{2}}$ and an edge $e\in\binom{[n]}{2}$, let
\begin{equation*}
    r_e(\omega):=\begin{cases}
        \sqrt{\frac{1-c/n}{c/n}}, &\text{if }\omega(e)=1,\\
        -\sqrt{\frac{c/n}{1-c/n}}, &\text{if }\omega(e)=0.
    \end{cases}
\end{equation*}
For each edge set  $S\subset \binom{[n]}{2}$, define \begin{equation*}
    \chi_S(\omega)=\prod_{e\in S}r_e(\omega),
\end{equation*}
with the convention $\chi_{\emptyset}\equiv 1$.
Then the collection $\{\chi_S:S\subset\binom{[n]}{2}\}$ forms an orthonormal basis of $L^2(Q_n)$; see \cite[Chapter 8]{MR3443800}.
Thus, for $f:\{0,1\}^{\binom{[n]}{2}}\to\mathbb{R}$ and $S\subset\binom{[n]}{2}$, define the \textit{Fourier coefficient}
\begin{equation*}
    \hat{f}(S):=\langle f,\chi_S\rangle_{Q_n}
    =\mathbb{E}_{Q_n}[f\chi_S].
\end{equation*}
The corresponding \textit{Fourier expansion} is
\begin{equation}
    f(\omega)=\sum\limits_{S\subset\binom{[n]}{2}}\hat{f}(S)\chi_S(\omega).
    \label{eq:fourier-expansion-1}
\end{equation}
Since $\chi_S$ is a multilinear polynomial of degree $|S|$, the orthogonal projection onto the subspace of polynomials of degree at most
$D$ is given by
\begin{equation*}
    \Pi_{\le D}^{Q_n} (f)= \sum\limits_{\substack{S\subset \binom{[n]}{2}\\|S|\le D}} \hat{f}(S) \chi_S.
\end{equation*}

The orthonormality of this basis now turns the desired norm into a sum of squared Fourier
coefficients. Specifically, Parseval's identity states that
\begin{equation*}
    \|f\|_{L^2(Q_n)}^2
    =\mathbb{E}_{Q_n}[f^2]
    =\sum\limits_{S\subset\binom{[n]}{2}}\left[\hat{f}(S)\right]^2.
\end{equation*}
Applying it to $L_n^{\le D_n}=\Pi_{\le D_n}^{Q_n}(L_n)$ gives
\begin{equation}
        \left\lVert L_n^{\le D_n}\right\rVert_{L^2(Q_n)}^2
        =\mathbb{E}_{Q_n}\left[\Pi_{\le D_n}^{Q_n}(L_n)^2\right]
        =\sum\limits_{\substack{S\subset\binom{[n]}{2}\\ |S|\le D_n}}
        \left[\widehat{L_n}(S)\right]^2.
    \label{eq:ldlr-1}
\end{equation}

Thus, it remains to compute the coefficients $\widehat{L_n}(S)$ and reorganize the squared sum in \eqref{eq:ldlr-1}. 
An embedded copy $F^{\bullet}$ has no isolated vertices and is therefore determined by its edge set.
We therefore use the shorthand
\begin{equation*}
    \chi_{F^\bullet}:=\chi_{E(F^\bullet)},
    \qquad
    \hat{f}(F^\bullet):=\hat{f}\bigl(E(F^\bullet)\bigr).
\end{equation*}

To compute these coefficients, we first condition on the random planted copy. 
Fix an embedded copy $\Gamma_n^{\bullet}$, and
conditioned on $\phi_n(\Gamma_n)=\Gamma_n^\bullet$, the planted model imposes $\omega(e)=1$ for every $e\in E(\Gamma_n^\bullet)$. 
Its likelihood ratio with respect to $Q_n$ is therefore
\begin{equation*}
    L_{\Gamma_n^\bullet}(\omega):=\prod_{e\in E(\Gamma_n^\bullet)}\frac{\omega(e)}{c/n}.
\end{equation*}
For every edge $e$, 
\begin{equation*}
    \frac{\omega(e)}{c/n}=1+\sqrt{\frac{1-c/n}{c/n}}\,r_e(\omega).
\end{equation*}
Expanding the product over subsets of $E(\Gamma_n^\bullet)$ gives
\begin{equation}
        L_{\Gamma_n^\bullet}(\omega)
        =\prod_{e\in E(\Gamma_n^\bullet)}
        \left(1+\sqrt{\frac{1-c/n}{c/n}}\,r_e(\omega)\right)
        =\sum\limits_{F^\bullet\subset\Gamma_n^\bullet}
        \left(\frac{1-c/n}{c/n}\right)^{|E(F^\bullet)|/2}
        \chi_{F^\bullet}(\omega),
    \label{eq:conditioned-likelihood}
\end{equation}
where the summation is over the empty graph and every subgraph is specified by a subset of $E(\Gamma_n^\bullet)$ together with its incident vertices.

The planted distribution $P_n$ is the mixture of these conditional models over the uniformly random embedding of $\Gamma_n$. 
Hence its likelihood ratio is the corresponding average of the conditional likelihood ratios. 
Taking expectation in
\eqref{eq:conditioned-likelihood}, we obtain
\begin{equation}
        L_n(\omega)
        =\mathbb{E}_{\Gamma_n^\bullet}
        \left[L_{\Gamma_n^\bullet}(\omega)\right]
        =\sum\limits_{F^\bullet}
        \mathbb{P}\left[F^\bullet\subset\phi_n(\Gamma_n)\right]
        \left(\frac{1-c/n}{c/n}\right)^{|E(F^\bullet)|/2}
        \chi_{F^\bullet}(\omega),
    \label{eq:likelihood-expansion}
\end{equation}
where the sum ranges over the empty graph and all embedded graphs on $[n]$ without isolated vertices. 
Comparing \eqref{eq:likelihood-expansion} with the Fourier expansion \eqref{eq:fourier-expansion-1} yields
\begin{equation*}
    \widehat{L_n}(F^\bullet)
    =\mathbb{P}\left[F^\bullet\subset\phi_n(\Gamma_n)\right]
    \left(\frac{1-c/n}{c/n}\right)^{|E(F^\bullet)|/2}.
\end{equation*}

We next express this containment probability, and hence the Fourier coefficient, in terms of the underlying abstract graph. 

Fix a nonempty abstract graph $F$ and an embedded copy $F^\bullet\simeq F$.
The falling factorial $(n)_{\mathsf{v}(F)}$ counts the injective maps from $V(F)$ to $[n]$. 
Each embedded copy of $F$ on $[n]$ arises from exactly
$|\mathrm{Aut}(F)|$ such maps. 
Hence, the total number of embedded copies is
\begin{equation*}
    \frac{(n)_{\mathsf{v}(F)}}{|\Aut(F)|}.
\end{equation*}
Under the random embedding, each of the $N_{\Gamma_n}(F)$ concrete copies of $F$ in $\Gamma_n$ is mapped uniformly to one of these embedded copies. 
Moreover, because the embedding is injective, two distinct subgraphs of $\Gamma_n$ cannot have the same image. 
It follows that
\begin{equation*}
    \mathbb{P}\left[F^\bullet\subset\phi_n(\Gamma_n)\right]
    =\frac{N_{\Gamma_n}(F)|\mathrm{Aut}(F)|}
    {(n)_{\mathsf{v}(F)}}.
\end{equation*}
All embedded copies $F^\bullet\simeq F$ therefore have the same Fourier
coefficient, and hence
\begin{equation*}
    \begin{aligned}
        \sum\limits_{F^\bullet\simeq F}
        \left[\widehat{L_n}(F^\bullet)\right]^2
        &=\frac{(n)_{\mathsf{v}(F)}}{|\mathrm{Aut}(F)|}
        \left(
        \frac{N_{\Gamma_n}(F)|\mathrm{Aut}(F)|}
        {(n)_{\mathsf{v}(F)}}
        \right)^2
        \left(\frac{1-c/n}{c/n}\right)^{\mathsf{e}(F)}\\
        &=\left(\frac{1-c/n}{c/n}\right)^{\mathsf{e}(F)}
        \frac{N_{\Gamma_n}(F)^2|\mathrm{Aut}(F)|}
        {(n)_{\mathsf{v}(F)}}.
    \end{aligned}
\end{equation*}

We now partition the sum in \eqref{eq:ldlr-1} according to the
isomorphism class of the graph with edge set $S$. The empty graph
contributes $1$, since
$\widehat{L_n}(\emptyset)=\mathbb{E}_{Q_n}[L_n]=1$. Combining the
remaining terms with the preceding identity yields
\begin{equation}
    \left\lVert L_n^{\le D_n}\right\rVert_{L^2(Q_n)}^2
    =1+\sum\limits_{\substack{F:\,N_{\Gamma_n}(F)>0\\
    1\le\mathsf{e}(F)\le D_n}}
    \left(\frac{1-c/n}{c/n}\right)^{\mathsf{e}(F)}
    \frac{N_{\Gamma_n}(F)^2|\mathrm{Aut}(F)|}
    {(n)_{\mathsf{v}(F)}},
    \label{eq:ldlr-2}
\end{equation}
where the sum ranges over isomorphism classes of nonempty subgraphs of
$\Gamma_n$.

This is the desired exact expression for the low-degree second moment.
The same coefficient calculation also yields its counterpart for the noised model.
Indeed, in the noised model, each planted edge is retained independently with probability $1-\varepsilon$. 
Thus, for every nonempty $F$ and every embedded copy $F^\bullet\simeq F$,
\begin{equation*}
    \mathbb{P}\left[F^\bullet\subset \phi_n(T_\varepsilon\Gamma_n)\right]=(1-\varepsilon)^{\mathsf{e}(F)}\frac{N_{\Gamma_n}(F)|\mathrm{Aut}(F)|}{(n)_{\mathsf{v}(F)}}.
\end{equation*}
Consequently, for
$L_{n,\varepsilon}:=d(T_\varepsilon P_n)/dQ_n$, the same calculation as above gives
\begin{equation}
    \left\lVert L_{n,\varepsilon}\right\rVert_{L^2(Q_n)}^2=1+\sum\limits_{\substack{F:\,N_{\Gamma_n}(F)>0\\F\text{ nonempty}}}(1-\varepsilon)^{2\mathsf{e}(F)}\left(\frac{1-c/n}{c/n}\right)^{\mathsf{e}(F)}\frac{N_{\Gamma_n}(F)^2|\mathrm{Aut}(F)|}{(n)_{\mathsf{v}(F)}}.
    \label{eq:noise-lr}
\end{equation}

For later estimates, it is convenient to isolate the exact summand in \eqref{eq:ldlr-2} and compare it with a simpler expression. 
For each nonempty graph $F$ with
$N_{\Gamma_n}(F)>0$, define
\begin{equation*}
        \alpha_n(F)
        :=\left(\frac{1-c/n}{c/n}\right)^{\mathsf{e}(F)}
        \frac{N_{\Gamma_n}(F)^2|\mathrm{Aut}(F)|}
        {(n)_{\mathsf{v}(F)}},\qquad
        \beta_n(F)
        :=\frac{n^{\mathsf{e}(F)-\mathsf{v}(F)}}
        {c^{\mathsf{e}(F)}}
        N_{\Gamma_n}(F)^2|\mathrm{Aut}(F)|.
\end{equation*}

The following lemma makes this approximation uniform on any edge range $\mathsf{e}(F)=o(\sqrt{n})$.

\begin{lemma}
    Fix $c>0$ and let $d_n=o(\sqrt{n})$. Uniformly over all nonempty graphs $F$ without isolated vertices such that $N_{\Gamma_n}(F)>0$ and $\mathsf{e}(F)\le d_n$,
    \begin{equation*}
        \frac{\alpha_n(F)}{\beta_n(F)}=1+o(1).
    \end{equation*}
    Consequently, for all sufficiently large $n$,
    \begin{equation}
        \frac{1}{2}\beta_n(F)
        \le\alpha_n(F)
        \le 2\beta_n(F).
        \label{eq:leading-term}
    \end{equation}
    \label{lem:leading-term}
\end{lemma}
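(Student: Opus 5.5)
The ratio splits into two explicit factors, and I will bound each one uniformly. Since the factor $N_{\Gamma_n}(F)^2|\Aut(F)|$ appears in both $\alpha_n(F)$ and $\beta_n(F)$ and is positive when $N_{\Gamma_n}(F)>0$, it cancels:
\begin{equation*}
    \frac{\alpha_n(F)}{\beta_n(F)}
    =\left(1-\frac{c}{n}\right)^{\mathsf e(F)}\cdot\frac{n^{\mathsf v(F)}}{(n)_{\mathsf v(F)}}.
\end{equation*}
The claim then reduces to showing that both factors are $1+o(1)$, with error bounds that depend only on $n$ and $d_n$. The only structural input is that $F$ has no isolated vertices, so $\mathsf v(F)\le 2\mathsf e(F)\le 2d_n=o(\sqrt n)$.

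For the first factor, $1\ge(1-c/n)^{\mathsf e(F)}\ge(1-c/n)^{d_n}\ge 1-cd_n/n$ by Bernoulli's inequality. This lower bound is $1-o(1)$ because $d_n=o(\sqrt n)=o(n)$.

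For the second factor, write $v=\mathsf v(F)$ and
\begin{equation*}
    \frac{(n)_v}{n^v}=\prod_{j=0}^{v-1}\left(1-\frac jn\right).
\end{equation*}
This product is at most $1$. Whenever $v\le n/2$, it is at least $\exp\!\bigl(-2\sum_{j<v} j/n\bigr)\ge\exp(-v^2/n)$, using $1-x\ge e^{-2x}$ for $x\in[0,1/2]$. Since $v^2/n\le 4d_n^2/n=o(1)$, the ratio $n^v/(n)_v$ lies in $[1,e^{4d_n^2/n}]=[1,1+o(1)]$.

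This is exactly where the hypothesis $d_n=o(\sqrt n)$ is needed, rather than merely $o(n)$: the falling-factorial correction is quadratic in $v$. That step is the only real point in the argument, and it is easy. Multiplying the two bounds gives $\alpha_n(F)/\beta_n(F)\in[1-cd_n/n,\;e^{4d_n^2/n}]$ uniformly in $F$. Both endpoints tend to $1$, so for all sufficiently large $n$ the ratio lies in $[1/2,2]$, which is \eqref{eq:leading-term}.
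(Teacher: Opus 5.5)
Your proof is correct and follows essentially the same route as the paper. Both arguments cancel the common factor $N_{\Gamma_n}(F)^2|\Aut(F)|$, use $\mathsf v(F)\le 2\mathsf e(F)\le 2d_n=o(\sqrt n)$, and control $(1-c/n)^{\mathsf e(F)}$ and $\prod_{j<\mathsf v(F)}(1-j/n)^{-1}$ through the elementary inequality $1-x\ge e^{-2x}$ on $[0,1/2]$. The only difference is presentation: the paper bounds $|\log(\alpha_n(F)/\beta_n(F))|$ in one step, whereas you give separate one-sided bounds for each factor.
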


\begin{proof}
    Direct cancellation gives
    \begin{equation*}
        \frac{\alpha_n(F)}{\beta_n(F)}
        =\left(1-\frac{c}{n}\right)^{\mathsf{e}(F)}
        \frac{n^{\mathsf{v}(F)}}{(n)_{\mathsf{v}(F)}}
        =\left(1-\frac{c}{n}\right)^{\mathsf{e}(F)}
        \prod_{j=0}^{\mathsf{v}(F)-1}
        \left(1-\frac{j}{n}\right)^{-1}.
    \end{equation*}
    Since $F$ has no isolated vertices,
    $\mathsf{v}(F)\le 2\mathsf{e}(F)\le 2d_n=o(\sqrt{n})$. 
    For all sufficiently large $n$, we may therefore use
    $|\log(1-x)|\le 2x$ for $0\le x\le 1/2$ to obtain
    \begin{align*}
        \left|\log\frac{\alpha_n(F)}{\beta_n(F)}\right|
        &\le \frac{2c\mathsf{e}(F)}{n}
        +\frac{2}{n}\sum_{j=0}^{\mathsf{v}(F)-1}j\\
        &\le \frac{2c\mathsf{e}(F)}{n}
        +\frac{\mathsf{v}(F)^2}{n}
        =O\left(\frac{d_n}{n}+\frac{d_n^2}{n}\right)
        =o(1).
    \end{align*}
    Exponentiating proves the uniform ratio estimate and hence
    \eqref{eq:leading-term}.
\end{proof}
\subsection{Direct Consequences of Low-Degree Indistinguishability}

In this subsection, we establish two properties of $\Gamma_n$ that follow directly from \eqref{eq:low-degree-condition}. We first control the number of edges in $\Gamma_n$.

\begin{lemma}
    Suppose that $D_n\ge 1$ for all sufficiently large $n$ and that
    \eqref{eq:low-degree-condition} holds.
    Then
    \begin{equation*}
        \mathsf{e}(\Gamma_n)=o(\sqrt{n}).
    \end{equation*}
    Consequently, the bounds in \eqref{eq:leading-term} hold uniformly over all nonempty graphs $F$ with $N_{\Gamma_n}(F)>0$ and no isolated vertices.
    \label{lem:edge}
\end{lemma}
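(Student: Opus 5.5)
The plan is to isolate the degree-one term of \eqref{eq:ldlr-2}. For $n$ large we have $D_n\ge1$, so the abstract graph $F=K_2$ appears in the sum whenever $\Gamma_n$ has an edge; if $\Gamma_n$ has no edges the claim is trivial. Since every summand in \eqref{eq:ldlr-2} is nonnegative, \eqref{eq:low-degree-condition} gives
\[
    0\le\alpha_n(K_2)\le\left\lVert L_n^{\le D_n}\right\rVert_{L^2(Q_n)}^2-1=o(1).
\]
Now compute $\alpha_n(K_2)$ exactly. We have $N_{\Gamma_n}(K_2)=\mathsf e(\Gamma_n)$, $|\Aut(K_2)|=2$, $\mathsf v(K_2)=2$ and $\mathsf e(K_2)=1$. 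Hence
\[
    \alpha_n(K_2)=\frac{1-c/n}{c/n}\cdot\frac{2\,\mathsf e(\Gamma_n)^2}{n(n-1)}
    =\frac{2(1-c/n)}{c(n-1)}\,\mathsf e(\Gamma_n)^2 .
\]
For $n\ge 2c$ the prefactor is at least $1/(cn)$, so $\mathsf e(\Gamma_n)^2\le cn\,\alpha_n(K_2)=o(n)$, that is, $\mathsf e(\Gamma_n)=o(\sqrt n)$. This direct computation avoids invoking \cref{lem:leading-term}, though that lemma with $d_n=1$ would give the same conclusion.

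For the consequence, every nonempty $F$ with $N_{\Gamma_n}(F)>0$ is isomorphic to a subgraph of $\Gamma_n$, so $\mathsf e(F)\le\mathsf e(\Gamma_n)$. Set $d_n:=\max(\mathsf e(\Gamma_n),1)=o(\sqrt n)$. Then every such $F$ satisfies $\mathsf e(F)\le d_n$. Applying \cref{lem:leading-term} with this $d_n$ (the sequence $\Gamma_n$ is fixed, so $d_n$ is a legitimate deterministic sequence) gives \eqref{eq:leading-term} uniformly over all of them for $n$ large.

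There is no real obstacle. The only points needing care are the nonnegativity of the summands, which is what lets a single term be bounded by the total, and the observation that $d_n$ depends on $\Gamma_n$ only through $\mathsf e(\Gamma_n)$, so the uniformity in \cref{lem:leading-term} applies.
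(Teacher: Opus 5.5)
Your proof is correct and follows the paper's argument essentially step for step. You isolate the nonnegative $K_2$ summand of \eqref{eq:ldlr-2} to get $\mathsf e(\Gamma_n)=o(\sqrt n)$, then apply \cref{lem:leading-term} with $d_n=\max\{1,\mathsf e(\Gamma_n)\}$; the only cosmetic difference is that you bound $\alpha_n(K_2)$ by direct computation, whereas the paper passes through $\beta_n(K_2)\le 2\alpha_n(K_2)$.
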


\begin{proof}    
    If $\mathsf e(\Gamma_n)=0$, there is nothing to prove. Otherwise,
    $N_{\Gamma_n}(K_2)>0$, and since $D_n\ge1$ for all sufficiently large $n$, the term
    corresponding to $K_2$ appears in \eqref{eq:ldlr-2}.
    All summands in \eqref{eq:ldlr-2} are nonnegative, so \eqref{eq:low-degree-condition} implies $\alpha_n(K_2)=o(1)$.
    Since $\mathsf{e}(K_2)=1$, Lemma \ref{lem:leading-term}, applied with $d_n\equiv 1$, gives
    \begin{equation*}
        \beta_n(K_2)\le 2\alpha_n(K_2)=o(1).
    \end{equation*}
    Moreover, $|\mathrm{Aut}(K_2)|=2$ and $N_{\Gamma_n}(K_2)=\mathsf{e}(\Gamma_n)$, and hence
    \begin{equation*}
        \beta_n(K_2)=\frac{2}{cn}\mathsf{e}(\Gamma_n)^2.
    \end{equation*}
    It follows that $\mathsf{e}(\Gamma_n)=o(\sqrt{n})$.

    Finally, set $d_n:=\max\{1,\mathsf{e}(\Gamma_n)\}=o(\sqrt{n})$. 
    Every nonempty abstract graph $F$ with $N_{\Gamma_n}(F)>0$ satisfies $\mathsf{e}(F)\le d_n$. 
    The uniform conclusion of Lemma \ref{lem:leading-term} therefore proves the second claim.
\end{proof}

For $0<c\le 1$, low-degree indistinguishability also rules out short cycles.

\begin{lemma}
    Fix $c\in(0,1]$. Suppose that $D_n=\omega(\log n)$ and that
    \eqref{eq:low-degree-condition} holds. 
    Then, for all sufficiently large $n$,
    \begin{equation*}
        N_{\Gamma_n}(C_\ell)=0
        \qquad\text{for every integer }3\le \ell\le D_n,
    \end{equation*}
    where $C_\ell$ denotes the cycle of length $\ell$.
    \label{lem:girth}
\end{lemma}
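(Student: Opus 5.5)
We argue by contradiction, using only the nonnegativity of the summands in \eqref{eq:ldlr-2} together with \cref{lem:edge}. Suppose that, for infinitely many $n$, $\Gamma_n$ contains a cycle $C_\ell$ with $3\le\ell\le D_n$. For each such $n$, fix one such $\ell=\ell_n$. Then $N_{\Gamma_n}(C_\ell)\ge1$ and $\mathsf e(C_\ell)=\ell\le D_n$, so the abstract graph $C_\ell$ indexes one of the summands in \eqref{eq:ldlr-2}. Because every summand is nonnegative, \eqref{eq:low-degree-condition} yields
\[
\alpha_n(C_\ell)\le \left\lVert L_n^{\le D_n}\right\rVert_{L^2(Q_n)}^2-1=o(1).
\]
This bound is uniform in the choice of $\ell$, since it is dominated by the whole low-degree sum.

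Next we compute $\beta_n(C_\ell)$ exactly. A cycle has $\mathsf v(C_\ell)=\mathsf e(C_\ell)=\ell$, so $n^{\mathsf e-\mathsf v}=1$. Its automorphism group is dihedral, so $|\Aut(C_\ell)|=2\ell$. Hence
\[
\beta_n(C_\ell)=\frac{2\ell}{c^\ell}N_{\Gamma_n}(C_\ell)^2\ge 2\ell\ge 6,
\]
where we use $c\le1$, so that $c^{-\ell}\ge1$. By \cref{lem:edge}, the comparison \eqref{eq:leading-term} holds uniformly over all nonempty subgraphs of $\Gamma_n$ for all sufficiently large $n$. Applying it gives $\alpha_n(C_\ell)\ge\tfrac12\beta_n(C_\ell)\ge 3$.

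This lower bound contradicts $\alpha_n(C_\ell)=o(1)$ once $n$ is large, so there is a threshold beyond which no $n$ has such a cycle. That is exactly the statement for all sufficiently large $n$.

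The only point that needs care is uniformity. The threshold for \eqref{eq:leading-term} must not depend on $\ell$, and this is exactly what \cref{lem:edge} provides: cycles in $\Gamma_n$ have at most $\mathsf e(\Gamma_n)=o(\sqrt n)$ edges. Likewise, the $o(1)$ bound on the low-degree mass controls every summand at once. The hypothesis $D_n=\omega(\log n)$ is not actually needed here beyond $D_n\ge1$. It matters only for the later long-cycle argument.
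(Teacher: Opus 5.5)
Your proof is correct and is essentially the paper's argument: nonnegativity of the summands in \eqref{eq:ldlr-2} gives $\alpha_n(C_\ell)=o(1)$, the uniform comparison from \cref{lem:edge} gives $\alpha_n(C_\ell)\ge\tfrac12\beta_n(C_\ell)$, and $\beta_n(C_\ell)=2\ell c^{-\ell}N_{\Gamma_n}(C_\ell)^2\ge 2\ell$, so $\alpha_n(C_\ell)\ge 3$, a contradiction. Your remark that $D_n=\omega(\log n)$ plays no role here is also accurate; the paper's proof does not use it either.
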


\begin{proof}
    Suppose, to the contrary, that for some integer
    $3\le \ell\le D_n$, the graph $\Gamma_n$ contains a cycle $C_\ell$. Then $N_{\Gamma_n}(C_\ell)\ge 1$, and Lemma \ref{lem:edge} gives
    \begin{equation*}
        \alpha_n(C_\ell)
        \ge\frac{1}{2}\beta_n(C_\ell).
    \end{equation*}
    Since $\mathsf{v}(C_\ell)=\mathsf{e}(C_\ell)=\ell$ and $|\mathrm{Aut}(C_\ell)|=2\ell$, the definition of $\beta_n$ yields
    \begin{equation*}
        \beta_n(C_\ell)=\frac{2\ell}{c^\ell}N_{\Gamma_n}(C_\ell)^2\ge 2\ell,
    \end{equation*}
    where the inequality uses $0<c\le 1$. 
    It follows from \eqref{eq:ldlr-2} that
    \begin{equation*}
        \left\lVert L_n^{\le D_n}\right\rVert_{L^2(Q_n)}^2-1 \ge\alpha_n(C_\ell)\ge \ell\ge 3,
    \end{equation*}
    contradicting \eqref{eq:low-degree-condition}. 
    Thus, for all sufficiently large $n$, no such cycle exists.
\end{proof}
\section{Isomorphism Triples and Proofs of the Main Results}
\label{section:proof}
In this section, we prove Theorems \ref{thm:main} and \ref{thm:subcritical}. Comparing \eqref{eq:ldlr-2} with \eqref{eq:noise-lr}, we see that the main task is to control the contribution of subgraphs in \eqref{eq:noise-lr} with more than $D_n$ edges. 
Our strategy is to decompose each such subgraph into pieces in the low-degree range and relate its weight to the weights of those pieces.

\subsection{Isomorphism Triples and Low-Interface Decomposition}

Under \eqref{eq:low-degree-condition}, Lemmas
\ref{lem:leading-term} and \ref{lem:edge} imply that
\begin{equation}
    \sum\limits_{\substack{F:\,N_{\Gamma_n}(F)>0\\1\le\mathsf{e}(F)\le D_n}}
    \beta_n(F)=o(1).
    \label{eq:beta-low-degree}
\end{equation}
Moreover, by \eqref{eq:noise-lr} and Lemma \ref{lem:edge}, it suffices
for Theorem \ref{thm:main} to prove that
\begin{equation}
    \sum\limits_{\substack{F:\,N_{\Gamma_n}(F)>0\\
    1\le\mathsf{e}(F)}}
    (1-\varepsilon)^{2\mathsf{e}(F)}\beta_n(F)=o(1).
    \label{eq:beta-noised-target}
\end{equation}
The terms with $\mathsf{e}(F)\le D_n$ are already controlled by
\eqref{eq:beta-low-degree}. It therefore remains to control the
contribution of graphs with $\mathsf{e}(F)>D_n$. For each such graph
$F$, we decompose a concrete copy of $F$ in $\Gamma_n$ into subgraphs
having between $1$ and $D_n$ edges and then relate $\beta_n(F)$ to the
weights associated with these subgraphs.
Define
\begin{equation*}
    \mathcal{C}_{\Gamma_n}(F):=\left\{F^\circ\subset\Gamma_n:F^\circ\simeq F\right\}.
\end{equation*}
Thus, $\mathcal{C}_{\Gamma_n}(F)$ is the collection of all subgraphs of $\Gamma_n$ isomorphic to $F$, and $|\mathcal{C}_{\Gamma_n}(F)|=N_{\Gamma_n}(F)$. 
For $F^\circ\in\mathcal{C}_{\Gamma_n}(F)$,
an \textit{ordered edge decomposition} of $F^\circ$ is an ordered tuple
\begin{equation}
    \mathcal{T}(F^\circ)=(T_1^\circ,T_2^\circ,\ldots,T_m^\circ)
    \label{eq:def-edge-decomposition}
\end{equation}
of nonempty subgraphs $T_i^\circ\subset F^\circ$ whose edge sets form an ordered partition of $E(F^\circ)$, namely,
\begin{equation*}
    E(F^\circ)=E(T_1^\circ)\sqcup E(T_2^\circ)\sqcup\cdots\sqcup E(T_m^\circ).
\end{equation*}
Each $T_i^\circ$ is specified by its edge set together with its incident vertices and hence has no isolated vertices.
We write $T_i$ for the isomorphism class of $T_i^\circ$. 
Passing \eqref{eq:def-edge-decomposition} to the abstract graph gives
\begin{equation*}
    \mathcal{T}(F)=(T_1,T_2,\ldots,T_m).
\end{equation*}
Whenever an ordered edge decomposition $\mathcal{T}(F)$ of a nonempty abstract graph $F$ with $N_{\Gamma_n}(F)>0$ is specified, an ordered edge decomposition of $F^\circ$ is called \textit{compatible} with $\mathcal{T}(F)$ if $T_i^\circ\in\mathcal{C}_{\Gamma_n}(T_i)$ for every $i=1,2,\ldots,m$.
For each $F^\circ\in\mathcal{C}_{\Gamma_n}(F)$, we fix one such compatible ordered edge decomposition $\mathcal{T}(F^\circ)$.
All maps and bounds below are relative to this specified decomposition.

With ordered edge decompositions fixed, we next give a counting interpretation of $\beta_n(F)$ that will relate it to $\beta_n(T_1),\ldots,\beta_n(T_m)$.
For $F_1^\circ,F_2^\circ\in\mathcal{C}_{\Gamma_n}(F)$, let
$\psi:V(F_1^\circ)\to V(F_2^\circ)$ be a graph isomorphism; that is, $\psi$ is a bijection satisfying
\begin{equation*}
    \{u,v\}\in E(F_1^\circ)
    \quad\Longleftrightarrow\quad
    \{\psi(u),\psi(v)\}\in E(F_2^\circ).
\end{equation*}
We then define the set of \textit{isomorphism triples} associated with the abstract graph $F$ by
\begin{equation*}
    \mathrm{Tri}(F)
    :=\left\{
        (F_1^\circ,F_2^\circ,\psi):
        F_1^\circ,F_2^\circ\in\mathcal{C}_{\Gamma_n}(F),
        \psi:F_1^\circ\xrightarrow{\simeq}F_2^\circ
    \right\}.
\end{equation*}
\begin{lemma}
    For every nonempty abstract graph $F$ with $N_{\Gamma_n}(F)>0$,
    \begin{equation*}
        |\mathrm{Tri}(F)|=N_{\Gamma_n}(F)^2|\mathrm{Aut}(F)|.
    \end{equation*}
    Furthermore, we have 
    \begin{equation*}
        \beta_n(F)=\frac{n^{\mathsf{e}(F)-\mathsf{v}(F)}}{c^{\mathsf{e}(F)}}|\mathrm{Tri}(F)|.
    \end{equation*}
    \label{lem:isomorphism-triples}
\end{lemma}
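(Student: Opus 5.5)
The proof is a direct double count, organized by the pair of concrete copies. First I will write $\mathrm{Tri}(F)$ as a disjoint union over ordered pairs $(F_1^\circ,F_2^\circ)\in\mathcal{C}_{\Gamma_n}(F)^2$ of the sets $\mathrm{Iso}(F_1^\circ,F_2^\circ)$ of graph isomorphisms $F_1^\circ\to F_2^\circ$. There are exactly $N_{\Gamma_n}(F)^2$ such ordered pairs, since $|\mathcal{C}_{\Gamma_n}(F)|=N_{\Gamma_n}(F)$. It therefore suffices to show that $|\mathrm{Iso}(F_1^\circ,F_2^\circ)|=|\mathrm{Aut}(F)|$ for every fixed pair.

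For the fixed pair, both copies are isomorphic to $F$, so $\mathrm{Iso}(F_1^\circ,F_2^\circ)$ is nonempty. Choose some $\psi_0$ in it. The map
\[
\sigma\mapsto\psi_0\circ\sigma
\]
sends $\mathrm{Aut}(F_1^\circ)$ to $\mathrm{Iso}(F_1^\circ,F_2^\circ)$. Its inverse is $\psi\mapsto\psi_0^{-1}\circ\psi$, because compositions and inverses of isomorphisms are isomorphisms; hence it is a bijection. Finally, $\mathrm{Aut}(F_1^\circ)\cong\mathrm{Aut}(F)$ by conjugating with any isomorphism $F_1^\circ\to F$. Here I use the paper's convention that $\Aut(F)$ means the automorphism group of any representative without isolated vertices, which is well defined up to isomorphism.

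One point needs care, and it is the only step I would call an obstacle: isomorphisms are taken on vertex sets, and a concrete copy $F^\circ$ is determined by its edge set together with its incident vertices. Since there are no isolated vertices, a vertex bijection preserving edges in both directions is the same object as the isomorphisms counted by $\Aut(F)$. So no extra multiplicity from isolated vertices enters.

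Summing over pairs gives $|\mathrm{Tri}(F)|=N_{\Gamma_n}(F)^2|\Aut(F)|$. The second identity then follows by substituting this into the definition
\[
\beta_n(F)=n^{\mathsf e(F)-\mathsf v(F)}c^{-\mathsf e(F)}N_{\Gamma_n}(F)^2|\Aut(F)|.
\]
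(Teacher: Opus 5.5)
Your proof is correct and matches the paper's argument: split $\mathrm{Tri}(F)$ over the $N_{\Gamma_n}(F)^2$ ordered pairs of concrete copies, observe that each pair admits exactly $|\mathrm{Aut}(F)|$ isomorphisms, and substitute into the definition of $\beta_n$. Your explicit bijection $\sigma\mapsto\psi_0\circ\sigma$ spells out the step the paper states in a single line.
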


\begin{proof}
    There are $N_{\Gamma_n}(F)^2$ ordered pairs $(F_1^\circ,F_2^\circ)$ in $\mathcal{C}_{\Gamma_n}(F)\times \mathcal{C}_{\Gamma_n}(F)$. 
    For each such pair, the set of graph isomorphisms from $F_1^\circ$ to $F_2^\circ$ is in bijection with $\mathrm{Aut}(F)$ and therefore contains exactly $|\mathrm{Aut}(F)|$ maps. 
    This proves the first identity. The second follows immediately from the definition of $\beta_n(F)$.
\end{proof}

We now apply this counting interpretation to a specified ordered edge decomposition. 
Let $(F_1^\circ,F_2^\circ,\psi)\in\mathrm{Tri}(F)$, and let
\begin{equation*}
    \mathcal{T}(F_1^\circ)
    =\left(T_{1,1}^\circ,T_{1,2}^\circ,\ldots,T_{1,m}^\circ\right)
\end{equation*}
be the chosen compatible ordered edge decomposition of $F_1^\circ$. 
For $i=1,2,\ldots,m$, let
\begin{equation*}
    T_{2,i}^\circ:=\psi(T_{1,i}^\circ),
    \qquad
    \psi_i=\psi\mid_{T_{1,i}^\circ}.
\end{equation*}
Then $(T_{1,i}^\circ,T_{2,i}^\circ,\psi_i)\in\mathrm{Tri}(T_i)$.
These triples define the following map.
\begin{equation}
    \begin{aligned}
        \Delta_F:& \mathrm{Tri}(F)
        &\qquad \longrightarrow \qquad &
        \mathrm{Tri}(T_1)\times\mathrm{Tri}(T_2)\times\cdots
        \times\mathrm{Tri}(T_m),\\  &
        (F_1^\circ,F_2^\circ,\psi)
        &\qquad \longmapsto \qquad &
        \left(
            (T_{1,1}^\circ,T_{2,1}^\circ,\psi_1),
            (T_{1,2}^\circ,T_{2,2}^\circ,\psi_2),
            \cdots,
            (T_{1,m}^\circ,T_{2,m}^\circ,\psi_m)
        \right).
    \end{aligned}
    \label{eq:isomorphism-map}
\end{equation}

The following lemma gives the corresponding product bound for the triple counts.

\begin{lemma}
    For every nonempty abstract graph $F$ with $N_{\Gamma_n}(F)>0$, the map $\Delta_F$ defined in \eqref{eq:isomorphism-map} is injective.
    Consequently, we have 
    \begin{equation*}
        |\mathrm{Tri}(F)|\le \prod_{i=1}^m |\mathrm{Tri}(T_i)|.
    \end{equation*}
    \label{lem:map-injection}
\end{lemma}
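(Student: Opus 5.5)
To prove Lemma~\ref{lem:map-injection}, I will take two triples with the same image under $\Delta_F$ and show that they coincide. The product bound then follows at once, since an injective map cannot have a domain larger than its codomain. So suppose $(F_1^\circ,F_2^\circ,\psi)$ and $(G_1^\circ,G_2^\circ,\varphi)$ in $\mathrm{Tri}(F)$ satisfy $\Delta_F(F_1^\circ,F_2^\circ,\psi)=\Delta_F(G_1^\circ,G_2^\circ,\varphi)$. Write the image tuple as $\bigl((T_{1,i}^\circ,T_{2,i}^\circ,\psi_i)\bigr)_{i=1}^m$. I will recover each of the three coordinates of the triple from this tuple alone.

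\textbf{First coordinate.} The chosen compatible decomposition $\mathcal{T}(F_1^\circ)$ has edge sets that partition $E(F_1^\circ)$. Hence
\[
E(F_1^\circ)=\bigsqcup_i E(T_{1,i}^\circ),
\]
and the same identity holds for $G_1^\circ$ with the same pieces. Each $F_1^\circ$ is a subgraph determined by its edge set together with its incident vertices, so $F_1^\circ=G_1^\circ$. A subtle point deserves a remark. The decomposition was fixed per concrete copy, so one might worry whether the two triples use different decompositions. But the first coordinates of the image are literally the pieces $T_{1,i}^\circ$. Their union of edges is determined by the image itself, so this causes no difficulty.

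\textbf{Second coordinate.} Since $\psi$ is an isomorphism, it maps edges bijectively. Therefore
\[
E(F_2^\circ)=\psi(E(F_1^\circ))=\bigsqcup_i \psi(E(T_{1,i}^\circ))=\bigsqcup_i E(T_{2,i}^\circ),
\]
and likewise $E(G_2^\circ)=\bigsqcup_i E(T_{2,i}^\circ)$. This gives $F_2^\circ=G_2^\circ$.

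\textbf{The map.} It remains to show $\psi=\varphi$, and this is the step needing the most care. Here I will use that $F$ has no isolated vertices. Every $v\in V(F_1^\circ)$ is incident to some edge, and that edge lies in some $E(T_{1,i}^\circ)$, so $v\in V(T_{1,i}^\circ)$. For that $i$,
\[
\psi(v)=\psi_i(v)=\varphi(v),
\]
because $\psi_i$ is simultaneously the restriction of $\psi$ and of $\varphi$ to $V(T_{1,i}^\circ)$. The choice of $i$ is irrelevant when $v$ lies in several pieces, since every $\psi_i$ is a restriction of the same map. Hence $\psi=\varphi$ on all of $V(F_1^\circ)$, and the two triples are equal.

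Finally, I will note that $\Delta_F$ is well defined, i.e.\ each image coordinate really lies in $\mathrm{Tri}(T_i)$. The first coordinate $T_{1,i}^\circ$ lies in $\mathcal{C}_{\Gamma_n}(T_i)$ by compatibility. The map $\psi_i$ is an isomorphism from $T_{1,i}^\circ$ onto its image $\psi(T_{1,i}^\circ)\subseteq F_2^\circ\subseteq\Gamma_n$, which is therefore also a copy of $T_i$ in $\Gamma_n$. With well-definedness and injectivity established, injectivity yields $|\mathrm{Tri}(F)|\le\prod_i|\mathrm{Tri}(T_i)|$.
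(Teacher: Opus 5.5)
Your proof is correct and follows essentially the same route as the paper. You recover $F_1^\circ$ and $F_2^\circ$ as the unions of the first and second coordinates of the image tuple, and you recover $\psi$ pointwise from the local maps $\psi_i$, using that $F$ has no isolated vertices. Your explicit check that $\Delta_F$ lands in $\prod_i\mathrm{Tri}(T_i)$ is a reasonable addition to what the paper leaves implicit.
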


\begin{proof}
    Consider any tuple
    \begin{equation*}
        \left(
            (T_{1,1}^\circ,T_{2,1}^\circ,\psi_1),
            (T_{1,2}^\circ,T_{2,2}^\circ,\psi_2),
            \cdots,
            (T_{1,m}^\circ,T_{2,m}^\circ,\psi_m)
        \right)
        \in\Delta_F\left(\mathrm{Tri}(F)\right).
    \end{equation*}
    Since the edge sets of the first coordinates form an ordered edge decomposition of $F_1^\circ$, the first coordinates recover
    \begin{equation*}
        F_1^\circ=\bigcup_{i=1}^m T_{1,i}^\circ.
    \end{equation*}
    Similarly, since $F_2^\circ=\psi(F_1^\circ)$, the second coordinates recover
    \begin{equation*}
        F_2^\circ=\bigcup_{i=1}^m T_{2,i}^\circ.
    \end{equation*}
    Moreover, because $F_1^\circ$ has no isolated vertices, every vertex $v\in V(F_1^\circ)$ belongs to $V(T_{1,j}^\circ)$ for some $j\in\{1,2,\ldots,m\}$.
    Hence, $\psi(v)$ is determined by $\psi(v)=\psi_j(v)$. 
    Consequently, the tuple uniquely determines
    $(F_1^\circ,F_2^\circ,\psi)$, and therefore $\Delta_F$ is injective.
\end{proof}

The injectivity established above continues to hold when the domain is enlarged from a single abstract graph to the class of all nonempty abstract graphs.

\begin{lemma}
    Let $\mathcal F$ be a collection of nonempty abstract graphs, and fix
    $m\ge1$. For each $F\in\mathcal F$, specify an ordered edge
    decomposition
    \[
        \mathcal T(F)=(T_1(F),\ldots,T_m(F))
    \]
    and compatible decompositions of its concrete copies. Then the maps
    $\Delta_F$ combine to an injection
    \begin{equation*}
        \coprod_{F\in\mathcal F}\mathrm{Tri}(F)
        \longrightarrow
        \coprod_{(T_1,\ldots,T_m)}
        \prod_{i=1}^m\mathrm{Tri}(T_i),
    \end{equation*}
    where the disjoint union on the right is over the ordered tuples of
    piece types arising from the specified decompositions.
    \label{lem:aggregate-injection}
\end{lemma}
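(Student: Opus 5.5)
The map is defined summand by summand: a triple $(F_1^\circ,F_2^\circ,\psi)\in\mathrm{Tri}(F)$ is sent to $\Delta_F(F_1^\circ,F_2^\circ,\psi)$, placed in the summand indexed by the ordered tuple $(T_1(F),\ldots,T_m(F))$. Two triples lying in the same summand $\mathrm{Tri}(F)$ have distinct images by Lemma~\ref{lem:map-injection}. So the only remaining task is to rule out a collision between triples coming from two different abstract graphs $F\neq F'$ in $\mathcal F$.

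Suppose $\Delta_F(F_1^\circ,F_2^\circ,\psi)=\Delta_{F'}(F_1'^\circ,F_2'^\circ,\psi')$ as elements of the disjoint union. First, they must lie in the same summand, so the tuples of piece types coincide. More importantly, the images agree as concrete tuples of local triples $(T_{1,i}^\circ,T_{2,i}^\circ,\psi_i)$, $i=1,\ldots,m$. The first coordinates are concrete subgraphs of $\Gamma_n$. Because they come from an ordered edge decomposition, their union is exactly $F_1^\circ$; the same argument applied to the other image gives $F_1'^\circ$. Hence $F_1^\circ=F_1'^\circ$ as concrete subgraphs of $\Gamma_n$. In particular they are isomorphic, so the abstract graphs coincide: $F=F'$. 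This contradicts $F\neq F'$.

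It follows that collisions can occur only within a single summand, where Lemma~\ref{lem:map-injection} already excludes them. The only point needing care is that the argument must reconstruct the global concrete graph directly from the concrete pieces, and not merely from the abstract piece types. The abstract tuple $(T_1,\ldots,T_m)$ alone does not record the gluing. It is the retention of the concrete subgraphs $T_{1,i}^\circ\subset\Gamma_n$ in the codomain that makes the union well defined and identifies $F$. I expect no further obstacle.
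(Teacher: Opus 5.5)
Your proposal is correct and follows essentially the same route as the paper. In both, the union of the concrete first coordinates recovers $F_1^\circ$ and hence the type $F$, and the local maps, via Lemma~\ref{lem:map-injection}, recover the rest of the triple. You split this into a cross-summand case and a within-summand case, whereas the paper reconstructs everything in a single pass.
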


\begin{proof}
    A tuple in the image recovers the two global concrete graphs by taking
    the unions of its first and second coordinates. The local maps recover
    the global isomorphism because every vertex lies in at least one piece.
    Finally, the isomorphism type of the first global graph recovers $F$.
    Thus distinct elements of the disjoint-union domain cannot have the same
    image.
\end{proof}

For any specified ordered edge decomposition $\mathcal T(F)$ of a nonempty
abstract graph $F$ with $N_{\Gamma_n}(F)>0$, Lemmas
\ref{lem:isomorphism-triples} and \ref{lem:map-injection} together imply that
\begin{equation}
    \beta_n(F)\le n^{\mathrm{ov}(F,\mathcal{T}(F))}\prod_{i=1}^m \beta_n(T_i),
    \label{eq:beta_overlap_bound}
\end{equation}
where
\begin{equation*}
    \mathrm{ov}(F,\mathcal{T}(F)):=\sum\limits_{i=1}^m \mathsf{v}(T_i)-\mathsf{v}(F)
\end{equation*}
is the vertex-overlap count of the ordered edge decomposition $\mathcal{T}(F)$. 
To make this bound effective, we next consider decompositions whose pieces have controlled sizes and whose vertex-overlap count is small. 
Let $\ell,C,a$ be positive integers. 
We say that $\Gamma_n$ is $(\ell,C,a)$-\textit{divisible} if, for every connected graph $F$ with $N_{\Gamma_n}(F)>0$ and $\mathsf{e}(F)\ge \ell$, there exists an ordered edge decomposition \begin{equation*}
    \mathcal{T}(F)=(T_1,T_2,\ldots,T_m)
\end{equation*}
such that
\begin{equation*}
    \ell\le\mathsf{e}(T_i)\le C\ell\quad\text{for every }i=1,2,\ldots,m,\qquad\mathrm{ov}(F,\mathcal{T}(F))\le a(m-1).
\end{equation*}
If $\Gamma_n$ is $(\ell,C,a)$-\textit{divisible}, then every graph $F$ with $N_{\Gamma_n}(F)>0$ can be assigned an ordered edge decomposition:
decompose each connected component with at least $\ell$ edges as above,
and take each remaining connected component as a single member of the decomposition. 
The following lemma shows that bounded treewidth guarantees this property. 
Its proof is deferred to Section
\ref{section:decomposition}.

\begin{lemma}
    Let $k,\ell$ be positive integers. Every graph $\Gamma$ with treewidth at most $k$ is $(\ell,3,k+1)$-divisible.
    \label{lem:decomposition}
\end{lemma}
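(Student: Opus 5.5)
Every connected subgraph $F\subseteq\Gamma$ with $\mathsf e(F)\ge\ell$ has $\operatorname{tw}(F)\le k$, since treewidth is monotone under subgraphs. So it suffices to show the following: every connected graph $F$ of treewidth at most $k$ and with at least $\ell$ edges admits an edge partition into pieces with $\ell\le\mathsf e(T_i)\le3\ell$ and $\mathrm{ov}\le(k+1)(m-1)$. The plan follows the three stages sketched in the overview.

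\textbf{Refining the decomposition.} Start from a width-$k$ tree decomposition $(\mathcal X,\{B_x\})$ of $F$ and refine it without increasing width.
\begin{enumerate}[label=(\roman*)]
\item Make the tree subcubic. Replace each node $x$ of degree $d>3$ by a path of $d-2$ copies of $B_x$, each attached to at most one original neighbor. Copies of a bag preserve all decomposition axioms.
\item Make each node own at most one edge. Assign each edge $uv$ of $F$ to one node whose bag contains $\{u,v\}$. If a node owns $s>1$ edges, hang $s-1$ extra pendant copies of its bag, possibly in a path to keep the degree at most $3$, and move one owned edge to each copy.
\end{enumerate}
The result is a subcubic tree $\mathcal X$ with $0$--$1$ weights $w(x)$ (ownership) of total weight $\mathsf e(F)\ge\ell$, and every bag has at most $k+1$ vertices.

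\textbf{Greedy clustering.} I will prove a partition lemma: a tree of maximum degree $3$ with $0$--$1$ node weights and total weight at least $\ell$ can be partitioned into connected clusters, each of weight in $[\ell,3\ell]$.

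Root the tree and process it bottom-up. At each node $x$, carry the residual weight of unassigned nodes in its subtree that remain connected to $x$; maintain the invariant that this residual is less than $\ell$. At $x$ the residual is $w(x)$ plus at most two child residuals, so it is less than $1+2\ell$, hence at most $2\ell$. If it reaches $\ell$, cut off $x$ together with its residual as a cluster, of weight between $\ell$ and $2\ell$.

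At the root, a leftover of weight less than $\ell$ may remain. It is connected to some adjacent cluster, so merge the two; the merged weight is less than $3\ell$. If no cluster was ever formed, the whole tree is a single cluster of weight $\mathsf e(F)\ge\ell$. This causes no trouble: take $m=1$, and the upper bound $3\ell$ is not needed, since the requirement for $m=1$ is only $\mathsf e(F)\ge\ell$. Checking the weight bounds and connectivity carefully in this bookkeeping is the main, though routine, obstacle. In particular, zero-weight nodes must be assigned to clusters without breaking connectivity, and the leftover after the final merge must be handled correctly.

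\textbf{Overlap bound.} Let $T_i$ be the subgraph formed by the edges owned by cluster $i$. Contracting the clusters gives a quotient tree with $m-1$ edges, each coming from a unique tree edge $xy$. A vertex $v$ of $F$ counted in $r$ pieces lies in bags of $r$ distinct clusters. The nodes whose bags contain $v$ form a connected subtree, and this subtree meets those clusters, so its image in the quotient tree is connected on at least $r$ nodes. It therefore uses at least $r-1$ quotient edges $xy$, with $v\in B_x\cap B_y$ for each. Hence
\[
\mathrm{ov}=\sum_v (r_v-1)\le\sum_{\text{quotient edges }xy}|B_x\cap B_y|\le(k+1)(m-1).
\]
Each piece has its edge count in $[\ell,3\ell]$, which gives the required decomposition and completes the $(\ell,3,k+1)$-divisibility.
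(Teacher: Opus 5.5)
Your proposal is correct and follows essentially the paper's route. Both proofs make a subcubic, edge-owned refinement of a width-$k$ tree decomposition, though the paper does both refinements at once by replacing each node $x$ with a subcubic tree $R_x$ that carries $d_x+r_x$ leaf slots for ports and owned edges. Both then greedily cut the $0$--$1$ weighted tree bottom-up into connected clusters of weight in $[\ell,3\ell]$, and conclude with the quotient-tree bound $\sum_v(|I_v|-1)\le\sum_{ij\in E(\mathsf Q)}|B_{x_{ij}}\cap B_{y_{ij}}|\le(k+1)(m-1)$.

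The slips are all cosmetic:
\begin{itemize}
\item In (i), the two end copies of the path must each take two original neighbors; otherwise $d-2$ copies host only $d-2$ neighbors.
\item In (ii), a degree-$3$ node needs an incident edge subdivided by a copy of its bag before a pendant path can be hung.
\item Your ``no cluster formed'' branch cannot occur, since the root's residual would then be the full weight $\ge\ell$ and would be cut. This is just as well, because divisibility requires $\mathsf e(T_i)\le3\ell$ even when $m=1$.
\end{itemize}
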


\subsection{Bounded-Treewidth Planted Graphs}
In this subsection, we complete the proof of Theorem \ref{thm:main}, assuming Lemma \ref{lem:decomposition}. 

We begin with a cluster-expansion argument that reduces the sum in \eqref{eq:beta-noised-target} to its contribution from connected graphs.
For simplicity, write
\begin{equation*}
    \theta_{n,\varepsilon}:=\sum\limits_{\substack{F:\,N_{\Gamma_n}(F)>0\\F\text{ connected}}}(1-\varepsilon)^{2\mathsf{e}(F)}\beta_n(F).
\end{equation*}

\begin{lemma}
    \label{lem:cluster-expansion}
    Suppose
    that $\theta_{n,\varepsilon}=o(1)$. 
    Then, for all sufficiently large $n$,
    \begin{equation*}
        \sum\limits_{\substack{F:\,N_{\Gamma_n}(F)>0\\1\le\mathsf{e}(F)}}(1-\varepsilon)^{2\mathsf{e}(F)}\beta_n(F)\le\frac{\theta_{n,\varepsilon}}{1-\theta_{n,\varepsilon}}
        =o(1).
    \end{equation*}
\end{lemma}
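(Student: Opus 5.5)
We bound each summand for a disconnected $F$ by a product over its connected components, then sum over all multisets of connected components. Write $F=F^{(1)}\sqcup\cdots\sqcup F^{(m)}$ for the decomposition of a nonempty $F$ with $N_{\Gamma_n}(F)>0$ into connected components, listed in some fixed order. Each component is itself a subgraph of $\Gamma_n$, so $N_{\Gamma_n}(F^{(j)})>0$. Take the ordered edge decomposition $\mathcal T(F)=(F^{(1)},\ldots,F^{(m)})$. The pieces are vertex-disjoint, so $\mathrm{ov}(F,\mathcal T(F))=0$, and the edge counts add. Combining \eqref{eq:beta_overlap_bound} with this gives
\[
(1-\varepsilon)^{2\mathsf e(F)}\beta_n(F)\le\prod_{j=1}^m(1-\varepsilon)^{2\mathsf e(F^{(j)})}\beta_n(F^{(j)}).
\]

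Next we sum over $F$ with $m$ components. Lemma~\ref{lem:aggregate-injection} says that distinct $F$ produce distinct images under the combined triple injection. At the level of weights, it is cleaner to note that the ordered tuple of component types $(F^{(1)},\ldots,F^{(m)})$ determines $F$, since $F$ is their disjoint union. Hence the map $F\mapsto$ (a chosen ordering of its component types) is injective from graphs with $m$ components into ordered $m$-tuples of connected types $H$ with $N_{\Gamma_n}(H)>0$. Summing the product bound over all such $m$-tuples therefore overcounts. This gives
\[
\sum_{F:\ m\text{ components}}(1-\varepsilon)^{2\mathsf e(F)}\beta_n(F)\le\Bigl(\sum_{H\text{ connected},\,N_{\Gamma_n}(H)>0}(1-\varepsilon)^{2\mathsf e(H)}\beta_n(H)\Bigr)^m=\theta_{n,\varepsilon}^m.
\]
One could even gain a factor $1/m!$, but it is not needed.

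Finally, sum over $m\ge1$. Since $\theta_{n,\varepsilon}=o(1)$, we have $\theta_{n,\varepsilon}<1$ for large $n$, and the geometric series gives $\sum_{m\ge1}\theta_{n,\varepsilon}^m=\theta_{n,\varepsilon}/(1-\theta_{n,\varepsilon})=o(1)$.

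The main point needing care is the overlap-zero claim together with the product bound. Here \eqref{eq:beta_overlap_bound} rests on Lemma~\ref{lem:map-injection}, which requires a compatible decomposition of each concrete copy $F^\circ$. For a concrete copy of $F$, its own components are isomorphic to the $F^{(j)}$ up to permutation. We therefore fix a matching of its components to the ordered list of types, which is possible because the multisets of component types agree. With that choice, $\Delta_F$ is injective and the argument goes through.
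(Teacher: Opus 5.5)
Your proposal is correct and follows the paper's proof essentially step for step. Both arguments treat the connected components as a zero-overlap ordered edge decomposition and apply \eqref{eq:beta_overlap_bound}, bound the sum over graphs with $m$ components by $\theta_{n,\varepsilon}^m$ via ordered tuples of connected types, and sum the geometric series. Your remark on matching each concrete copy's components to the ordered type list only spells out the compatible-decomposition choice that the paper leaves implicit.
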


\begin{proof}
    Let $F$ be a nonempty abstract graph with $N_{\Gamma_n}(F)>0$, and let $T_1,T_2,\ldots,T_m$ be its connected components in an arbitrary order.
    These components form an ordered edge decomposition of $F$ with vertex-overlap count zero. 
    It therefore follows from \eqref{eq:beta_overlap_bound} that
    \begin{equation*}
        \beta_n(F)\le\prod_{i=1}^m\beta_n(T_i).
    \end{equation*}
    Since $\mathsf{e}(F)=\sum_{i=1}^m\mathsf{e}(T_i)$, we further obtain
    \begin{equation*}
        (1-\varepsilon)^{2\mathsf{e}(F)}\beta_n(F)\le\prod_{i=1}^m(1-\varepsilon)^{2\mathsf{e}(T_i)}\beta_n(T_i).
    \end{equation*}
    Every abstract graph with exactly $m$ connected components determines, after ordering its components, an ordered $m$-tuple of connected abstract graphs. 
    Hence, summing the above inequality and enlarging
    the resulting sum to all such ordered $m$-tuples gives
    \begin{equation*}
        \sum\limits_{\substack{F:\,N_{\Gamma_n}(F)>0\\F\text{ has exactly }m\text{ connected components}}}
        (1-\varepsilon)^{2\mathsf{e}(F)}\beta_n(F)
        \le\theta_{n,\varepsilon}^m.
    \end{equation*}
    Since $\theta_{n,\varepsilon}=o(1)$, we have $\theta_{n,\varepsilon}<1$ for all sufficiently large $n$. Therefore,
    \begin{equation*}
        \sum\limits_{\substack{F:\,N_{\Gamma_n}(F)>0\\
        1\le\mathsf{e}(F)}}(1-\varepsilon)^{2\mathsf{e}(F)}\beta_n(F)\le\sum_{m\ge 1}\theta_{n,\varepsilon}^m
        =\frac{\theta_{n,\varepsilon}}{1-\theta_{n,\varepsilon}}
        =o(1).
    \end{equation*}
\end{proof}

We now use the low-interface decomposition to control the contribution of connected graphs.
The underlying idea is as follows. When decomposing $F$ into smaller subgraphs, the overlaps introduced by the decomposition give rise to the factor $n^{\mathrm{ov}(F,\mathcal{T}(F))}$ in \eqref{eq:beta_overlap_bound}. 
To ensure that this factor is dominated by the noise term $(1-\varepsilon)^{2\mathsf{e}(F)}$ in the $L^2(Q_n)$-norm, we require $\mathrm{ov}(F,\mathcal T(F))\log n=o(\mathsf{e}(F))$, which follows from \eqref{eq:tree-width-condition}.

\begin{lemma}
    Suppose that \eqref{eq:tree-width-condition} and
    \eqref{eq:low-degree-condition} hold. Then, for every fixed
    $\varepsilon\in(0,1)$,
    \begin{equation*}
        \theta_{n,\varepsilon}=o(1).
    \end{equation*}
    \label{lem:connect-upper-bound}
\end{lemma}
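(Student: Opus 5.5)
Choose an intermediate scale $\ell_n$ with
\[
    3\ell_n\le D_n,
    \qquad
    (\operatorname{tw}(\Gamma_n)+1)\log n=o(\ell_n).
\]
Such a choice exists because \eqref{eq:tree-width-condition} gives $(\operatorname{tw}(\Gamma_n)+1)\log n=o(D_n)$. For instance, take $\ell_n=\lfloor D_n/3\rfloor$, noting that $D_n\to\infty$ since $\log n\ll D_n$. We then split $\theta_{n,\varepsilon}$ into two parts.

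\textbf{Connected graphs with $\mathsf e(F)<\ell_n$.} These satisfy $\mathsf e(F)\le D_n$, so by \eqref{eq:beta-low-degree} their contribution is at most
\[
    \eta_n:=\sum_{\substack{F:\,N_{\Gamma_n}(F)>0\\ 1\le\mathsf e(F)\le D_n}}\beta_n(F)=o(1),
\]
where we simply drop the noise factor.

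\textbf{Connected graphs with $\mathsf e(F)\ge\ell_n$.} By Lemma~\ref{lem:decomposition}, $\Gamma_n$ is $(\ell_n,3,k_n+1)$-divisible with $k_n=\operatorname{tw}(\Gamma_n)$. Each such $F$ therefore carries an ordered edge decomposition $(T_1,\dots,T_m)$ with
\[
    \ell_n\le\mathsf e(T_i)\le3\ell_n\le D_n
    \qquad\text{and}\qquad
    \mathrm{ov}\le(k_n+1)(m-1).
\]
Combining \eqref{eq:beta_overlap_bound} with $\mathsf e(F)=\sum_i\mathsf e(T_i)\ge m\ell_n$ gives
\[
    (1-\varepsilon)^{2\mathsf e(F)}\beta_n(F)
    \le \Bigl[(1-\varepsilon)^{2m\ell_n}n^{(k_n+1)(m-1)}\Bigr]\prod_{i=1}^m\beta_n(T_i).
\]
For large $n$ we have $(k_n+1)\log n\le \ell_n\log\frac{1}{1-\varepsilon}$, so the bracket is at most $(1-\varepsilon)^{m\ell_n}\le1$. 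I would keep only this crude bound $\le 1$, or retain the extra decay if convenient.

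Now group the large connected $F$ according to the number $m$ of pieces. Rather than summing the displayed inequality over $F$, which would double count because different $F$ can share piece types, I go back to triple counts. Write $\beta_n(F)=n^{\mathsf e(F)-\mathsf v(F)}c^{-\mathsf e(F)}|\mathrm{Tri}(F)|$ and apply the aggregate injection of Lemma~\ref{lem:aggregate-injection} to the family $\mathcal F_m$ of connected $F$ whose decomposition has exactly $m$ pieces.

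The point needing care is that the weight attached to each $F$ is not constant across $F$. It is $n^{\mathsf e(F)-\mathsf v(F)}c^{-\mathsf e(F)}(1-\varepsilon)^{2\mathsf e(F)}$, and it must be bounded by the product of piece weights times the interface-and-noise bracket. Since $\mathsf v(F)=\sum_i\mathsf v(T_i)-\mathrm{ov}$ and $\mathsf e(F)=\sum_i\mathsf e(T_i)$ are determined by the image tuple, the weight of $F$ is a function of its image under the injection, up to the factor $n^{\mathrm{ov}}$. That factor is bounded uniformly by $n^{(k_n+1)(m-1)}$, which the noise absorbs. Summing over the injective image, which sits inside the set of all $m$-tuples of triples of piece types with $\ell_n\le\mathsf e(T_i)\le D_n$, then gives
\[
    \sum_{F\in\mathcal F_m}(1-\varepsilon)^{2\mathsf e(F)}\beta_n(F)\le \eta_n^m .
\]
Summing over $m\ge1$ bounds the large part by $\eta_n/(1-\eta_n)=o(1)$. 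Hence $\theta_{n,\varepsilon}\le\eta_n+\eta_n/(1-\eta_n)=o(1)$.

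I expect the main obstacle to be this aggregation step. The issue is to check that the injection of Lemma~\ref{lem:aggregate-injection} is weight-compatible, meaning the per-$F$ prefactor in front of $|\mathrm{Tri}(F)|$ is dominated by the product of per-piece prefactors in front of $|\mathrm{Tri}(T_i)|$ times $n^{\mathrm{ov}}(1-\varepsilon)^{2\mathsf e(F)}$. Once that holds termwise on triples, summing over the disjoint union gives the geometric bound. Everything else is the choice of $\ell_n$ and the elementary comparison between the noise factor and the interface loss.
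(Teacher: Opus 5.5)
Your proof is correct and follows the paper's argument essentially step for step: the same scale $\ell_n$ with $3\ell_n\le D_n$, the same split at $\ell_n$, and the same identity $w_n(F)=n^{\mathrm{ov}}\prod_i w_n(T_i)$ fed into the aggregate injection to get the geometric bound $\sum_m\eta_n^m$. The one slip is that \eqref{eq:tree-width-condition} yields $(\operatorname{tw}(\Gamma_n)+1)\log n=o(D_n)$ only along indices with $\operatorname{tw}(\Gamma_n)\ge1$: when $\Gamma_n$ is edgeless the condition is vacuous and $D_n$ may stay bounded (so $\lfloor D_n/3\rfloor$ could be $0$), a case the paper handles by noting that $\theta_{n,\varepsilon}=0$ there.
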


\begin{proof}
    If $\Gamma_n$ is edgeless, then $\theta_{n,\varepsilon}=0$.
    Along the remaining indices, $\operatorname{tw}(\Gamma_n)\ge1$, so
    \eqref{eq:tree-width-condition} implies
    \[
        (\operatorname{tw}(\Gamma_n)+1)\log n=o(D_n).
    \]
    We may therefore choose positive integers $\ell_n$ such that
    \begin{equation*}
        3\ell_n\le D_n,\qquad
        (\operatorname{tw}(\Gamma_n)+1)\log n=o(\ell_n).
    \end{equation*}
    Lemma \ref{lem:decomposition} then implies that $\Gamma_n$ is
    $(\ell_n,3,\operatorname{tw}(\Gamma_n)+1)$-divisible.

    We first consider connected abstract graphs with fewer than $\ell_n$ edges. 
    Since $(1-\varepsilon)^{2\mathsf{e}(F)}\le 1$ and $\ell_n\le D_n$, \eqref{eq:beta-low-degree} gives
    \begin{equation}
        \sum\limits_{\substack{F:\,N_{\Gamma_n}(F)>0,\ F\text{ connected}\\1\le\mathsf{e}(F)<\ell_n}}
        (1-\varepsilon)^{2\mathsf{e}(F)}\beta_n(F)
        \le
        \sum\limits_{\substack{F:\,N_{\Gamma_n}(F)>0\\1\le\mathsf{e}(F)\le D_n}}
        \beta_n(F)=o(1).
        \label{eq:upper-bound-small}
    \end{equation}

    Next, let $F$ be a connected abstract graph with $N_{\Gamma_n}(F)>0$ and $\mathsf{e}(F)\ge \ell_n$. By divisibility, $F$ admits an ordered edge decomposition
    \begin{equation*}
        \mathcal{T}(F)=(T_1,T_2,\ldots,T_m)
    \end{equation*}
    satisfying
    \begin{equation*}
        \ell_n\le\mathsf{e}(T_i)\le 3\ell_n
        \quad\text{for every }i=1,2,\ldots,m,\qquad
        \mathrm{ov}(F,\mathcal{T}(F))\le(\operatorname{tw}(\Gamma_n)+1)(m-1).
    \end{equation*}
    Since
    \begin{equation*}
        \mathsf{e}(F)=\sum_{i=1}^m\mathsf{e}(T_i)\ge m\ell_n,
    \end{equation*}
    \eqref{eq:beta_overlap_bound} yields
    \begin{align}
        (1-\varepsilon)^{2\mathsf{e}(F)}\beta_n(F)
        &\le (1-\varepsilon)^{2\mathsf{e}(F)}n^{(\operatorname{tw}(\Gamma_n)+1)(m-1)}\prod_{i=1}^m\beta_n(T_i)\nonumber\\
        &\le \exp\left(2m \ell_n\log(1-\varepsilon)+(\operatorname{tw}(\Gamma_n)+1)(m-1)\log n
        \right)\prod_{i=1}^m\beta_n(T_i).
        \label{eq:bounded-tw-single}
    \end{align}
    Because $(\operatorname{tw}(\Gamma_n)+1)\log n=o(\ell_n)$ and $\log(1-\varepsilon)<0$, the exponential factor is at most one, uniformly over $m\ge1$, for all sufficiently large $n$.

    Fix $m\ge1$, and let $\mathcal F_{n,m}$ be the collection of these
    connected graphs whose specified decompositions have exactly $m$ pieces.
    For an abstract graph $H$, set
    \[
        w_n(H):=n^{\mathsf e(H)-\mathsf v(H)}c^{-\mathsf e(H)}.
    \]
    Then $\beta_n(H)=w_n(H)|\mathrm{Tri}(H)|$, and for
    $F\in\mathcal F_{n,m}$,
    \[
        w_n(F)
        =n^{\mathrm{ov}(F,\mathcal T(F))}
          \prod_{i=1}^m w_n(T_i(F)).
    \]
    The estimate used in \eqref{eq:bounded-tw-single} gives
    \[
        (1-\varepsilon)^{2\mathsf e(F)}
        n^{\mathrm{ov}(F,\mathcal T(F))}\le1.
    \]
    Consequently,
    \begin{equation*}
        \sum_{F\in\mathcal F_{n,m}}
        (1-\varepsilon)^{2\mathsf e(F)}\beta_n(F)
        =
        \sum_{F\in\mathcal F_{n,m}}
        \sum_{\tau\in\mathrm{Tri}(F)}
        (1-\varepsilon)^{2\mathsf e(F)}w_n(F)
    \le
        \sum_{F\in\mathcal F_{n,m}}
        \sum_{\tau\in\mathrm{Tri}(F)}
        \prod_{i=1}^m w_n(T_i(F)).
    \end{equation*}
    By \cref{lem:aggregate-injection}, the ordered tuples of  isomorphism triples associated
    with distinct pairs $(F,\tau)$ are distinct. Enlarging its image to all
    ordered tuples of isomorphism triples whose types lie in the required edge
    range therefore gives
    \begin{equation*}
        \sum_{F\in\mathcal F_{n,m}}
        (1-\varepsilon)^{2\mathsf e(F)}\beta_n(F)
        \le
        \sum_{\substack{(T_1,\ldots,T_m):\\
                    N_{\Gamma_n}(T_i)>0,\\
                    \ell_n\le\mathsf e(T_i)\le3\ell_n}}
        \ \sum_{(\tau_1,\ldots,\tau_m)\in
                    \prod_{i=1}^m\mathrm{Tri}(T_i)}
        \prod_{i=1}^m w_n(T_i)
        =
        \left(
            \sum\limits_{\substack{T:\,N_{\Gamma_n}(T)>0\\
                    \ell_n\le\mathsf e(T)\le3\ell_n}}
            \beta_n(T)
        \right)^m.
    \end{equation*}
    The inner sum on the right-hand side is $o(1)$ by \eqref{eq:beta-low-degree}, since $3\ell_n\le D_n$. 
    Summing over $m\ge1$ therefore gives
    \begin{equation}
        \sum\limits_{\substack{F:\,N_{\Gamma_n}(F)>0,\ F\text{ connected}\\\mathsf{e}(F)\ge \ell_n}}
        (1-\varepsilon)^{2\mathsf{e}(F)}\beta_n(F)
        \le
        \sum_{m\ge1}
        \left(
            \sum\limits_{\substack{T:\,N_{\Gamma_n}(T)>0\\\ell_n\le\mathsf{e}(T)\le3\ell_n}}\beta_n(T)
        \right)^m
        =o(1).
        \label{eq:upper-bound-big}
    \end{equation}
    Combining \eqref{eq:upper-bound-small} and
    \eqref{eq:upper-bound-big} proves that
    $\theta_{n,\varepsilon}=o(1)$.
\end{proof}

We can now complete the proof of Theorem \ref{thm:main}.
\begin{proof}[Proof of Theorem \ref{thm:main}]
    Under \eqref{eq:tree-width-condition} and \eqref{eq:low-degree-condition}, Lemma \ref{lem:connect-upper-bound} implies that $\theta_{n,\varepsilon}=o(1)$.
    Consequently, by Lemma \ref{lem:cluster-expansion}, \eqref{eq:beta-noised-target} holds.
    It then follows from Lemma \ref{lem:leading-term} and Lemma \ref{lem:edge} that \begin{equation*}
        \left\lVert L_{n,\varepsilon}\right\rVert_{L^2(Q_n)}^2
    =1+o(1).
    \end{equation*}

    Finally, the Cauchy--Schwarz inequality gives
    \begin{equation*}
            \left\lVert T_{\varepsilon}P_n-Q_n\right\rVert_{\mathrm{TV}}
            =\frac{1}{2}\mathbb{E}_{Q_n} |L_{n,\varepsilon}-1|\le \frac{1}{2}\sqrt{\left\lVert L_{n,\varepsilon}\right\rVert_{L^2(Q_n)}^2-1}=o(1).
    \end{equation*}
\end{proof}
\subsection{The Critical and Subcritical Regimes}
In this subsection, we consider the special case $c\in(0,1]$. 
Under \eqref{eq:low-degree-condition}, the graph $T_{\varepsilon}\Gamma_n$ is a forest with high probability, and hence no a priori assumption on its treewidth is required. 
With this structural property, we prove Theorem \ref{thm:subcritical}.
\begin{lemma}
    Fix $c\in(0,1]$ and $\varepsilon\in(0,1)$. Let
    $\{\Gamma_n\}_{n\ge 1}$ be a sequence of simple graphs satisfying
    $|V(\Gamma_n)|\le n$.
    Suppose that $D_n=\omega(\log n)$ and that
    \eqref{eq:low-degree-condition} holds. 
    Then $T_{\varepsilon}\Gamma_n$ is asymptotically acyclic; more precisely,
    \begin{equation*}
        \mathbb{P}\left(T_{\varepsilon}\Gamma_n \text{ contains a cycle}\right)=o(1).
    \end{equation*}
    \label{lem:no-cycle}
\end{lemma}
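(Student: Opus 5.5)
We argue by a first-moment (union) bound over the cycles of $\Gamma_n$, after using \cref{lem:girth} to eliminate the short ones. By \cref{lem:girth}, for all sufficiently large $n$ the graph $\Gamma_n$ has no cycle of length $\ell$ with $3\le\ell\le D_n$. Hence its girth exceeds $D_n$, and every cycle of $\Gamma_n$ has length $k>D_n$. Since $T_\varepsilon\Gamma_n$ keeps each edge independently with probability $1-\varepsilon$, a fixed $k$-cycle of $\Gamma_n$ survives with probability exactly $(1-\varepsilon)^k$. Writing $\mathcal C_k(\Gamma_n)$ for the set of $k$-cycles in $\Gamma_n$, the union bound gives
\begin{equation*}
    \mathbb P\left(T_\varepsilon\Gamma_n\text{ contains a cycle}\right)
    \le\sum_{k>D_n}|\mathcal C_k(\Gamma_n)|(1-\varepsilon)^k .
\end{equation*}
It therefore suffices to bound $|\mathcal C_k(\Gamma_n)|$ by something subexponential in $k$ at rate much slower than $(1-\varepsilon)^{-k}$.

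The main step, and the one I expect to be the main obstacle, is the counting bound $|\mathcal C_k(\Gamma_n)|\le n^{k/r_n+1}$ with $r_n:=\lfloor (D_n-1)/2\rfloor$. The key observation is a uniqueness-of-short-paths property. If two distinct paths of length at most $r_n$ joined the same pair of vertices, their union would contain a cycle of length at most $2r_n\le D_n-1<D_n$, contradicting the girth bound. So between any two vertices there is at most one path of length at most $r_n$.

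Now encode a $k$-cycle as follows. Choose a starting vertex $x_0$ and an orientation, and record the vertices $x_0,x_1,\dots,x_{s}$ met at cycle-distance multiples of $r_n$, closing back at $x_0$. This requires $s\le\lceil k/r_n\rceil\le k/r_n+1$ recorded vertices. Each consecutive arc has length at most $r_n$, so by uniqueness it is determined by its endpoints. Hence the sequence of recorded vertices determines the cycle. Since there are at most $n$ choices for each recorded vertex, we get $|\mathcal C_k(\Gamma_n)|\le n^{k/r_n+1}$. Some care is needed with the last (shorter) arc and with the convention for $k$ close to $r_n$, but since $k>D_n>r_n$ every arc is a genuine path with distinct endpoints, or at least the closing arc is still a path of length at most $r_n$.

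Finally, sum the bound. We have
\begin{equation*}
    \sum_{k>D_n}n^{k/r_n+1}(1-\varepsilon)^k
    =n\sum_{k>D_n}\exp\!\Big(k\big(\tfrac{\log n}{r_n}+\log(1-\varepsilon)\big)\Big).
\end{equation*}
Because $D_n=\omega(\log n)$, we have $r_n=\omega(\log n)$, so $\log n/r_n=o(1)$. For large $n$ the exponent rate is therefore at most $\tfrac12\log(1-\varepsilon)<0$. The geometric series is then $O\big(n(1-\varepsilon)^{D_n/2}\big)$, which equals $\exp\big(\log n-\Omega(D_n)\big)=o(1)$, again using $D_n=\omega(\log n)$. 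This proves the lemma. Note that $|V(\Gamma_n)|\le n$ is used only to bound the number of choices per recorded vertex by $n$.
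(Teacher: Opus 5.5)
Your proposal is correct and follows essentially the same argument as the paper: girth greater than $D_n$ via \cref{lem:girth}, uniqueness of paths of length at most $r_n=\lfloor (D_n-1)/2\rfloor$, encoding each $k$-cycle by its vertices at spacing $r_n$ to get $|\mathcal{C}_k(\Gamma_n)|\le n^{k/r_n+1}$, and a first-moment geometric sum that is $o(1)$ because $D_n=\omega(\log n)$. The only cosmetic differences are that the paper fixes a canonical (lexicographically first) oriented representative of each cycle, and bounds the geometric ratio by $1-\varepsilon/2$ rather than $(1-\varepsilon)^{1/2}$.
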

\begin{proof}
    It suffices to prove that the expected number of cycles in $T_{\varepsilon}\Gamma_n$ tends to zero. For each integer $k\ge 3$, let $\mathcal{C}_k(\Gamma_n)$ denote the set of simple cycles of length $k$ in $\Gamma_n$.
   
    By Lemma \ref{lem:girth}, for any $3\le k\le D_n$, there is no cycle of length $k$ and hence $|\mathcal{C}_k(T_{\varepsilon}\Gamma_n)|=0$.
    For $k>D_n$, we first derive an upper bound on $|\mathcal{C}_k(\Gamma_n)|$.
    For simplicity, write \begin{equation*}
         r_n=\lfloor \frac{D_n-1}{2}\rfloor.
    \end{equation*}
    We claim that for any two vertices $u,v\in V(\Gamma_n)$, there is at most one simple path from $u$ to $v$ of length at most $r_n$.
    Indeed, suppose there were two distinct simple paths $P_1$ and $P_2$ from $u$ to $v$, each of length at most $r_n$. Their union contains a cycle whose length is at most $2r_n< D_n$.

    Now fix $k\in [D_n+1,n]$ and fix an arbitrary total order on
    $V(\Gamma_n)$. Represent each $k$-cycle by the lexicographically first
    of its $2k$ oriented vertex sequences
    $(v_0,v_1,\ldots,v_{k-1})$, with indices understood modulo $k$, and
    consider the tuple

    \begin{equation*}
        (v_0,v_{r_n},v_{2r_n},\ldots,v_{mr_n}),
    \end{equation*}
    where $m=\lfloor \frac{k}{r_n}\rfloor$.
    Between two consecutive vertices in the tuple, and also from $v_{mr_n}$ back to $v_0$, the corresponding segment of the cycle has length at most $r_n$.
    Each segment is uniquely determined by its two endpoints; otherwise,
    the union of two such paths would contain a cycle of length at most
    $2r_n\le D_n$. Therefore every such tuple corresponds to at most one
    cycle of length $k$. If $r_n$ divides $k$, the final entry in the tuple
    is $v_0$, which causes no ambiguity.
    It follows that \begin{equation*}
        |\mathcal{C}_k(\Gamma_n)|\le n^{\frac{k}{r_n}+1}.
    \end{equation*}
    Since every edge in $\Gamma_n$ is retained in $T_{\varepsilon}\Gamma_n$ with probability $1-\varepsilon$ independently, the expected number of cycles in
    $T_{\varepsilon}\Gamma_n$ is at most
    \begin{align*}
        \sum_{k=D_n+1}^{n}
        n^{k/r_n+1}(1-\varepsilon)^k
        &\le
        n\sum_{k=D_n+1}^{\infty}
        \left[
            (1-\varepsilon)
            \exp\left(\frac{\log n}{r_n}\right)
        \right]^k\\
        &\le
        \frac{2n}{\varepsilon}
        \left(1-\frac{\varepsilon}{2}\right)^{D_n+1}
        =o(1).
    \end{align*}
    Indeed, $r_n=\lfloor(D_n-1)/2\rfloor=\omega(\log n)$, so
    \begin{equation*}
        (1-\varepsilon)
        \exp\left(\frac{\log n}{r_n}\right)
        \le 1-\frac{\varepsilon}{2}
    \end{equation*}
    for all sufficiently large $n$. The final estimate follows from
    $D_n=\omega(\log n)$.
  
\end{proof}

We now prove Theorem \ref{thm:subcritical} by comparing the forest-truncated version of the noised model with the null distribution.

\begin{proof}[Proof of Theorem \ref{thm:subcritical}]
    As in the noised model, each edge of $\Gamma_n$ is retained independently with probability $1-\varepsilon$. 
    If the resulting planted subgraph contains a cycle, we replace it by the edgeless graph.
    Let $T_{\varepsilon}^{\mathrm{F}}\Gamma_n$ and
    $T_{\varepsilon}^{\mathrm{F}}P_n$ denote the resulting random planted subgraph and the corresponding planted distribution, respectively.
    Under the natural coupling,
    \begin{equation*}
        T_{\varepsilon}^{\mathrm{F}}\Gamma_n\subset T_{\varepsilon}\Gamma_n\qquad\text{almost surely}.
    \end{equation*}
    Thus, $T_{\varepsilon}^{\mathrm{F}}\Gamma_n$ is stochastically dominated by $T_{\varepsilon}\Gamma_n$ with respect to edge inclusion.
    Moreover, the two corresponding planted distributions agree unless $T_{\varepsilon}\Gamma_n$ contains a cycle. 
    Lemma \ref{lem:no-cycle} therefore gives
    \begin{equation*}
        \left\lVert T_{\varepsilon}^{\mathrm{F}}P_n-T_{\varepsilon}P_n \right\rVert_{\mathrm{TV}}=o(1).
    \end{equation*}
    By the triangle inequality, it remains to prove that
    \begin{equation}
        \left\lVert T_{\varepsilon}^{\mathrm{F}}P_n-Q_n\right\rVert_{\mathrm{TV}}=o(1).
        \label{eq:truncated-tv-goal}
    \end{equation}

    Let
    $L_{n,\varepsilon}^{\mathrm{F}}
    =\frac{dT_{\varepsilon}^{\mathrm{F}}P_n}{dQ_n}$ denote the likelihood ratio of $T_{\varepsilon}^{\mathrm{F}}P_n$ with respect to $Q_n$.
    For every abstract graph $F$ and every embedded copy
    $F^{\bullet}\simeq F$, the Fourier coefficient formula gives
    \begin{equation*}
        \widehat{L_{n,\varepsilon}^{\mathrm F}}(F^\bullet)
        =\left(\frac{1-c/n}{c/n}\right)^{\mathsf e(F)/2}
        \mathbb P\!\left(F^\bullet\subset
        \phi_n(T_\varepsilon^{\mathrm F}\Gamma_n)\right).
    \end{equation*}
    Under the coupling
    $T_\varepsilon^{\mathrm F}\Gamma_n\subseteq T_\varepsilon\Gamma_n$,
    the containment probability is at most its counterpart in the original
    noised model. Hence
    \begin{equation*}
        \widehat{L_{n,\varepsilon}^{\mathrm{F}}}(F^{\bullet})\le
        \widehat{L_{n,\varepsilon}}(F^{\bullet}).
    \end{equation*}
    If $F$ contains a cycle, then
    \begin{equation*}
        \widehat{L_{n,\varepsilon}^{\mathrm{F}}}(F^{\bullet})=0,
    \end{equation*}
    because $T_{\varepsilon}^{\mathrm{F}}\Gamma_n$ is always a forest.
    Parseval's identity leading to
    \eqref{eq:noise-lr} consequently implies that
    \begin{equation*}
        \left\lVert L_{n,\varepsilon}^{\mathrm{F}}\right\rVert_{L^2(Q_n)}^2\le
        1+\sum\limits_{\substack{F:\,N_{\Gamma_n}(F)>0,\ F\text{ is a forest}\\1\le\mathsf{e}(F)}}(1-\varepsilon)^{2\mathsf{e}(F)}\left(\frac{1-c/n}{c/n}\right)^{\mathsf{e}(F)}\frac{N_{\Gamma_n}(F)^2|\mathrm{Aut}(F)|}{(n)_{\mathsf{v}(F)}}.
    \end{equation*}
    As in the proof of Theorem \ref{thm:main}, Lemma \ref{lem:leading-term} and Lemma \ref{lem:edge} reduce the desired bound \eqref{eq:truncated-tv-goal} to
    \begin{equation*}
        \sum\limits_{\substack{F:\,N_{\Gamma_n}(F)>0,\ F\text{ is a forest}\\1\le\mathsf{e}(F)}}(1-\varepsilon)^{2\mathsf{e}(F)}\beta_n(F)=o(1).
    \end{equation*}
    Applying the argument of Lemma \ref{lem:cluster-expansion} within the
    collection of forests, it suffices to show that
    \begin{equation}
        \sum\limits_{\substack{F:\,N_{\Gamma_n}(F)>0,\ F\text{ is a tree}\\1\le\mathsf{e}(F)}}(1-\varepsilon)^{2\mathsf{e}(F)}\beta_n(F)=o(1).
        \label{eq:truncated-sum-goal}
    \end{equation}

    Since $D_n=\omega(\log n)$, choose positive integers $\ell_n$ such
    that
    \[
        3\ell_n\le D_n,
        \qquad
        2\log n=o(\ell_n).
    \]
    Every tree is $(\ell_n,3,2)$-divisible by
    \cref{lem:decomposition}. 
    Applying  \cref{lem:connect-upper-bound}, separately to the contribution of each tree $F$ in the sum \eqref{eq:truncated-sum-goal} shows that the entire sum is $o(1)$.
    It follows that
    \begin{equation*}
        \left\lVert L_{n,\varepsilon}^{\mathrm{F}}\right\rVert_{L^2(Q_n)}^2=1+o(1).
    \end{equation*}
    The Cauchy--Schwarz inequality now yields
    \begin{equation*}
        \left\lVert T_{\varepsilon}^{\mathrm{F}}P_n-Q_n\right\rVert_{\mathrm{TV}}\le\frac{1}{2}
        \sqrt{\left\lVert L_{n,\varepsilon}^{\mathrm{F}}\right\rVert_{L^2(Q_n)}^2-1}=o(1).
    \end{equation*}
    This proves \eqref{eq:truncated-tv-goal}, and the preceding coupling completes the proof.
\end{proof}
\section{Low-Interface Decompositions of Bounded-Treewidth Graphs}
\label{section:decomposition}

In this section, we prove Lemma \ref{lem:decomposition}. Treewidth is monotone under taking subgraphs. Hence, if $N_{\Gamma}(F)>0$ and  $\mathrm{tw}(\Gamma)\le k$, then $F$ is isomorphic to a subgraph of $\Gamma$ and therefore satisfies $\mathrm{tw}(F)\le k$. 
It thus
suffices to prove the following proposition.

\begin{proposition}
    Let $k,\ell$ be positive integers, and let $F$ be a connected graph satisfying
    \begin{equation*}
        \mathrm{tw}(F)\leq k\qquad\text{and}\qquad \mathsf{e}(F)\ge \ell.
    \end{equation*}
    Then $F$ admits an ordered edge decomposition
    \begin{equation*}
        \mathcal{T}(F)=(T_1,T_2,\ldots,T_m)
    \end{equation*}
    such that
    \begin{equation*}
        \ell\le\mathsf{e}(T_i)\le 3\ell\quad\text{for every }i=1,2,\ldots,m,
        \qquad\mathrm{ov}(F,\mathcal{T}(F))\le(k+1)(m-1).
    \end{equation*}
    \label{prop:treewidth-division}
\end{proposition}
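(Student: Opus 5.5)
We follow the three-stage scheme sketched in the technical overview: refine a tree decomposition, cut the refined tree greedily into clusters, and read off the edge partition. Start from a tree decomposition $(\mathcal X,\{B_x\}_{x\in V(\mathcal X)})$ of $F$ of width at most $k$, so every bag has at most $k+1$ vertices.

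\textbf{Step 1: refinement.} We first make the decomposition tree subcubic and arrange that each node owns at most one edge. To make it subcubic, replace any node $x$ of degree $d>3$ by a path of $d-2$ copies of $x$, all carrying the bag $B_x$, and attach the original neighbours to these copies. Bags are unchanged, so the width is preserved and the subtree condition still holds. For ownership, give every edge $uv$ of $F$ to one node whose bag contains both $u$ and $v$. If a node owns $t>1$ edges, hang a path of $t$ pendant copies of it below it, each copy owning one of those edges. Node degrees grow by at most one at the attachment point; to keep everything subcubic, first split the node as before and then attach. The result is a subcubic tree $\mathcal X'$ with weights $w(x)\in\{0,1\}$, total weight $\mathsf e(F)\ge\ell$, and every adjacent pair of bags sharing at most $k+1$ vertices.

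\textbf{Step 2: greedy partition of a weighted subcubic tree.} We need a lemma: a tree of maximum degree at most $3$ with $0$--$1$ node weights and total weight $W\ge\ell$ can be partitioned into connected node sets (clusters), each of weight in $[\ell,3\ell]$. Root the tree at a leaf, so every node has at most two children. Take a deepest node $x$ whose subtree has weight at least $\ell$. Each child subtree of $x$ has weight below $\ell$, and $x$ has at most two children, so the subtree of $x$ has weight at most $2(\ell-1)+1<3\ell$. Cut this subtree off as a cluster and recurse on the rest. If the remainder has weight below $\ell$, merge it into an adjacent cluster. This merge is the step that could break the $3\ell$ upper bound, and it is where I expect the real work. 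The fix is to choose the last cut so that the leftover is absorbed without exceeding $3\ell$: a leftover of weight $<\ell$ added to a cluster of weight $<2\ell$ gives weight $<3\ell$. Because the root is a leaf, there is room to argue this; alternatively, apply the cut at the root-side so the final cluster automatically has weight at least $\ell$ and at most $2\ell-1+\ell$. Clusters are connected, so contracting them gives a quotient tree with $m$ nodes and $m-1$ edges, each coming from a unique edge $xy$ of $\mathcal X'$.

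\textbf{Step 3: edge partition and overlap bound.} Let $T_i$ be the edges owned by cluster $i$, together with their endpoints. Then $\mathsf e(T_i)$ equals the cluster weight, which lies in $[\ell,3\ell]$, and the $T_i$ partition $E(F)$. For each vertex $v$, the nodes whose bags contain $v$ form a connected subtree $S_v$. If $v$ lies in $r_v$ pieces, then $S_v$ meets at least $r_v$ clusters. Since $S_v$ is connected, it crosses at least $r_v-1$ quotient edges, and for each crossed edge $xy$ we have $v\in B_x\cap B_y$. Summing over $v$,
\[
\mathrm{ov}=\sum_v (r_v-1)\le\sum_{\text{quotient edges }xy}|B_x\cap B_y|\le(k+1)(m-1).
\]
This gives the bound in Proposition~\ref{prop:treewidth-division}. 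Connectivity of $F$ is not really needed beyond $\mathsf e(F)\ge\ell$, but it ensures that no isolated vertex issues arise.
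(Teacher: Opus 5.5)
Your proposal is correct and is essentially the paper's proof: refine to a subcubic decomposition in which each node owns at most one edge, greedily cut the leaf-rooted tree into connected clusters, and charge each vertex's extra piece memberships to quotient edges, each of which costs at most $|B_x\cap B_y|\le k+1$. The merge step you flag needs no special choice of the last cut, because every cut-off cluster has weight at most $2\ell-1$ and the leftover has weight at most $\ell-1$, so merging into any adjacent cluster gives at most $3\ell-2$. In Step 1, just count the pendant path as an extra neighbour before splitting a node, rather than splitting first and then attaching.
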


We prove the proposition in two steps. First, in Subsection
\ref{sec:subcubic}, we refine a tree decomposition of $F$ so that its decomposition tree is subcubic and each node owns at most one edge of $F$. 
Second, in Subsection \ref{sec:partition-tree}, we partition the resulting weighted tree into clusters and use these clusters to construct the desired ordered edge decomposition of $F$.

\subsection{Subcubic Edge-Owned Tree Decompositions}
\label{sec:subcubic}
We first recall the definitions of tree decompositions and treewidth.
\begin{definition}
    Let $F$ be a connected graph. A pair $(\mathsf{T},(B_x)_{x\in V(\mathsf T)})$, where $\mathsf T$ is a tree and $B_x\subseteq V(F)$ is a \textit{bag} for each $x\in V(\mathsf T)$, is called a \textit{tree decomposition} of $F$ if the following conditions hold:
    \begin{itemize}
        \item every vertex of $F$ belongs to at least one bag;
        \item for every edge $uv\in E(F)$, there exists a bag containing both $u$ and $v$;
        \item for every vertex $v\in V(F)$, the set
        \begin{equation*}
            \mathsf{T}[v]=\{x\in V(\mathsf{T}):v\in B_x\}
        \end{equation*}
        induces a connected subtree of $\mathsf{T}$.
    \end{itemize}
    The width of the decomposition $(\mathsf{T},(B_x)_{x\in V(\mathsf T)})$ is $\max_x |B_x|-1$. 
    The \textit{treewidth} $\mathrm{tw}(F)$ of $F$ is the minimum width among all tree decompositions of $F$.
\end{definition}

To construct an edge decomposition, we assign each edge in $E(F)$ to a vertex of $\mathsf{T}$. 
In particular, to facilitate control of the number of edges in each component resulting from the decomposition,
we require that each vertex of $\mathsf{T}$ be assigned at most one edge and that $\mathsf{T}$ have bounded degree.

\begin{lemma}
    \label{lem:decomposition-refinement}
    For a connected graph $F$, if $\mathrm{tw}(F)\le k$, then $F$ has a tree decomposition  $(\mathsf{T},(B_x)_{x\in V(\mathsf T)})$ of width at most $k$ together with a map,
    \begin{equation*}
        \pi: E(F)\longrightarrow V(\mathsf T),
    \end{equation*}
    such that  \begin{itemize}
        \item the endpoints of every edge $e$ lie in $B_{\pi(e)}$;
        \item each node of $\mathsf T$ owns at most one edge, in the sense that $|\pi^{-1}(x)|\le 1$;
        \item $\mathsf T$ has maximum degree at most $3$.
    \end{itemize}
\end{lemma}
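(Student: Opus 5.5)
We start from an arbitrary tree decomposition $(\mathsf T_0,(B_x))$ of width at most $k$, which exists because $\mathrm{tw}(F)\le k$. We then modify it in two local steps, each of which preserves the three tree-decomposition axioms and does not increase the maximum bag size. The basic tool is \emph{node duplication}. Replacing a node $x$ by a path of new nodes $x_1,\ldots,x_r$, all carrying the bag $B_x$, keeps the decomposition valid provided every original neighbour of $x$ is reattached to one of the $x_i$. Indeed, for each vertex $v\in B_x$, the set $\mathsf T[v]$ stays connected: it now contains the whole path $x_1\cdots x_r$, and each reattached neighbour containing $v$ is adjacent to some $x_i$. Vertices outside $B_x$ are unaffected, and the width is unchanged.

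\textbf{Step 1 (edge ownership).} For each edge $e=uv$ of $F$, choose a node $\pi_0(e)$ whose bag contains $u$ and $v$; such a node exists by the second axiom. Suppose a node $x$ owns the edges $e_1,\ldots,e_s$ with $s\ge2$. Replace $x$ by a path $x_1,\ldots,x_s$ of copies of $B_x$, attach all old neighbours of $x$ to $x_1$, and set $\pi(e_i):=x_i$. Each copy then owns at most one edge, and the endpoints of $e_i$ still lie in $B_{x_i}=B_x$.

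\textbf{Step 2 (subcubic).} Consider a node $x$ of degree $d>3$, with neighbours $y_1,\ldots,y_d$, owning at most one edge. Replace $x$ by a path $x_1,\ldots,x_{d}$ of copies of $B_x$. Attach $y_i$ to $x_i$. The node $x_1$ keeps the owned edge of $x$ if there is one; the other copies own nothing. Then $x_1$ and $x_d$ have degree at most $2$, and each interior $x_i$ has two path neighbours plus $y_i$, so degree exactly $3$. The degrees of the $y_i$ do not change, and ownership stays at most one edge per node because the new nodes own nothing except $x_1$. Processing every node of degree larger than $3$ this way, one at a time, terminates: each step strictly reduces the number of nodes of degree larger than $3$ and never raises the degree of any other node above $3$. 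The validity of each modification follows from the duplication argument above.

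The only point needing care is the connectivity of $\mathsf T[v]$ after each surgery. This is the main obstacle, though a mild one, and the duplication observation handles it uniformly. Connectedness of $F$ is not actually needed here. If one insists on a finite procedure, the order of the steps matters only in that Step 2 must not create multi-owned nodes, and it does not, since the new copies own no edges other than the single edge moved to $x_1$.
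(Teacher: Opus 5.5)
Your proof is correct and takes essentially the same approach as the paper: replace each node of a width-$k$ decomposition by a small tree of copies of its bag, and distribute the incident decomposition edges and the owned edges of $F$ among those copies, which preserves the running-intersection property and the width. The only difference is organizational: the paper does this in one pass, replacing each node $x$ by a subcubic tree $R_x$ with at least $d_x+r_x$ designated vertices of degree at most one (serving as ports and owners), whereas you use two successive path surgeries, first splitting multi-owner nodes and then splitting nodes of degree larger than $3$.
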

\begin{proof}

    Start with a tree decomposition $(\mathsf T_0,(B_x^0))$ of width at most $k$. 
    For every edge $e\in E(F)$, we choose a host node $x(e)\in V(\mathsf T_0)$ such that the bag $B_{x(e)}^0$ contains both endpoints of $e$.

    For each $x\in V(\mathsf T_0)$, define
    \begin{equation*}
         d_x:=\deg_{\mathsf T_0}(x),\qquad r_x:=|\{e\in E(F):x(e)=x\}|.
    \end{equation*}
    Replace $x$ with a finite tree $R_x$ of maximum degree at most $3$ that contains at least $d_x+r_x$ designated vertices of degree at most $1$, and assign the bag $B_x^0$ to every vertex of $R_x$. 
    Of these designated vertices, choose $d_x$ as distinct ports, one for each edge of $\mathsf T_0$ incident to $x$, and choose $r_x$ additional vertices, one for each edge $e\in E(F)$ satisfying $x(e)=x$. 
    Declare each such vertex to be the owner of the corresponding edge.

    We then assemble the family $(R_x)_{x\in V(\mathsf T_0)}$ according to the adjacency structure of $\mathsf T_0$. 
    More precisely, for each edge $xy\in E(\mathsf T_0)$, join the port of $R_x$ corresponding to $xy$ to the port of $R_y$ corresponding to $xy$. Since $\mathsf T_0$ and each $R_x$ are trees, the resulting graph, denoted by $\mathsf T$, is also a tree.

    Thus, we have obtained the pair $(\mathsf{T},(B_x)_{x\in V(\mathsf{T})})$. We first verify that it is indeed a tree decomposition.
    Every vertex of $F$ belongs to some bag $B_x^0$ of $\mathsf T_0$, and hence to a corresponding bag $B_y$ with $y\in V(R_x)$.
    Similarly, every edge of $F$ has both endpoints in some bag.
    Finally, for every vertex $u\in V(F)$, since $\mathsf T_0[u]$ induces a connected subtree of $\mathsf T_0$, the subgraph of $\mathsf T$ induced by $\bigcup_{x\in V(\mathsf T_0[u])}V(R_x)$ is connected.

    We next verify that the tree decomposition $(\mathsf T,(B_x)_{x\in V(\mathsf T)})$ constructed above satisfies the required properties.
    Since every bag indexed by a vertex of $\mathsf T$ is a copy of an original bag in $(B_y^0)_{y\in V(\mathsf T_0)}$, the width of the tree decomposition $(\mathsf{T},(B_x)_{x\in V(\mathsf T)})$ is at most $k$.
    Each edge in $E(F)$ has an owner in $V(\mathsf{T})$, thereby yielding a map $\pi:E(F)\to V(\mathsf T)$ such that each vertex owns at most one edge.
    For every edge $e\in E(F)$, both endpoints of $e$ belong to the bag $B_{x(e)}^0$, which coincides with $B_{\pi(e)}$.
    Moreover, adding one external edge at a port vertex preserves maximum degree $3$ in $\mathsf T$.

\end{proof}
\subsection{Partitioning a Subcubic Tree}
\label{sec:partition-tree}
By Lemma \ref{lem:decomposition-refinement}, every connected graph $F$ of treewidth at most $k$ admits a tree decomposition of width at most $k$ whose decomposition tree is subcubic and in which each edge of $F$ is assigned to a distinct vertex. 
We next decompose $F$ into subgraphs along the edges of the decomposition tree. 
The width of the tree decomposition then provides an upper bound on the vertex-overlap count of the resulting ordered edge decomposition.

To ensure that the subgraphs in the ordered edge decomposition have comparable numbers of edges, we use the following lemma to select appropriate edges of $\mathsf{T}$ along which to perform the decomposition.

\begin{lemma}
    Let $\mathsf T$ be a finite tree of maximum degree at most $3$ whose vertices have weights in $\{0,1\}$, and let its total weight be $t$.
    For every integer $\ell\ge 1$ with $t\ge \ell$, there is a partition \begin{equation*}
        V(\mathsf T)=\mathcal C_1\sqcup \mathcal C_2\sqcup \cdots \sqcup \mathcal C_m,
    \end{equation*}
    such that each $\mathcal{C}_i$ induces a connected subtree of $\mathsf{T}$ and has total weight in $[\ell,3\ell]$.
    Moreover, after contracting each $\mathcal C_i$, the quotient graph is a tree.
    \label{lem:partition-tree}
\end{lemma}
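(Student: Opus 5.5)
We argue by strong induction on the number of vertices of $\mathsf T$, using a greedy ``cut off a light subtree'' step. If $t\le 3\ell$, we take the single cluster $\mathcal C_1=V(\mathsf T)$; its weight lies in $[\ell,3\ell]$ and the quotient is a single vertex. So assume $t>3\ell$.

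Root $\mathsf T$ at a leaf $\rho$, so that every vertex has at most two children. For a vertex $x$, let $w(x)$ be the total weight of the subtree $\mathsf T_x$ rooted at $x$. Since $w(\rho)=t\ge\ell$, there is a deepest vertex $x$ with $w(x)\ge\ell$. Every child $y$ of $x$ then has $w(y)\le\ell-1$. Because $x$ has at most two children and its own weight is at most $1$, we get
\[
w(x)\le 2(\ell-1)+1\le 2\ell-1 .
\]
Hence $\mathcal C:=V(\mathsf T_x)$ is connected with weight in $[\ell,2\ell-1]$. Moreover, $x\neq\rho$, because $w(\rho)=t>3\ell$.

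Removing $\mathcal C$ leaves $\mathsf T':=\mathsf T- \mathsf T_x$. This is a tree, since $\mathsf T_x$ hangs from the single edge joining $x$ to its parent. It has maximum degree at most $3$ and total weight $t-w(x)\ge t-2\ell+1>\ell$, with $0/1$ weights inherited from $\mathsf T$. By induction, $V(\mathsf T')$ has a partition into connected clusters of weight in $[\ell,3\ell]$ whose quotient is a tree. Adjoining $\mathcal C$ as one more cluster gives the desired partition of $V(\mathsf T)$.

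For the quotient claim, the only edge of $\mathsf T$ between $\mathcal C$ and the rest is the edge from $x$ to its parent. In the quotient, $\mathcal C$ is therefore a leaf attached to the cluster containing that parent. A tree with a pendant vertex added is still a tree. More generally, contracting connected vertex sets of a tree always yields a tree: the quotient is connected, and any cycle in it would lift to a cycle in $\mathsf T$ because each cluster is connected.

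The main subtlety is ensuring that the remainder still has weight at least $\ell$, so that induction applies, while also keeping every cluster's weight at most $3\ell$. The case split $t\le 3\ell$ versus $t>3\ell$, together with the bound $w(x)\le 2\ell-1$ coming from subcubicity, handles both requirements. In fact the argument yields clusters of weight at most $\max(2\ell-1,3\ell)=3\ell$.
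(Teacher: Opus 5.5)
Your proof is correct and follows essentially the same route as the paper. Both root $\mathsf T$ at a leaf and repeatedly cut off a deepest subtree of weight at least $\ell$, whose weight is at most $2\ell-1$ by subcubicity; the only difference is the stopping rule. You stop once the remaining weight is at most $3\ell$ and keep the remainder as one cluster, whereas the paper cuts until the remainder weighs less than $\ell$ and then merges it into an adjacent cluster of weight at most $3\ell-2$.
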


\begin{proof}
    Root $\mathsf{T}$ at a leaf, so every vertex has at most two children.
    We iteratively cut off rooted subtrees, proceeding from the leaves toward the root.
    When the current remaining tree has weight at least $\ell$, we choose a deepest vertex $x$ whose current descendant subtree $D_x$ has weight at least $\ell$.
    Then every child subtree of $x$ has weight at most $\ell-1$.
    Since $x$ has at most two children and its own weight is at most $1$, the total weight of $D_x$ is at most $1+2(\ell-1)=2\ell-1$.
    Record $D_x$ as a cluster and remove it from the tree. The remaining part is connected.

    When this procedure stops, the remaining tree $R$ has weight at most $\ell-1$.
    Since the original total weight is at least $\ell$, there is at least one cluster recorded.
    If $R$ is nonempty, then we merge it into any recorded cluster adjacent to it.
    The merged cluster is connected and has weight at most $3\ell-2$.
    Therefore, every cluster has weight between $\ell$ and $3\ell$.
    The clusters partition the vertices of a tree into connected subtrees, so contracting them preserves connectedness and cannot create a cycle. Hence, the quotient graph is a tree.
\end{proof}

We conclude this section by proving Proposition \ref{prop:treewidth-division}.
\begin{proof}[Proof of Proposition \ref{prop:treewidth-division}]
    Applying Lemma \ref{lem:decomposition-refinement} to $F$ yields a subcubic tree decomposition $(\mathsf{T},(B_x))$ and an ownership map $\pi:E(F)\to V(\mathsf T)$ such that every vertex of $\mathsf T$ owns at most one edge.
    Give a vertex weight $1$ if it owns an edge and weight $0$ otherwise.
    Then the total weight is $|E(F)|$.

    Lemma \ref{lem:partition-tree} divides $V(\mathsf{T})$ into connected clusters \begin{equation*}
        V(\mathsf T)=\mathcal C_1\sqcup \mathcal C_2\sqcup \cdots \sqcup \mathcal C_m,
    \end{equation*} with weights between $\ell$ and $3\ell$.
    Let $T_i$ denote the graph whose edge set consists of the edges owned by the vertices in $\mathcal C_i$ and whose vertex set consists of the endpoints of those edges.
    Every edge is owned by a unique vertex, and hence the $T_i$ form an ordered edge decomposition of $F$.
    Moreover, $\mathsf e(T_i)$ equals the total weight of $\mathcal{C}_i$ and therefore belongs to $[\ell,3\ell]$.

    It suffices to show that $ \mathrm{ov}(F,\mathcal{T}(F))\le(k+1)(m-1)$.
    For every vertex $v\in V(F)$, write $I_v:=\{i:v\in V(T_i)\}$.
    Since $F$ is connected and has at least one edge, it has no isolated vertices, so $I_v$ is nonempty.
    Hence, \begin{equation}
        \mathrm{ov}(F,\mathcal{T}(F))=\sum\limits_{i=1}^m \mathsf{v}(T_i)-\mathsf{v}(F)=\sum\limits_{v\in V(F)}(|I_v|-1).
        \label{eq:overlap-decomposition}
    \end{equation}
    Let $\mathsf Q$ denote the quotient tree obtained by contracting all clusters $\mathcal C_i$.
    The vertices in $\mathsf T$ whose bags contain $v$ form a connected subtree $\mathsf T[v]$. Let $\mathsf{Q}[v]$ denote its image under contraction.
    Then $\mathsf{Q}[v]$ is a connected subtree of $\mathsf{Q}$.

    If $v\in V(T_i)$, then an edge incident to $v$ is owned by a vertex in $\mathcal C_i$, and the owner's bag contains $v$.
    Hence $I_v\subset V(\mathsf Q[v])$, and consequently
    \begin{equation}
        |I_v|-1\le |E(\mathsf Q[v])|.
        \label{eq:overlap-decomposition-edge}
    \end{equation}
    
    For each edge $ij\in E(\mathsf{Q})$, since $\mathsf{T}$ is a tree, there is a unique edge $x_{ij}y_{ij}\in E(\mathsf T)$ with $x_{ij}\in\mathcal C_i$ and $y_{ij}\in\mathcal C_j$.
    If the edge $ij$ belongs to some $\mathsf{Q}[v]$, then both $x_{ij}$ and $y_{ij}$ belong to $\mathsf{T}[v]$, which implies $v\in B_{x_{ij}}\cap B_{y_{ij}}$.
    Since the width of the tree decomposition is at most $k$, $|B_{x_{ij}}\cap B_{y_{ij}}|\le k+1$.
    Consequently,
    \begin{equation*}
        \sum\limits_{v\in V(F)} (|I_v|-1)\le \sum\limits_{v\in V(F)} |E(\mathsf Q[v])|\le \sum\limits_{ij\in E(\mathsf{Q})} |B_{x_{ij}}\cap B_{y_{ij}}|\le(k+1)(m-1),
    \end{equation*}
    where the final inequality uses that $\mathsf Q$ is a tree.
    Combining this with \eqref{eq:overlap-decomposition} completes the proof.
\end{proof}

\section{Acknowledgments and Use of AI Tools}
The authors used OpenAI's GPT-5.6 to assist with drafting and revising
portions of the exposition and with generating TikZ code for several
figures. GPT-5.6 also helped identify the correspondence between the Fourier
weights and isomorphism triples.
In the authors' view, this perspective streamlines the relevant arguments
and clarifies their combinatorial structure, although a similar
estimate can also be obtained through more involved counting arguments. All AI-assisted material was carefully
reviewed, edited, and independently verified by the authors, who take
full responsibility for the contents of the paper.

The authors used OpenAI Codex to formalize in Lean the statements of all theorems in this paper. The resulting code is available at
\url{https://github.com/fifalsp/lean/tree/main/2608-low-degree-indistinguishability}.

\appendix

\section{A Low-Degree--Statistical Gap in the Supercritical Regime}
\label{sec:super-quenched-gap}
In this appendix, we construct a planted-subgraph model in the supercritical regime that exhibits a gap between low-degree and statistical distinguishability.
We first state the required properties of a deterministic sequence of planted graphs and show that, for every sufficiently large fixed $c$, the resulting model satisfies the low-degree condition \eqref{eq:low-degree-condition}, yet remains asymptotically distinguishable from the null even after the noise operation. 
The construction of such a sequence is deferred to the final subsection, where we start from random $D$-regular graphs and prune their short cycles.

Fix constants
\begin{equation*}
    r\in(0,1/2),\qquad \gamma\in(0,1),\qquad\varepsilon\in(0,1),
\end{equation*}
and choose an even integer $D \ge 6$ such that
\begin{equation}
    \frac{(1-\varepsilon)D }{2}>1.
    \label{eq:deterministic-degree-choice}
\end{equation}
Set
\begin{equation*}
    m_n:=\lfloor n^r\rfloor,\qquad
    D_n:=\left\lceil(\log n)^2\right\rceil.
\end{equation*}

\begin{lemma}
    There exist constants $A=A(D ,r,\gamma)>0$ and $\kappa=\kappa(D ,r)>0$, together with a deterministic sequence of simple graphs $\{\Gamma_n\}_{n\ge1}$ on $[m_n]$, satisfying the following conditions,
    \begin{align}
        \max_{v\in[m_n]}\deg_{\Gamma_n}(v)&\le D ,
        \label{eq:deterministic-max-degree}\\
        \mathsf e(\Gamma_n[U])
        &\le \left(1+\frac{A}{\log n}\right)|U|
        \quad\text{for every }U\subseteq[m_n]
        \text{ with }2\le |U|\le m_n^\gamma,
        \label{eq:deterministic-local-sparsity}\\
        \operatorname{girth}(\Gamma_n)&\ge \kappa\log n,
        \label{eq:deterministic-girth}\\
        \frac{\mathsf e(\Gamma_n)}{m_n}&=\frac{D }{2}-o(1).
        \label{eq:deterministic-edge-density}
    \end{align}
    \label{lem:deterministic-planted-graphs}
\end{lemma}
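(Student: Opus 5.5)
We construct $\Gamma_n$ by the probabilistic method, starting from the $D$-regular configuration model on $m_n$ vertices and conditioning on simplicity, which occurs with probability bounded below by a positive constant depending only on $D$ \cite{Bollobas1980RegularGraphs}. Let $G$ be the resulting uniform random simple $D$-regular graph on $[m_n]$, and fix a small constant $\kappa>0$ to be chosen. We show that with positive probability (for large $n$) three properties hold simultaneously:
\begin{itemize}
    \item[(a)] the number of cycles of length at most $\kappa\log n$ is $o(m_n)$;
    \item[(b)] every $U$ with $2\le|U|\le m_n^\gamma$ satisfies $\mathsf e(G[U])\le|U|+A'$ for a constant $A'$;
    \item[(c)] more strongly, every connected subgraph $H$ with $\mathsf v(H)\le m_n^\gamma$ has excess $\mathsf e(H)-\mathsf v(H)\le A'$.
\end{itemize}
Property (c) implies (b) after summing over components, since components are disjoint and a tree component has excess $-1$.

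For (a), the expected number of $k$-cycles in the configuration model is at most $(D-1)^k/(2k)\cdot(1+o(1))$ uniformly for $k\le\kappa\log n$. Summing gives $O((D-1)^{\kappa\log n})=O(n^{\kappa\log(D-1)})$. Choosing $\kappa$ with $\kappa\log(D-1)<r/2$ makes this $o(m_n)$, and Markov's inequality (after the simplicity conditioning) gives (a) with high probability.

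For (c) we use the standard first-moment count. The expected number of connected subgraphs on $s$ vertices with $s+j$ edges, for $j\ge1$, is at most $m_n^{s}\,(Cs)^{O(j)}\,C^s\,(D/m_n)^{s+j}$ up to constants, giving roughly $(CD)^{s}s^{O(j)}m_n^{-j}$. This computation is the main obstacle: the $C^s$ factor from counting spanning trees is not killed by $m_n^{-j}$ when $s$ is as large as $m_n^\gamma$. A more careful count is needed: a connected graph with excess $j$ is a kernel of $O(j)$ vertices and edges with paths attached, and in a $D$-regular graph one embeds it by exploring from a root, which yields roughly $m_n D^{s}$ choices of tree edges. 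Each extra edge closing back to an already-explored vertex costs a factor $O(s/m_n)$. This gives expected count $\lesssim m_n D^{O(s)}(s^2/m_n)^{j+1}$, but the $D^{s}$ factor remains problematic.

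The standard remedy, and the reason the bound carries the slack $A/\log n$, is to allow excess growing like $A|U|/\log n$ rather than a constant. A union bound over sets $U$ of size $u$ with $\mathsf e(G[U])\ge u+t$, $t=\lceil Au/\log n\rceil$, costs $\binom{m_n}{u}\binom{Du/2}{u+t}(Du/m_n)^{u+t}\le (C u/m_n)^{t}\,C^{u}$. The condition $u\le m_n^\gamma$ gives $(Cu/m_n)^t\le n^{-r(1-\gamma)t/2}=e^{-r(1-\gamma)Au/2}$. Choosing $A$ large in terms of $D,r,\gamma$ beats $C^u$, and the sum over $u\ge2$ is $o(1)$; for small $u$ one uses $t\ge1$ and the factor $m_n^{-1}$. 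This yields (b) with the stated form $(1+A/\log n)|U|$ directly, and (c) is not needed.

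Given a realization satisfying (a) and (b), delete one edge from each cycle of length below $\kappa\log n$, iterating if necessary; since there are $o(m_n)$ such cycles, at most $o(m_n)$ edges are removed. Deletion preserves the maximum degree bound \eqref{eq:deterministic-max-degree} and the sparsity bound \eqref{eq:deterministic-local-sparsity}, since both are monotone. It forces girth at least $\kappa\log n$, giving \eqref{eq:deterministic-girth}, and leaves $Dm_n/2-o(m_n)$ edges, giving \eqref{eq:deterministic-edge-density}. Fixing one such realization for each $n$ gives the deterministic sequence.
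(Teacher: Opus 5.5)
Your final argument is correct and is essentially the paper's proof. It uses the $D$-regular configuration model, a first-moment bound on cycles of length $O(\log n)$ plus Markov, and a union bound over $U$ and over sets of disjoint half-edge pairs with excess threshold $t\approx A|U|/\log n$. That bound is split into $|U|\lesssim\log m_n$, where $t\ge 1$ already gives a factor $O(|U|/m_n)$, and larger $|U|$, where choosing $A$ large beats $C^{|U|}$. It then uses simplicity with probability bounded below and deletes one edge from each short cycle.

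One cleanup is needed: your opening bullets (b) with a constant excess $A'$ and (c) are not just unnecessary but false near the scale $m_n^\gamma$. A connected set linking $j$ disjoint short cycles by paths of length $O(\log m_n)$ has excess about $j$ with only $O(j\log m_n)$ vertices, so delete these bullets rather than leaving them as claims.
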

\begin{remark}
    \label{rmk:deterministic}
    The first three conditions in the lemma are designed to control $\beta_n(F)$ uniformly over all graphs $F$ appearing in the low-degree expansion. 
    Bounded degree \eqref{eq:deterministic-max-degree} controls the isomorphism-triple factor $N_{\Gamma_n}(F)^2|\operatorname{Aut}(F)|$. 
    For cyclic $F$, local sparsity \eqref{eq:deterministic-local-sparsity} bounds the excess $\mathsf e(F)-\mathsf v(F)$ and hence the factor $n^{\mathsf e(F)-\mathsf v(F)}$, while logarithmic girth \eqref{eq:deterministic-girth} forces $F$ to contain sufficiently many edges. 
    Finally, positive edge density \eqref{eq:deterministic-edge-density} ensures that enough planted edges survive the noise for a scan test to distinguish the alternative from the null. 
\end{remark}

Having fixed the sequence from Lemma \ref{lem:deterministic-planted-graphs}, let $Q_n=G(n,c/n)$ and define $P_n^{\mathrm{reg}}$ by planting a uniformly random copy of $\Gamma_n$ in an independent sample from $Q_n$, as in Section \ref{section:model}.
Write
\begin{equation*}
    L_n^{\mathrm{reg}}:=\frac{dP_n^{\mathrm{reg}}}{dQ_n},\qquad(L_n^{\mathrm{reg}})^{\le D_n}:=\Pi_{\le D_n}^{Q_n}(L_n^{\mathrm{reg}}).
\end{equation*}

\begin{theorem}
    There exists a constant $c_0=c_0(D ,r,\gamma)>1$ such that, for every fixed $c\ge c_0$,
    \begin{equation}
        \left\lVert(L_n^{\mathrm{reg}})^{\le D_n}
        \right\rVert_{L^2(Q_n)}^2=1+o(1).
        \label{eq:deterministic-low-degree}
    \end{equation}
    Nevertheless,
    \begin{equation}
        \left\lVert T_\varepsilon P_n^{\mathrm{reg}}-Q_n
        \right\rVert_{\mathrm{TV}}=1-o(1).
        \label{eq:deterministic-statistical}
    \end{equation}
    Thus, the low-degree--statistical gap holds for a deterministic
    sequence of planted graphs.
    \label{thm:deterministic-gap}
\end{theorem}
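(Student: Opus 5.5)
The proof splits into the low-degree bound \eqref{eq:deterministic-low-degree} and the statistical bound \eqref{eq:deterministic-statistical}. We handle them separately, using only the properties listed in \cref{lem:deterministic-planted-graphs}.

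\textbf{Low-degree part.} Since $\mathsf e(\Gamma_n)\le Dm_n/2=O(n^r)$ with $r<1/2$, \cref{lem:leading-term} applies uniformly, so it suffices to show that $\sum_{F:\,1\le\mathsf e(F)\le D_n}\beta_n(F)=o(1)$. By the component injection already used in \cref{lem:cluster-expansion}, this reduces to showing that the sum over connected $F$ with $\mathsf e(F)\le D_n$ is $o(1)$; the full sum is then bounded by $\theta/(1-\theta)$.

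For connected $F$ with $e$ edges, we bound $|\mathrm{Tri}(F)|=N_{\Gamma_n}(F)^2|\Aut(F)|$. Count pairs (rooted embedding of a spanning-tree exploration of $F$ into $\Gamma_n$). A connected $e$-edge subgraph containing a given vertex is encoded by a walk or DFS word, which gives at most $m_n D^{O(e)}$ concrete copies overall. The pair $(F_2^\circ,\psi)$ is determined by the image of one root vertex together with $D^{O(e)}$ local choices, since maximum degree is at most $D$. Summing over isomorphism types, the total is at most $m_n^2 D^{3e}$; this is the content of the stated \cref{lem:deterministic-connected-count}.

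Now split connected $F$ by excess:
\begin{itemize}
\item \emph{Trees.} Here $n^{\mathsf e-\mathsf v}=n^{-1}$, so the contribution is at most $\sum_e m_n^2D^{3e}c^{-e}/n$. Taking $c>D^3$ makes this $O(m_n^2/n)=o(1)$.
\item \emph{Cyclic $F$.} We have $\mathsf v(F)\le e+1\le D_n+1\le m_n^\gamma$. Local sparsity gives $\mathsf e(F)-\mathsf v(F)\le A\mathsf v(F)/\log n\le 2A\,e/\log n$, so $n^{\mathsf e-\mathsf v}\le e^{2Ae}$. Girth forces $e\ge\kappa\log n$. The contribution is therefore at most
\[
\sum_{e\ge\kappa\log n}m_n^2(D^3e^{2A}/c)^e.
\]
Choosing $c_0$ so that $D^3e^{2A}/c\le e^{-(2r+1)/\kappa}$ makes this $O(n^{2r}\cdot n^{-(2r+1)})=o(1)$.
\end{itemize}
The main obstacle is the counting lemma: controlling $\Aut(F)$ together with $N^2$ uniformly for bounded-degree hosts. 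The triple formulation resolves it, since $\psi$ is determined by where a fixed root and the traversal order go.

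\textbf{Statistical part.} Under $T_\varepsilon P_n^{\rm reg}$, the number of retained planted edges is $\mathrm{Bin}(\mathsf e(\Gamma_n),1-\varepsilon)$, which concentrates at $(1-\varepsilon)(D/2-o(1))m_n>\tau m_n$ for a fixed $\tau\in(1,(1-\varepsilon)D/2)$. Hence the planted vertex set (padded to $m_n$ vertices) spans at least $\tau m_n$ edges with high probability. The test rejects $Q_n$ iff some $m_n$-set spans at least $\tau m_n$ edges.

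Under $Q_n$, a union bound gives
\[
\binom{n}{m_n}\binom{\binom{m_n}{2}}{\tau m_n}(c/n)^{\tau m_n}\le \exp\bigl(m_n\log n\,(1+r\tau-\tau)+O(m_n)\bigr).
\]
This uses $\binom{m_n^2}{\tau m_n}(c/n)^{\tau m_n}\le (e c m_n/(2\tau n))^{\tau m_n}$, which is about $n^{-(1-r)\tau m_n}$. The exponent coefficient is $1-(1-r)\tau$, and it is negative precisely when $\tau>1/(1-r)$. Since $r$ is fixed but $\tau$ can be taken only below $(1-\varepsilon)D/2$, we may need to enlarge $D$ (or shrink $r$) so that $(1-\varepsilon)D/2>1/(1-r)$, and choose $\tau$ in between. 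Then the $Q_n$ probability is $o(1)$, giving total variation distance $1-o(1)$.
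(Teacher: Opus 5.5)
\textbf{Low-degree half.} This is essentially the paper's argument, and it is fine. You use the same counting bound $m_n^2D^{3e}$ for connected supports and the same split into trees (using $m_n^2/n=o(1)$) and cyclic supports (using local sparsity plus logarithmic girth). Your choice of $c_0$ depends only on $D,r,\gamma$. You reduce to connected $F$ via the component bound $\theta/(1-\theta)$ rather than the paper's $\exp(\cdot)-1$ bound with the $1/k_H!$ factors; this is equally valid.

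\textbf{Statistical half: the gap.} You bound $\binom{n}{m_n}$ by $n^{m_n}$, which overestimates it by a factor of roughly $(m_n/e)^{m_n}=n^{rm_n(1+o(1))}$. With that loss, your union bound only vanishes when $\tau>1/(1-r)$. The theorem, however, is asserted for arbitrary fixed $r\in(0,1/2)$, $\varepsilon\in(0,1)$, and every even $D\ge 6$ satisfying only \eqref{eq:deterministic-degree-choice}, i.e.\ $(1-\varepsilon)D/2>1$. Whenever $1<(1-\varepsilon)D/2\le 1/(1-r)$, no admissible $\tau$ exists and your argument does not establish \eqref{eq:deterministic-statistical}; for example, take $r=0.4$, $D=6$, $1-\varepsilon=1/2$. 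Enlarging $D$ or shrinking $r$ is not a legitimate fix. These are the fixed data of the statement, and changing them changes $m_n$, $\Gamma_n$, and $c_0$. It would only give you some gap example, not the theorem as stated.

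\textbf{The fix.} Use $\binom{n}{m_n}\le (en/m_n)^{m_n}=\exp\{(1-r)m_n\log n+O(m_n)\}$. Combined with your (correct) bound $\bigl(ecm_n/(2\tau n)\bigr)^{\tau m_n}=\exp\{-\tau(1-r)m_n\log n+O(m_n)\}$, the $Q_n$-probability that some $m_n$-vertex set spans at least $\lceil\tau m_n\rceil$ edges is at most
\[
\exp\bigl\{-(\tau-1)(1-r)m_n\log n+O(m_n)\bigr\}=o(1).
\]
This vanishes for every $\tau>1$. So any $\tau\in\bigl(1,(1-\varepsilon)D/2\bigr)$ works, exactly as in the paper, and no extra condition on $D$ or $r$ is needed.
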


\subsection{Low-Degree Indistinguishability}
\label{subsec:deterministic-low-degree}

In this subsection, we prove \eqref{eq:deterministic-low-degree}, following the spirit of Remark \ref{rmk:deterministic}.
We begin with a counting estimate for bounded-degree graphs.
For an abstract graph $F$, let
\begin{equation*}
    \operatorname{Emb}_{\Gamma_n}(F) :=N_{\Gamma_n}(F)|\operatorname{Aut}(F)|
\end{equation*}
denote the number of injective embeddings of $F$ into $\Gamma_n$.

\begin{lemma}
    Let $\Gamma$ be a graph on $m$ vertices with maximum degree at most $D $.
    For every $e\ge1$,
    \begin{equation}
        \sum_{\substack{F:\ F\text{ connected}\\
        \mathsf e(F)=e}}N_\Gamma(F)^2|\operatorname{Aut}(F)|\le m^2D ^{3e}. \label{eq:deterministic-connected-count}
    \end{equation}
    \label{lem:deterministic-connected-count}
\end{lemma}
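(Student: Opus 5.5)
We bound the left-hand side of \eqref{eq:deterministic-connected-count} by counting the ordered pairs of injective embeddings. Since $N_\Gamma(F)^2|\operatorname{Aut}(F)| = \operatorname{Emb}_\Gamma(F)^2/|\operatorname{Aut}(F)|$, we may use the isomorphism-triple picture of \cref{lem:isomorphism-triples}. The quantity $N_\Gamma(F)^2|\operatorname{Aut}(F)|$ equals $|\mathrm{Tri}(F)|$, the number of triples $(F_1^\circ,F_2^\circ,\psi)$ consisting of two concrete copies of $F$ in $\Gamma$ together with an isomorphism between them. Summing over all connected abstract $F$ with $e$ edges therefore counts all triples $(F_1^\circ,F_2^\circ,\psi)$ in which $F_1^\circ$ is a connected $e$-edge subgraph of $\Gamma$ and $\psi$ is an isomorphism onto another subgraph of $\Gamma$. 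The abstract type is recovered from $F_1^\circ$, so no overcounting issues arise. It therefore suffices to show that the number of such triples is at most $m^2D^{3e}$.

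\textbf{Step 1: encoding the first copy.} We encode a connected $e$-edge subgraph $F_1^\circ$ by a root vertex $u\in V(\Gamma)$ (at most $m$ choices) and a closed walk, or equivalently an Euler-tour style exploration, of a spanning structure. Concretely, fix a canonical depth-first traversal of $F_1^\circ$ from $u$ that traverses every edge of $F_1^\circ$. At each step the traversal either moves along a new edge, specified by choosing one of at most $D$ neighbors in $\Gamma$, or backtracks. A crude bound uses a walk of length at most $2e$ that covers all edges, with each step chosen among at most $D$ neighbors, plus a bit per step recording new versus old. Simpler still: order the edges of $F_1^\circ$ so that each edge after the first shares a vertex with an earlier one (possible by connectivity). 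Each new edge is then specified by an already-present vertex and one of its at most $D$ neighbors. The choice of the already-present vertex is the delicate part, so I would instead encode by the DFS sequence, in which each step is ``go to the $j$-th neighbor'' ($j\le D$) or ``backtrack''. This yields at most $m(D+1)^{2e}$ choices, which is absorbed into $m D^{2e}$ after adjusting, since $D\ge 6$ gives room.

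\textbf{Step 2: encoding $\psi$ and the second copy.} Given $F_1^\circ$ together with its canonical exploration, the isomorphism $\psi$ is determined by $\psi(u)$ (at most $m$ choices) and, following the same exploration order, by the image of each newly discovered vertex. Each such image is a neighbor in $\Gamma$ of the image of an already-mapped vertex, giving at most $D$ choices per new vertex. Hence there are at most $D^{\mathsf v(F)-1}\le D^{e}$ choices, using connectivity to get $\mathsf v(F)-1\le e$. The map $\psi$ determines $F_2^\circ=\psi(F_1^\circ)$. Multiplying the counts gives $m\cdot D^{2e}\cdot m\cdot D^{e}=m^2D^{3e}$, as required.

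The main obstacle is making the encoding of $F_1^\circ$ cost exactly $D^{2e}$ rather than $(D+1)^{2e}$ or something involving $e!$. I would handle this by encoding edges directly: process the vertices in BFS order, and for each vertex specify the subset of its at most $D$ incident $\Gamma$-edges that lie in $F_1^\circ$. This gives at most $2^D$ choices per vertex, which is too lossy. The alternative is to specify, for each of the $e$ edges in a connectivity-compatible order, the pair (rank of its earlier endpoint in discovery order, neighbor index). To avoid the rank factor, I would use the DFS walk of length exactly $2e'$ on a spanning tree ($e'=\mathsf v-1$), where each forward step has at most $D$ choices and the forward/backward pattern is a Dyck word with at most $4^{e'}$ options. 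Each of the $e-e'$ non-tree edges is then given by a tree vertex and a neighbor index. Balancing these counts gives at most $m\,(4D)^{e'}\cdot(\text{small})^{e-e'}\le mD^{2e}$ when $D\ge 6$, via $4D\le D^2$ for the tree edges and an analogous bound for the non-tree edges, each of which I would attach to its DFS-later endpoint with at most $D$ choices.
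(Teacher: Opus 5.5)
Your overall architecture matches the paper's. You rewrite the left side as $\sum_F N_\Gamma(F)\operatorname{Emb}_\Gamma(F)$ (equivalently, you count triples). You then bound the embeddings of a connected $F$ by $mD^{\mathsf v(F)-1}\le mD^{e}$ via a rooted spanning tree; your Step 2 is exactly the paper's argument. That leaves the need to show that $\Gamma$ has at most $mD^{2e}$ connected $e$-edge subgraphs.

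\textbf{The gap is Step 1: you never actually establish the $mD^{2e}$ bound.}
\begin{itemize}
\item The $(D+1)^{2e}$ bound cannot be ``absorbed''. The ratio $(1+1/D)^{2e}$ is unbounded in $e$, and no size assumption on $D$ changes that.
\item Your Dyck-word variant does handle the spanning-tree edges, giving $m(4D)^{e'}\le mD^{2e'}$ for $D\ge 4$. But your treatment of the $e-e'$ non-tree edges is not a valid encoding. Saying each one is ``attached to its DFS-later endpoint with at most $D$ choices'' ignores the cost of identifying that endpoint. That costs up to $\mathsf v(F)$ choices per edge, or $2^D$ per vertex if you encode subsets of incident edges.
\item The ``$(\text{small})^{e-e'}$'' balancing is never carried out.
\end{itemize}

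\textbf{The missing observation is that no extra symbols are needed.} Double every edge of $F_1^\circ$. The resulting multigraph is connected with all degrees even, so it has an Euler circuit of length exactly $2e$ starting from, say, its least vertex. Every step of this circuit moves along one of the at most $D$ edges of $\Gamma$ at the current vertex. This includes what you call a backtrack and every traversal of a non-tree edge. So the circuit is encoded by its starting vertex and a word in $[D]^{2e}$. The edge set of the walk is exactly $E(F_1^\circ)$, so the encoding is injective. This gives $mD^{2e}$ directly, and it is what the paper does.

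You were one step away with your ``walk of length at most $2e$ \ldots\ plus a bit per step recording new versus old''. The walk determines its own edge set, so the extra bit is superfluous.
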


\begin{proof}
    Fix a connected graph $F$ with $v$ vertices and $e$ edges, together with a root and a spanning tree.
    After choosing the image of the root, the spanning tree can be embedded by choosing the image of each new vertex among at most $D $ neighbors of the image of its parent.
    Ignoring injectivity and the remaining edge constraints only enlarges the count, and hence
    \begin{equation*}
        \operatorname{Emb}_{\Gamma}(F) \le mD ^{v-1}\le mD ^e.
    \end{equation*}
    Moreover, the number of connected $e$-edge subgraphs of $\Gamma$ is at most $mD ^{2e}$.
    Indeed, fix orders on the vertices and incident edges of $\Gamma$.
    For each connected subgraph, root its doubled-edge multigraph at its
    least vertex and choose the lexicographically first Euler tour. The
    initial vertex and the resulting sequence of $2e$ local edge choices
    determine the subgraph, giving at most $mD ^{2e}$ possibilities.
    Therefore,
    \begin{align*}
        \sum_{\substack{F:\ F\text{ connected}\\
        \mathsf e(F)=e}} N_\Gamma(F)^2|\operatorname{Aut}(F)| &=\sum_F N_\Gamma(F)\operatorname{Emb}_\Gamma(F)\\
        &\le mD ^e\sum_F N_\Gamma(F) \le m^2D ^{3e},
    \end{align*}
    as claimed.
\end{proof}

We next combine Lemma \ref{lem:deterministic-connected-count} with the Fourier expansion \eqref{eq:ldlr-2} to prove \eqref{eq:deterministic-low-degree}.

\begin{proof}[Proof of \eqref{eq:deterministic-low-degree} in Theorem \ref{thm:deterministic-gap}]
    Recall from \eqref{eq:ldlr-2} and Lemma \ref{lem:leading-term} that, uniformly over $1\le\mathsf e(F)\le D_n$, the contribution of $F$ to the squared low-degree norm is at most twice
    \begin{equation*}
        \beta_n(F) =\frac{n^{\mathsf e(F)-\mathsf v(F)}} {c^{\mathsf e(F)}} N_{\Gamma_n}(F)^2|\operatorname{Aut}(F)|.
    \end{equation*}
    We begin by summing this quantity over connected graphs.

    If $F$ is a tree with $e$ edges, then $\mathsf v(F)=e+1$.
    Since $m_n^2/n=o(1)$, Lemma \ref{lem:deterministic-connected-count} gives
    \begin{equation}
        \sum_{\substack{F:\ F\text{ a tree}\\
        1\le\mathsf e(F)\le D_n}} \beta_n(F) \le \frac{m_n^2}{n} \sum_{e=1}^{D_n}\left(\frac{D^3 }{c}\right)^e =o(1) \label{eq:deterministic-tree-sum}
    \end{equation}
    whenever $c>D^3 $.

    Suppose next that $F$ is connected and contains a cycle.
    Write $v=\mathsf v(F)$ and $e=\mathsf e(F)$.
    Since $F$ is connected and contains a cycle, $v\le e\le D_n\le m_n^\gamma$ for all sufficiently large $n$.
    The local-sparsity \eqref{eq:deterministic-local-sparsity} therefore implies
    \begin{equation*}
        e-v\le\frac{A}{\log n}v \le\frac{A}{\log n}e,
    \end{equation*}
    while \eqref{eq:deterministic-girth} implies $e\ge\kappa\log n$.
    Consequently, for sufficiently large $n$
    \begin{equation*}
        n^{e-v}\le\exp(Ae).
    \end{equation*}
    Applying Lemma \ref{lem:deterministic-connected-count} for each $e$ yields
    \begin{equation}
        \sum_{\substack{F:\ F\text{ connected and cyclic}\\
        1\le\mathsf e(F)\le D_n}} \beta_n(F) \le m_n^2 \sum_{e\ge\kappa\log n} \left(\frac{D^3 \exp(A)}{c}\right)^e. \label{eq:deterministic-cyclic-sum}
    \end{equation}
    We may choose $c_0$ sufficiently large that, for $c\ge c_0$,
    \begin{equation*}
        \theta:=\frac{D^3 \exp(A)}{c} <\exp\left(-\frac{3r}{\kappa}\right).
    \end{equation*}
    The right-hand side of \eqref{eq:deterministic-cyclic-sum} is then
    $o(1)$, because
    \begin{equation*}
        m_n^2\sum_{e\ge\lceil\kappa\log n\rceil}\theta^e
        =\frac{m_n^2\theta^{\lceil\kappa\log n\rceil}}{1-\theta}
        \le\frac{n^{2r}\exp(-3r\log n)}{1-\theta}
        =o(1).
    \end{equation*}
    Together with \eqref{eq:deterministic-tree-sum}, this proves
    \begin{equation*}
        \sum_{\substack{F:\ F\text{ connected}\\
        1\le\mathsf e(F)\le D_n}}\beta_n(F)=o(1).
    \end{equation*}

    It remains to pass from connected graphs to arbitrary graphs.
    Write an abstract graph $F$ as a disjoint union of pairwise nonisomorphic connected types $H$, with multiplicities $k_H$.
    Then
    \begin{equation*}
        |\operatorname{Aut}(F)| =\prod_H |\operatorname{Aut}(H)|^{k_H}k_H!,\qquad \operatorname{Emb}_{\Gamma_n}(F) \le\prod_H\operatorname{Emb}_{\Gamma_n}(H)^{k_H}.
    \end{equation*}
    Since $N_{\Gamma_n}(F)^2|\operatorname{Aut}(F)| =\operatorname{Emb}_{\Gamma_n}(F)^2/ |\operatorname{Aut}(F)|$, it follows that
    \begin{equation*}
        \beta_n(F) \le\prod_H\frac{\beta_n(H)^{k_H}}{k_H!}.
    \end{equation*}
    Summing over all multisets of connected components gives
    \begin{equation*}
        \sum_{\substack{F:\ 1\le\mathsf e(F)\le D_n}} \beta_n(F) \le\exp\left( \sum_{\substack{F:\ F\text{ connected}\\
        1\le\mathsf e(F)\le D_n}}\beta_n(F)\right)-1=o(1).
    \end{equation*}
    The exact Fourier formula \eqref{eq:ldlr-2} and Lemma \ref{lem:leading-term} now prove \eqref{eq:deterministic-low-degree}.
\end{proof}

\subsection{Statistical Distinguishability}
\label{subsec:deterministic-statistical}

In this subsection, we prove \eqref{eq:deterministic-statistical} using a scan test over all $m_n$-vertex subsets.
The test distinguishes $T_\varepsilon P_n^{\mathrm{reg}}$ from $Q_n$ by thresholding the maximum induced edge count among these subsets.

\begin{proof}[Proof of \eqref{eq:deterministic-statistical} in Theorem \ref{thm:deterministic-gap}]
    Choose a constant $\tau$ such that
    \begin{equation*}
        1<\tau<\frac{(1-\varepsilon)D }{2},
    \end{equation*}
    which is possible by \eqref{eq:deterministic-degree-choice}.
    For a graph $G$ on $[n]$, define
    \begin{equation*}
        M_n(G):=\max_{\substack{S\subset[n]\\
        |S|=m_n}} \mathsf e(G[S]),
    \end{equation*}
    and put $t_n:=\lceil\tau m_n\rceil$.

    Under $Q_n$, a union bound over the choices of $S$ and of $t_n$ edges within $S$ gives
    \begin{align*}
        \mathbb P_{Q_n}(M_n\ge t_n) &\le\binom{n}{m_n} \binom{\binom{m_n}{2}}{t_n} \left(\frac{c}{n}\right)^{t_n}\\
        &\le\exp\left\{-(\tau-1)(1-r)m_n\log n +o(m_n\log n)\right\}=o(1).
    \end{align*}

    Under $T_\varepsilon P_n^{\mathrm{reg}}$, each edge of the fixed planted graph $\Gamma_n$ is retained independently with probability $1-\varepsilon$.
    If $R_n$ denotes the number of retained planted edges, then \eqref{eq:deterministic-edge-density} and binomial concentration give
    \begin{equation*}
        \frac{R_n}{m_n}\xrightarrow{\mathbb P} \frac{(1-\varepsilon)D }{2}>\tau.
    \end{equation*}
    Taking $S$ to be the planted vertex set in the definition of $M_n$, we obtain $M_n\ge R_n$.
    Hence
    \begin{equation*}
        \mathbb P_{T_\varepsilon P_n^{\mathrm{reg}}}(M_n\ge t_n)=1-o(1).
    \end{equation*}
    The test that rejects the null when $M_n\ge t_n$ has vanishing total error, which proves \eqref{eq:deterministic-statistical} and completes the proof of Theorem \ref{thm:deterministic-gap}.
\end{proof}

\subsection{Construction of the Deterministic Planted Graphs}
\label{subsec:deterministic-construction}

In this subsection, we prove Lemma \ref{lem:deterministic-planted-graphs} using the $D$-regular configuration model.
We first show that the resulting multigraph is locally sparse and contains few short cycles with high probability.
We then select a simple realization satisfying both properties and delete one edge from each short cycle.
The resulting graph has logarithmic girth while retaining bounded degree, local sparsity, and asymptotically the full edge density of the original regular graph.

For the remainder of this subsection, write $m=m_n$, and let $G_n^{\mathrm{conf}}$ be the $D$-regular configuration model on $[m]$.
In this model, every vertex is equipped with $D$ half-edges, and a uniformly random perfect matching of the $Dm$ half-edges produces a $D$-regular multigraph.
Throughout this subsection, parallel edges are counted with multiplicity, while each loop is counted once as an edge and twice toward the degree of its incident vertex.

We begin with the local sparsity property of the configuration model.
For a fixed vertex set $U$, any collection of $k$ edges in $G_n^{\mathrm{conf}}[U]$ corresponds to $k$ disjoint pairs among the half-edges incident to $U$.
We therefore bound the probability of a dense induced subgraph by first taking a union bound over such collections of pairs and then another union bound over the choices of $U$.
\begin{lemma}
    \label{lem:deterministic-configuration-local-sparsity}
    There exists a constant $A=A(D,r,\gamma)>0$ such that
    \begin{equation}
        \mathbb P\left( \mathsf e(G_n^{\mathrm{conf}}[U]) \le \left(1+\frac{A}{\log n}\right)|U| \text{ for every }U\subseteq[m]\text{ with } 2\le |U|\le m^\gamma \right)=1-o(1). \label{eq:deterministic-configuration-local-event}
    \end{equation}
\end{lemma}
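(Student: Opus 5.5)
We will bound the failure probability by a first-moment union bound over vertex sets $U$ and edge collections inside them. Fix $u$ with $2\le u\le m^\gamma$ and set $k:=\lfloor(1+A/\log n)u\rfloor+1$, the smallest edge count that violates the bound. If $\mathsf e(G_n^{\mathrm{conf}}[U])\ge k$ for some $|U|=u$, then there are $k$ disjoint pairs among the $Du$ half-edges attached to $U$ that all appear in the random matching. The number of choices of $U$ is at most $\binom{m}{u}\le (em/u)^u$. The number of choices of $k$ disjoint pairs among $Du$ half-edges is at most $\binom{Du}{2k}(2k-1)!!\le (Du)^{2k}/(2^k k!)$, or more crudely $(Du)^{2k}$. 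For the configuration model, the probability that $k$ prescribed disjoint pairs all appear equals $\prod_{j=0}^{k-1}(Dm-2j-1)^{-1}\le (Dm/2)^{-k}$, since $2k\le 2Du\le Dm$ for large $n$.

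Combining these, the expected number of violating configurations of size $u$ is at most
\[
\left(\frac{em}{u}\right)^u\left(\frac{(Du)^2}{Dm/2}\right)^{k}\cdot\frac{1}{k!}\cdot 1
\le \left(\frac{em}{u}\right)^u\left(\frac{C_D u^2}{m\,k}\right)^{k}
\le \left(\frac{C'_D\,u}{m}\right)^{k-u} C''^{\,u}_D,
\]
where we use $k\ge u$ and absorb the factor $(e m/u)^u(u/m)^u$ into $e^u$. The key quantity is therefore $(C u/m)^{k-u}C^u$. Since $u\le m^\gamma$, we have $u/m\le m^{-(1-\gamma)}=n^{-r(1-\gamma)+o(1)}$, and $k-u\ge Au/\log n$. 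The term is then at most
\[
\exp\bigl(u\log C-(A u/\log n)\,(r(1-\gamma)\log n-O(1))\bigr)
=\exp\bigl(-u\,(A r(1-\gamma)-\log C-o(1))\bigr).
\]
Choosing $A$ large in terms of $D,r,\gamma$ makes this at most $e^{-2u}$. Summing over $u\ge2$ then gives $O(e^{-4})$, which is not $o(1)$, so we need a little more care.

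The main obstacle is exactly this: small $u$, where the $A/\log n$ slack does not force any excess edge. The fix is to use $k-u\ge1$ always, since $k>u$ by definition. For $u\le \log n$ we bound the term by $(Cu/m)^{1}C^u\le C^{\log n}\,n^{-r(1-\gamma)}\log n$. This is $o(1/\log n)$ provided $\log C<r(1-\gamma)$, which need not hold. The better plan is therefore to bound the term by $(Cu/m)^{\max(1,Au/\log n)}C^u$. For $u\le \log n/A'$ we use the exponent $1$ together with $C^u\le n^{r(1-\gamma)/2}$, obtained by taking $A'$ large. For larger $u$ we use the exponential bound $e^{-2u}$, which sums to $e^{-2\log n/A'}=o(1)$. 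The small-$u$ range contributes at most $\log n\cdot n^{-r(1-\gamma)/2+o(1)}=o(1)$. The constant $A$ is then chosen depending on $D,r,\gamma$, and a final union over $u$ gives the claimed $1-o(1)$.
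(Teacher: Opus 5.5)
Your argument is correct and is essentially the paper's proof. You use the same union bound over $U$ and over $k$ disjoint half-edge pairs, reducing to a term of the form $C^u(u/m)^{k-u}$, and split at a threshold of order $\log n$: below it you use only $k-u\ge 1$ (the paper takes $s\le a\log m$ with $a\log C_D<1/2$), and above it you use $k-u>Au/\log n$ with $Ar(1-\gamma)$ large, exactly as the paper does. One small slip: the bound $\prod_{j<k}(Dm-2j-1)^{-1}\le (Dm/2)^{-k}$ does not follow from $2k\le Dm$; it follows from $2k\le Du\le Dm/2$ (or from $k\le 2u=o(m)$).
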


\begin{proof}
    Put
    \begin{equation*}
        \delta_n:=\frac{A}{\log n},\qquad k_s:=\lfloor(1+\delta_n)s\rfloor+1=s+t_s, \qquad t_s:=\lfloor\delta_n s\rfloor+1.
    \end{equation*}
    Fix $U\subseteq[m]$ with $|U|=s\le m^\gamma$.
    If $G_n^{\mathrm{conf}}[U]$ contains at least $k_s$ edges, then at least $k_s$ disjoint pairs among the $Ds$ half-edges incident to $U$ occur in the random perfect matching.
    A union bound over these pairs gives
    \begin{equation*}
        \mathbb P\bigl(\mathsf e(G_n^{\mathrm{conf}}[U])\ge k_s\bigr) \le \frac{(Ds)_{2k_s}}{2^{k_s}k_s!} \prod_{j=0}^{k_s-1}\frac{1}{Dm-(2j+1)}
    \end{equation*}
    whenever the event on the left is possible.
    Uniformly over the stated range, $k_s\le 2s=o(m)$ for all sufficiently large $n$, so the right-hand side is at most
    \begin{equation*}
        \frac{(Ds)^{2k_s}}{2^{k_s}k_s!} \left(\frac{2}{Dm}\right)^{k_s} \le \left(\frac{eDs}{m}\right)^{k_s},
    \end{equation*}
    where the last inequality uses $k_s\ge s$.
    Taking another union bound over the choices of $U$, using $\binom{m}{s}\le(em/s)^s$, and recalling that $k_s=s+t_s$, we obtain
    \begin{equation*}
        \begin{aligned}
            &\mathbb P\left(\exists U\subseteq[m]: |U|=s, \ \mathsf e(G_n^{\mathrm{conf}}[U])\ge k_s\right)\\
            &\qquad\le \left(\frac{em}{s}\right)^s \left(\frac{eDs}{m}\right)^{s+t_s} =(e^2D)^s(eD)^{t_s}\left(\frac{s}{m}\right)^{t_s}.
        \end{aligned}
    \end{equation*}
    Moreover, since $k_s\le2s$, we have $t_s\le s$.
    Hence, setting $C_D:=e^3D^2>1$, the preceding display yields
    \begin{equation}
        \mathbb P\left(\exists U\subseteq[m]: |U|=s, \ \mathsf e(G_n^{\mathrm{conf}}[U])\ge k_s\right) \le C_D^s\left(\frac{s}{m}\right)^{t_s}. \label{eq:deterministic-configuration-local-bound}
    \end{equation}

    Choose $a>0$ sufficiently small that $a\log C_D<1/2$.
    For $2\le s\le a\log m$, we have $t_s\ge1$, and therefore
    \begin{equation*}
        \begin{aligned}
            \sum_{2\le s\le a\log m} C_D^s\left(\frac{s}{m}\right)^{t_s} &\le \frac{1}{m}\sum_{2\le s\le a\log m}sC_D^s\\
            &=O\left((\log m)m^{-1+a\log C_D}\right) =o(1).
        \end{aligned}
    \end{equation*}
    For $a\log m<s\le m^\gamma$, we have $t_s>As/\log n$ and
    \begin{equation*}
        \log\frac{m}{s} \ge (1-\gamma)\log m =\bigl(r(1-\gamma)+o(1)\bigr)\log n.
    \end{equation*}
    Hence
    \begin{equation*}
        \begin{aligned}
            C_D^s\left(\frac{s}{m}\right)^{t_s} &=\exp\left\{s\log C_D-t_s\log\frac{m}{s}\right\}\\
            &\le \exp\left\{ \bigl(\log C_D-Ar(1-\gamma)+o(1)\bigr)s \right\}.
        \end{aligned}
    \end{equation*}
    Choose $A=A(D,r,\gamma)$ sufficiently large that $Ar(1-\gamma)>\log C_D+2$.
    For all sufficiently large $n$, the right-hand side of the above inequality is then at most $e^{-s}$ throughout this range, and
    \begin{equation*}
        \sum_{a\log m<s\le m^\gamma} C_D^s\left(\frac{s}{m}\right)^{t_s} \le \sum_{s>a\log m}e^{-s}=o(1).
    \end{equation*}
    Summing \eqref{eq:deterministic-configuration-local-bound} over $s$ thus proves \eqref{eq:deterministic-configuration-local-event}.
\end{proof}

We next bound the number of short cycles by a similar pairing argument.
A cycle of length $\ell$ determines a collection of $\ell$ disjoint pairs of half-edges in the underlying perfect matching, so summing the probabilities of these pair collections over all possible cycles gives an upper bound on the expected number of $\ell$-cycles.
Set
\begin{equation*}
    K_D:=2(D-1),\qquad h_n:=\left\lfloor\frac{\log m}{2\log K_D}\right\rfloor.
\end{equation*}

\begin{lemma}
    \label{lem:deterministic-configuration-short-cycles}
    Let $X_{n,\ell}$ denote the number of cycles of length $\ell$ with distinct vertices in $G_n^{\mathrm{conf}}$.
    Then
    \begin{equation}
        \mathbb P\left(\sum_{\ell=3}^{h_n}X_{n,\ell} \le m^{3/4}\right)=1-o(1). \label{eq:deterministic-few-short-cycles}
    \end{equation}
\end{lemma}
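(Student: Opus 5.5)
We bound the first moment $\mathbb E\bigl[\sum_{\ell=3}^{h_n}X_{n,\ell}\bigr]$ and then apply Markov's inequality at level $m^{3/4}$.

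\emph{Counting configurations.} Fix $\ell$ with $3\le\ell\le h_n$. A cycle of length $\ell$ with distinct vertices in $G_n^{\mathrm{conf}}$ is determined by the following data:
\begin{itemize}
\item a cyclic sequence of $\ell$ distinct vertices $v_0,\ldots,v_{\ell-1}$;
\item at each $v_i$, an ordered pair of distinct half-edges, one used toward $v_{i-1}$ and one toward $v_{i+1}$;
\item the requirement that the $\ell$ resulting disjoint pairs of half-edges all appear in the random perfect matching.
\end{itemize}
Counting sequences up to rotation and reflection, the number of such configurations is at most
\[
\frac{m^\ell\,(D(D-1))^\ell}{2\ell}.
\]

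\emph{Probability of a configuration.} A fixed collection of $\ell$ disjoint half-edge pairs is contained in a uniform perfect matching of $Dm$ half-edges with probability
\[
\prod_{j=0}^{\ell-1}\frac{1}{Dm-2j-1}.
\]
Since $\ell\le h_n=O(\log m)=o(m)$, this is at most $\bigl(2/(Dm)\bigr)^\ell$ for large $n$. Therefore
\[
\mathbb E X_{n,\ell}\le \frac{1}{2\ell}\Bigl(m\,D(D-1)\cdot\frac{2}{Dm}\Bigr)^\ell=\frac{(2(D-1))^\ell}{2\ell}\le K_D^\ell .
\]

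\emph{Summing over lengths.} Because $K_D\ge 2$, the geometric sum satisfies
\[
\sum_{\ell=3}^{h_n}\mathbb E X_{n,\ell}\le 2K_D^{h_n}.
\]
By the choice $h_n\le \log m/(2\log K_D)$, we have $K_D^{h_n}\le m^{1/2}$, so the expected total number of short cycles is at most $2m^{1/2}$. Markov's inequality then gives
\[
\mathbb P\Bigl(\sum_{\ell=3}^{h_n} X_{n,\ell}>m^{3/4}\Bigr)\le 2m^{-1/4}=o(1),
\]
which proves \eqref{eq:deterministic-few-short-cycles}.

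\emph{Main obstacle.} The only delicate point is the convention for short cycles in the multigraph. The half-edge choices at each vertex must be distinct, and the lengths $\ell=1,2$ (loops and parallel edges) are excluded by the range $\ell\ge3$. Neither convention changes the bound, since the count above only ever overcounts.
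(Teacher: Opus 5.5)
Your proposal is correct and is essentially the paper's proof. You count cyclic vertex sequences with an ordered pair of distinct half-edges at each vertex, divided by $2\ell$, and bound the matching probability by $(2/(Dm))^\ell$, which gives $\mathbb E X_{n,\ell}\le K_D^\ell/(2\ell)$. You then sum to $O(K_D^{h_n})=O(m^{1/2})$ and apply Markov's inequality at level $m^{3/4}$, exactly as the paper does.
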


\begin{proof}
    A cycle of length $\ell$ can be specified by choosing its cyclically ordered vertices and, at each vertex, an ordered pair of distinct half-edges.
    For each such choice, the probability that all $\ell$ prescribed pairs occur in the random perfect matching is
    \begin{equation*}
        \prod_{j=0}^{\ell-1}\frac{1}{Dm-(2j+1)}.
    \end{equation*}
    Consequently, uniformly for $\ell=O(\log m)$,
    \begin{equation*}
        \begin{aligned}
            \mathbb E[X_{n,\ell}] &\le \frac{(m)_\ell}{2\ell}\bigl(D(D-1)\bigr)^\ell \prod_{j=0}^{\ell-1}\frac{1}{Dm-(2j+1)}\\
            &\le \frac{m^\ell}{2\ell}\bigl(D(D-1)\bigr)^\ell \left(\frac{2}{Dm}\right)^\ell =\frac{K_D^\ell}{2\ell}.
        \end{aligned}
    \end{equation*}
    It follows that
    \begin{equation*}
        \mathbb E\left[\sum_{\ell=3}^{h_n}X_{n,\ell}\right] =O(K_D^{h_n})=O(m^{1/2}).
    \end{equation*}
    The claim now follows from Markov's inequality.
\end{proof}

The preceding lemmas show that, with high probability, $G_n^{\mathrm{conf}}$ satisfies the required local-sparsity condition and contains at most $m^{3/4}$ cycles of length at most $h_n$.
It remains to select a simple realization satisfying both properties and prune its short cycles, thereby completing the proof of Lemma \ref{lem:deterministic-planted-graphs}.

\begin{proof}[Proof of Lemma \ref{lem:deterministic-planted-graphs}]
    Since $D$ is even, $Dm$ is even for every $m$, so the configuration model is always well defined.
    For fixed $D$, the classical simplicity estimate \cite{Bollobas1980RegularGraphs} gives
    \begin{equation*}
        \mathbb P\bigl(G_n^{\mathrm{conf}}\text{ is simple}\bigr) =\exp\left(-\frac{D^2-1}{4}\right)+o(1)
    \end{equation*}
    as $m\to\infty$.
    In particular, the simplicity probability is bounded away from zero for all sufficiently large $n$.

    By Lemmas \ref{lem:deterministic-configuration-local-sparsity} and \ref{lem:deterministic-configuration-short-cycles}, the required local-sparsity property and short-cycle bound hold simultaneously with probability $1-o(1)$.
    Hence, their intersection with the simplicity event has positive probability for all sufficiently large $n$.
    Choose a simple $D$-regular realization $G_n^{\mathrm{reg}}$ in this intersection.
    For each cycle $C$ in $G_n^{\mathrm{reg}}$ of length at most $h_n$, choose an edge $e_C\in E(C)$, let $\mathcal R_n$ be the set of all chosen edges, and define $\Gamma_n:=G_n^{\mathrm{reg}}\setminus\mathcal R_n$.
    The short-cycle bound \eqref{eq:deterministic-few-short-cycles} yields
    \begin{equation*}
        |\mathcal R_n|\le m^{3/4}.
    \end{equation*}
    Every cycle in $G_n^{\mathrm{reg}}$ of length at most $h_n$ contains an edge in $\mathcal R_n$, so none of these cycles survives in $\Gamma_n$.
    Therefore,
    \begin{equation*}
        \operatorname{girth}(\Gamma_n)>h_n =\left(\frac{r}{2\log K_D}+o(1)\right)\log n.
    \end{equation*}
    Setting $\kappa:=r/(4\log K_D)$ now yields \eqref{eq:deterministic-girth} for all sufficiently large $n$.
    Since deleting edges cannot increase degrees or induced edge counts, $\Gamma_n$ satisfies \eqref{eq:deterministic-max-degree} and \eqref{eq:deterministic-local-sparsity}.
    Moreover, since $G_n^{\mathrm{reg}}$ is $D$-regular and $|\mathcal R_n|\le m^{3/4}$,
    \begin{equation*}
        \frac{D}{2}-m^{-1/4} \le \frac{\mathsf e(\Gamma_n)}{m} \le \frac{D}{2}.
    \end{equation*}
    This proves \eqref{eq:deterministic-edge-density}.
    For the finitely many remaining values of $n$, take $\Gamma_n$ to be an edgeless graph, with the usual convention that a forest has infinite girth.
    The first three conditions then remain valid, while modifying finitely many graphs does not affect the asymptotic relation \eqref{eq:deterministic-edge-density}.
    \end{proof}

    \bibliographystyle{alpha}
    \bibliography{ref}
\end{document}

%% file: tree_decomposition_original_graph.tex
\definecolor{tdink}{HTML}{25282C}
\definecolor{tdneutral}{HTML}{66717C}

\begin{tikzpicture}[
    x=0.95cm,
    y=0.95cm,
    line cap=round,
    line join=round,
    graph edge/.style={draw=tdneutral,line width=0.75pt},
    vertex/.style={
        circle,
        draw=tdink,
        fill=white,
        line width=0.55pt,
        minimum size=4.6mm,
        inner sep=0pt,
        font=\scriptsize
    }
]
    \coordinate (p1)  at ( 0.00, 3.80);
    \coordinate (p2)  at (-1.15, 2.80);
    \coordinate (p3)  at ( 1.15, 2.80);
    \coordinate (p4)  at ( 0.00, 2.05);
    \coordinate (p5)  at (-1.00, 0.85);
    \coordinate (p6)  at ( 1.00, 0.85);
    \coordinate (p7)  at ( 0.00,-0.05);
    \coordinate (p8)  at (-2.15,-0.15);
    \coordinate (p9)  at (-3.00,-1.10);
    \coordinate (p10) at (-1.30,-1.10);
    \coordinate (p11) at (-2.15,-2.00);

    \foreach \u/\v in {1/2,1/3,1/4,2/3,2/4,3/4,4/5,4/6,5/6,5/7,6/7,5/8,8/9,8/10,8/11,9/11,10/11}{
        \draw[graph edge] (p\u)--(p\v);
    }
    \foreach \i in {1,...,11}{
        \node[vertex] at (p\i) {\i};
    }
\end{tikzpicture}

%% file: tree_decomposition_stage_one.tex
\definecolor{tdink}{HTML}{25282C}
\definecolor{tdneutral}{HTML}{66717C}
\definecolor{tdlightneutral}{HTML}{F3F5F7}

\begin{tikzpicture}[
    line cap=round,
    line join=round,
    tree edge/.style={draw=tdink,line width=0.7pt},
    coarse bag/.style={
        rounded corners=2pt,
        draw=tdneutral,
        fill=tdlightneutral,
        line width=0.7pt,
        minimum width=34mm,
        minimum height=13mm,
        inner sep=2.4pt,
        align=center,
        font=\scriptsize
    },
    note/.style={font=\scriptsize,text=tdink,align=center}
]
    \node[coarse bag] (TA) at (0,3.55)
        {$A=\{1,2,3,4\}$\\[-1pt]
         \textsf{owns }$(1,2),(1,3),(1,4)$\\[-1pt]
         $(2,3),(2,4),(3,4)$};
    \node[coarse bag] (TB) at (0,1.75)
        {$B=\{4,5,6\}$\\[-1pt]
         \textsf{owns }$(4,5),(4,6),(5,6)$};
    \node[coarse bag] (TC) at (-2.65,0.05)
        {$C=\{5,6,7\}$\\[-1pt]
         \textsf{owns }$(5,7),(6,7)$};
    \node[coarse bag] (TD) at (2.40,0.05)
        {$D=\{5,8,9,11\}$\\[-1pt]
         \textsf{owns }$(5,8),(8,9)$\\[-1pt]
         $(8,11),(9,11)$};
    \node[coarse bag] (TE) at (2.40,-1.75)
        {$E=\{8,10,11\}$\\[-1pt]
         \textsf{owns }$(8,10),(10,11)$};

    \begin{scope}[on background layer]
        \draw[tree edge] (TB)--(TA);
        \draw[tree edge] (TB)--(TC);
        \draw[tree edge] (TB)--(TD)--(TE);
    \end{scope}

    \node[note] at (0,-2.75)
        {A decomposition-tree node may own several edges; every bag has size at most $4$.};
\end{tikzpicture}

%% file: tree_decomposition_stage_two.tex
\definecolor{tdink}{HTML}{25282C}
\definecolor{tdneutral}{HTML}{66717C}
\definecolor{tdblue}{HTML}{2878C8}
\definecolor{tdorange}{HTML}{E07A1F}
\definecolor{tdgreen}{HTML}{2D9348}

\begin{tikzpicture}[
    x=1.45cm,
    y=1cm,
    line cap=round,
    line join=round,
    owner/.style={
        rounded corners=1.3pt,
        line width=0.7pt,
        minimum width=8mm,
        minimum height=4.6mm,
        inner sep=0.8pt,
        font=\scriptsize
    },
    owner blue/.style={owner,draw=tdblue,fill=tdblue!8},
    owner orange/.style={owner,draw=tdorange,fill=tdorange!9},
    owner green/.style={owner,draw=tdgreen,fill=tdgreen!8},
    internal/.style={circle,line width=0.6pt,minimum size=2.1mm,inner sep=0pt},
    internal blue/.style={internal,draw=tdblue,fill=tdblue!35},
    internal orange/.style={internal,draw=tdorange,fill=tdorange!38},
    internal green/.style={internal,draw=tdgreen,fill=tdgreen!35},
    port/.style={
        rectangle,
        line width=0.65pt,
        minimum width=5.0mm,
        minimum height=4.0mm,
        inner sep=0.5pt,
        font=\scriptsize
    },
    port blue/.style={port,draw=tdblue,fill=white},
    port orange/.style={port,draw=tdorange,fill=white},
    port green/.style={port,draw=tdgreen,fill=white},
    blue tree edge/.style={draw=tdblue,line width=0.72pt},
    orange tree edge/.style={draw=tdorange,line width=0.72pt},
    green tree edge/.style={draw=tdgreen,line width=0.72pt},
    boundary edge/.style={draw=tdink,densely dotted,line width=1.0pt},
    same bag/.style={
        rounded corners=3pt,
        draw=tdneutral!80,
        dashed,
        fill=white,
        inner sep=3.5pt,
        line width=0.5pt
    },
    bag label/.style={font=\scriptsize,text=tdink}
]
    % Replacement tree R_A.
    \node[internal orange] (a1) at (-1.20,3.55) {};
    \node[internal blue] (a2) at (-0.60,3.55) {};
    \node[internal blue] (a3) at ( 0.00,3.55) {};
    \node[internal blue] (a4) at ( 0.60,3.55) {};
    \node[internal blue] (a5) at ( 1.20,3.55) {};
    \node[owner orange] (Ao12) at (-1.55,4.10) {$(1,2)$};
    \node[port orange]  (ApB)  at (-1.55,2.95) {$p_B$};
    \node[owner blue] (Ao13) at (-0.60,2.95) {$(1,3)$};
    \node[owner blue] (Ao14) at ( 0.00,2.95) {$(1,4)$};
    \node[owner blue] (Ao23) at ( 0.60,2.95) {$(2,3)$};
    \node[owner blue] (Ao24) at ( 1.55,2.95) {$(2,4)$};
    \node[owner blue] (Ao34) at ( 1.55,4.10) {$(3,4)$};
    \node[font=\scriptsize,anchor=east] at (-2.20,4.10) {root};
    \draw[-{Stealth[length=1.8mm,width=1.2mm]},draw=tdink,line width=0.65pt]
        (-2.12,4.10)--($(Ao12.west)+(-0.05,0)$);

    % Replacement tree R_B.
    \node[internal orange] (b1) at (-1.00,1.35) {};
    \node[internal orange] (b2) at (-0.33,1.35) {};
    \node[internal orange] (b3) at ( 0.33,1.35) {};
    \node[internal orange] (b4) at ( 1.00,1.35) {};
    \node[port orange]  (BpA)  at (-1.35,1.95) {$p_A$};
    \node[port orange]  (BpC)  at (-1.65,1.35) {$p_C$};
    \node[owner orange] (Bo45) at (-0.33,0.75) {$(4,5)$};
    \node[owner orange] (Bo46) at ( 0.33,0.75) {$(4,6)$};
    \node[port orange]  (BpD)  at ( 1.65,1.35) {$p_D$};
    \node[owner orange] (Bo56) at ( 1.00,0.75) {$(5,6)$};

    % Replacement tree R_C.
    \node[internal orange] (c0) at (-2.85,1.35) {};
    \node[owner orange] (Co57) at (-3.35,1.95) {$(5,7)$};
    \node[owner orange] (Co67) at (-3.35,0.75) {$(6,7)$};
    \node[port orange]  (CpB)  at (-2.30,1.35) {$p_B$};

    % Replacement tree R_D.
    \node[internal green] (d1) at (-1.00,-0.75) {};
    \node[internal green] (d2) at (-0.33,-0.75) {};
    \node[internal green] (d3) at ( 0.33,-0.75) {};
    \node[internal orange] (d4) at ( 1.00,-0.75) {};
    \node[owner green] (Do58)  at (-1.35,-0.15) {$(5,8)$};
    \node[port green]  (DpE)   at (-1.35,-1.35) {$p_E$};
    \node[owner green] (Do89)  at (-0.33,-1.35) {$(8,9)$};
    \node[owner green] (Do811) at ( 0.33,-1.35) {$(8,11)$};
    \node[port orange]  (DpB)   at ( 1.65,-0.75) {$p_B$};
    \node[owner orange] (Do911) at ( 1.00,-1.35) {$(9,11)$};

    % Replacement tree R_E.
    \node[internal green] (e0) at (-1.85,-2.65) {};
    \node[port green]  (EpD)    at (-1.30,-2.65) {$p_D$};
    \node[owner green] (Eo810)  at (-2.35,-2.05) {$(8,10)$};
    \node[owner green] (Eo1011) at (-2.35,-3.25) {$(10,11)$};

    % Replacement-tree edges, port-pair edges, and the two cluster cuts.
    \begin{scope}[on background layer]
        \node[same bag,fit=(a1)(a5)(Ao12)(ApB)(Ao13)(Ao14)(Ao23)(Ao24)(Ao34)]
            (boxA) {};
        \node[same bag,fit=(b1)(b4)(BpA)(BpC)(Bo45)(Bo46)(BpD)(Bo56)]
            (boxB) {};
        \node[same bag,fit=(c0)(Co57)(Co67)(CpB)] (boxC) {};
        \node[same bag,fit=(d1)(d4)(Do58)(DpE)(Do89)(Do811)(DpB)(Do911)]
            (boxD) {};
        \node[same bag,fit=(e0)(EpD)(Eo810)(Eo1011)] (boxE) {};

        \draw[boundary edge] (a1)--(a2);
        \draw[blue tree edge] (a2)--(a3)--(a4)--(a5);
        \draw[orange tree edge] (a1)--(Ao12) (a1)--(ApB);
        \draw[blue tree edge] (a2)--(Ao13) (a3)--(Ao14) (a4)--(Ao23)
            (a5)--(Ao24) (a5)--(Ao34);

        \draw[orange tree edge] (b1)--(b2)--(b3)--(b4);
        \draw[orange tree edge] (b1)--(BpA) (b1)--(BpC)
            (b2)--(Bo45) (b3)--(Bo46) (b4)--(BpD) (b4)--(Bo56);
        \draw[orange tree edge] (c0)--(Co57) (c0)--(Co67) (c0)--(CpB);

        \draw[green tree edge] (d1)--(d2)--(d3);
        \draw[boundary edge] (d3)--(d4);
        \draw[green tree edge] (d1)--(Do58) (d1)--(DpE)
            (d2)--(Do89) (d3)--(Do811);
        \draw[orange tree edge] (d4)--(DpB) (d4)--(Do911);
        \draw[green tree edge] (e0)--(EpD) (e0)--(Eo810) (e0)--(Eo1011);

        \draw[orange tree edge] (ApB)--(BpA);
        \draw[orange tree edge] (BpC)--(CpB);
        \draw[orange tree edge] (BpD)--(DpB);
        \draw[green tree edge]  (DpE)--(EpD);
    \end{scope}

    % Every vertex of R_x carries the corresponding old bag B_x.
    \node[bag label,fill=white,inner sep=0.5pt,anchor=south]
        at ($(boxA.north)+(0,0.05)$) {$R_A:\ B_A=\{1,2,3,4\}$};
    \node[bag label,fill=white,inner sep=0.5pt,anchor=south east]
        at ($(boxB.north east)+(-0.08,-0.4)$) {$R_B:\ B_B=\{4,5,6\}$};
    \node[bag label,fill=white,inner sep=0.5pt,anchor=south west]
        at ($(boxC.north west)+(-0.08,0.05)$) {$R_C:\ B_C=\{5,6,7\}$};
    \node[bag label,fill=white,inner sep=0.5pt,anchor=south west]
        at ($(boxD.north west)+(0.88,-0.4)$) {$R_D:\ B_D=\{5,8,9,11\}$};
    \node[bag label,fill=white,inner sep=0.5pt,anchor=north]
        at ($(boxE.south)+(0,-0.05)$) {$R_E:\ B_E=\{8,10,11\}$};
\end{tikzpicture}

%% file: tree_decomposition_stage_three.tex
\definecolor{tdink}{HTML}{25282C}
\definecolor{tdblue}{HTML}{2878C8}
\definecolor{tdorange}{HTML}{E07A1F}
\definecolor{tdgreen}{HTML}{2D9348}

\begin{tikzpicture}[
    line cap=round,
    line join=round,
    vertex/.style={
        circle,
        draw=tdink,
        fill=white,
        line width=0.55pt,
        minimum size=4.6mm,
        inner sep=0pt,
        font=\scriptsize
    },
    note/.style={font=\scriptsize,text=tdink,align=center}
]
    \coordinate (p1)  at ( 0.00, 3.80);
    \coordinate (p2)  at (-1.15, 2.80);
    \coordinate (p3)  at ( 1.15, 2.80);
    \coordinate (p4)  at ( 0.00, 2.05);
    \coordinate (p5)  at (-1.00, 0.85);
    \coordinate (p6)  at ( 1.00, 0.85);
    \coordinate (p7)  at ( 0.00,-0.05);
    \coordinate (p8)  at (-2.15,-0.15);
    \coordinate (p9)  at (-3.00,-1.10);
    \coordinate (p10) at (-1.30,-1.10);
    \coordinate (p11) at (-2.15,-2.00);

    % T_1: the five K_4 edges separated from the root edge (1,2).
    \foreach \u/\v in {1/3,1/4,2/3,2/4,3/4}{
        \draw[draw=tdblue,line width=1.05pt] (p\u)--(p\v);
    }

    % T_2: the root edge, the five middle edges, and edge (9,11).
    \foreach \u/\v in {1/2,4/5,4/6,5/6,5/7,6/7,9/11}{
        \draw[draw=tdorange,line width=1.05pt] (p\u)--(p\v);
    }

    % T_3: the far cluster cut off between (8,11) and (9,11).
    \foreach \u/\v in {5/8,8/9,8/10,8/11,10/11}{
        \draw[draw=tdgreen,line width=1.05pt] (p\u)--(p\v);
    }

    \foreach \i in {1,...,11}{
        \node[vertex] at (p\i) {\i};
    }

    \draw[draw=tdblue,line width=1.2pt] (-3.00,-2.70)--(-2.45,-2.70);
    \node[font=\scriptsize,anchor=west] at (-2.35,-2.70) {$T_1$ (5 edges)};
    \draw[draw=tdorange,line width=1.2pt] (-0.55,-2.70)--(0.00,-2.70);
    \node[font=\scriptsize,anchor=west] at (0.10,-2.70) {$T_2$ (7 edges)};
    \draw[draw=tdgreen,line width=1.2pt] (1.75,-2.70)--(2.30,-2.70);
    \node[font=\scriptsize,anchor=west] at (2.40,-2.70) {$T_3$ (5 edges)};

    \node[note] at (0,-3.35)
        {$E(F)=E(T_1)\sqcup E(T_2)\sqcup E(T_3)$};
\end{tikzpicture}